\documentclass{amsart}

\usepackage{amsmath, amssymb, amscd}

\usepackage{enumerate}
\usepackage[pdftex]{hyperref} 
\hypersetup{pdfstartview=FitH,
	pdfpagemode=UseNone}

\newcommand{\field}[1]{\mathbb{#1}}
\newcommand{\N}{\field{N}}
\newcommand{\Z}{\field{Z}}
\newcommand{\Q}{\field{Q}}

\newcommand{\VV}{\field{V}}
\newcommand{\F}{\field{F}}

\newcommand{\term}[1]{\boldsymbol{#1}}

\newcommand{\Om}{\Omega}
\newcommand{\pder}[2]{\partial_{#2}#1}
\newcommand{\cJ}{\mathcal{J}}
\newcommand{\cH}{\mathcal{H}}

\DeclareMathOperator{\ch}{char}
\DeclareMathOperator{\rk}{rk}
\DeclareMathOperator{\poly}{poly}
\DeclareMathOperator{\trdeg}{trdeg}
\DeclareMathOperator{\size}{size}

\DeclareMathOperator{\sparse}{sp} 
\DeclareMathOperator{\WJ}{WJ} 
\DeclareMathOperator{\WJP}{WJP} 
\DeclareMathOperator{\W}{W} 
\DeclareMathOperator{\J}{J} 
\DeclareMathOperator{\E}{E}
\DeclareMathOperator{\Fil}{Fil}
\DeclareMathOperator{\V}{V} 
\DeclareMathOperator{\Restr}{R} 
\DeclareMathOperator{\Frob}{F} 

\DeclareMathOperator{\sep}{sep} 
\DeclareMathOperator{\insep}{insep} 

\DeclareMathOperator{\NP}{NP}
\DeclareMathOperator{\FP}{FP}
\DeclareMathOperator{\PH}{PH}
\DeclareMathOperator{\sP}{\# P}
\DeclareMathOperator{\coeff}{coeff}
\DeclareMathOperator{\ZPP}{ZPP}
\DeclareMathOperator{\CeqP}{C_=P}

\providecommand{\abs}[1]{\left\lvert#1\right\rvert}
\providecommand{\ol}[1]{\overline{#1}}

\makeatletter
\newtheorem*{rep@theorem}{\rep@title}
\newcommand{\newreptheorem}[2]{%
\newenvironment{rep#1}[1]{%
 \def\rep@title{#2 \ref{##1}}%
 \begin{rep@theorem}}%
 {\end{rep@theorem}}}
\makeatother

\theoremstyle{plain}
\newtheorem{thm}{Theorem}

\newtheorem{cor}[thm]{Corollary}
\newtheorem{lem}[thm]{Lemma}

\newreptheorem{thm}{Theorem} 
\newreptheorem{lem}{Lemma}

\theoremstyle{definition}
\newtheorem{defn}[thm]{Definition}

\theoremstyle{remark}
\newtheorem{rem}[thm]{Remark}

\newcommand{\comment}[1]{}

\comment{
Algebraic Independence in Positive Characteristic -- A p-Adic Calculus

Johannes Mittmann, Nitin Saxena, Peter Scheiblechner

Hausdorff Center for Mathematics, Endenicher Allee 62, 53115 Bonn, Germany

johannes.mittmann@hcm.uni-bonn.de
nitin.saxena@hcm.uni-bonn.de
peter.scheiblechner@hcm.uni-bonn.de

A set of multivariate polynomials, over a field of zero or large characteristic, can be tested for algebraic independence by the well-known Jacobian criterion. For fields of other characteristic p>0, there is no analogous characterization known. In this paper we give the first such criterion. Essentially, it boils down to a non-degeneracy condition on a lift of the Jacobian polynomial over (an unramified extension of) the ring of p-adic integers.

Our proof builds on the de Rham-Witt complex, which was invented by Illusie (1979) for crystalline cohomology computations, and we deduce a natural generalization of the Jacobian. This new avatar we call the Witt-Jacobian. In essence, we show how to faithfully differentiate polynomials over F_p (i.e. somehow avoid dx^p/dx=0) and thus capture algebraic independence.

We apply the new criterion to put the problem of testing algebraic independence in the complexity class NP^#P (previously best was PSPACE). Also, we give a modest application to the problem of identity testing in algebraic complexity theory.

Keywords: algebraic independence, crystalline cohomology, de Rham, finite fields, identity testing, Jacobian, Kaehler, 
p-adic, Teichmueller, Witt 
}

\begin{document}

\title[Algebraic Independence in Positive Characteristic]{Algebraic Independence in Positive Characteristic -- A $p$-Adic Calculus}

\author[J. Mittmann]{Johannes Mittmann}
\address{Hausdorff Center for Mathematics\\ Endenicher Allee 62\\ D-53115 Bonn\\ Germany}
\email{johannes.mittmann@hcm.uni-bonn.de}
\author[N. Saxena]{Nitin Saxena}
\address{Hausdorff Center for Mathematics\\ Endenicher Allee 62\\ D-53115 Bonn\\ Germany}
\email{nitin.saxena@hcm.uni-bonn.de}
\author[P. Scheiblechner]{Peter Scheiblechner}
\address{Hausdorff Center for Mathematics\\ Endenicher Allee 62\\ D-53115 Bonn\\ Germany}
\email{peter.scheiblechner@hcm.uni-bonn.de}

\subjclass[2000]{12Y05, 13N05, 14F30, 03D15, 68Q17, 68W30}


\begin{abstract}
A set of multivariate polynomials, over a field of zero or large characteristic, 
can be tested for algebraic independence by the well-known Jacobian criterion.
For fields of other characteristic $p>0$, there is no analogous characterization
known. In this paper we give the first such criterion. 
Essentially, it boils down to a non-degeneracy condition on a lift of the Jacobian
polynomial over (an unramified extension of) the ring of $p$-adic integers.

Our proof builds on the de Rham-Witt complex, which was invented by Illusie (1979) for crystalline cohomology 
computations, and we deduce a natural generalization of the Jacobian. This new avatar we call the Witt-Jacobian. 
In essence, we show how to faithfully differentiate polynomials over $\F_p$ (i.e.~somehow avoid 
$\partial x^p/\partial x=0$) and thus capture algebraic independence.

We apply the new criterion to put the problem of testing algebraic independence in the complexity class
$\NP^{\sP}$ (previously best
was PSPACE). Also, we give a modest application to the problem of identity testing in 
algebraic complexity theory.
\end{abstract}

\maketitle

\section{Introduction}


Polynomials $\term{f}=\{f_1,\ldots,f_m\}\subset k[x_1,\ldots,x_n]$ are called {\em algebraically independent} 
over a field $k$, if there is no
nonzero $F\in k[y_1,\ldots,y_m]$ such that $F(\term{f})=0$. Otherwise, they are algebraically 
{\em dependent} and $F$ is an {\em annihilating} polynomial. Algebraic independence is a fundamental concept
in commutative algebra. It is to polynomial rings what {\em linear} independence is to vector spaces. Our paper is
motivated by the computational aspects of this concept. 

A priori it is not clear whether, for given {\em explicit} polynomials, one can test algebraic independence {\em effectively}.
But this is possible -- by Gr\"{o}bner bases, or, by invoking Perron's degree bound on the annihilating polynomial 
\cite{bib:Per27} and 
finding a possible $F$. Now, can this be done {\em efficiently} (i.e.~in polynomial time)? It can be seen that both the 
above algorithmic techniques take {\em exponential} time, though the latter gives a PSPACE algorithm.
Hence, a different approach is needed for a faster algorithm, and here enters Jacobi \cite{bib:Jac41}.
The \emph{Jacobian} of the polynomials~$\term{f}$ is the matrix 
$\cJ_{\term{x}}(\term{f}) := (\pder{f_i}{x_j})_{m \times n}$, 
where $\pder{f_i}{x_j} = \partial f_i/ \partial x_j$
is the partial derivative of $f_i$ with respect to~$x_j$.
It is easy to see that for $m>n$ the $\term{f}$ are dependent, so
 we always assume $m\le n$. 
Now, the Jacobian criterion says: The matrix is of {\em full} rank over the function field iff $\term{f}$ 
are algebraically independent (assuming zero or large characteristic, see \cite{bib:BMS11}). Since the rank of this matrix
can be computed by its {\em randomized} evaluations \cite{bib:Sch80,bib:DGW09},
we immediately get
a randomized polynomial time algorithm. The only question left is -- What about the `other' prime characteristic fields? In those 
situations nothing like the Jacobian criterion was known. Here we propose the first such criterion that works for {\em all} 
prime characteristic. In this sense we make partial progress on the algebraic independence question for `small fields' 
\cite{bib:DGW09}, but we do not yet know how to check this criterion in polynomial time. We do, however, improve 
the complexity of algebraic independence testing from PSPACE to $\NP^{\sP}$.

The $m\times m$ minors of the Jacobian we call {\em Jacobian polynomials}. So the criterion can be rephrased: 
One of the Jacobian polynomials is nonzero iff $\term{f}$ are algebraically independent (assuming zero or large characteristic).
We believe that finding a Jacobian-type polynomial that captures algebraic independence in any characteristic $p>0$ is a 
natural question in algebra and geometry. Furthermore, Jacobian has recently found several applications in complexity theory --
circuit lower bound proofs \cite{bib:Kal85,bib:ASSS11}, 
pseudo-random objects construction \cite{bib:DGW09,bib:Dvi09}, 
identity testing \cite{bib:BMS11,bib:ASSS11}, 
cryptography \cite{bib:DGRV11},
program invariants \cite{bib:Lvo84,bib:Kay09},
and control theory \cite{bib:For91,bib:DF92}.
Thus, a suitably effective Jacobian-type criterion is desirable to make these applications work 
for any field. The criterion presented here is not yet effective enough, nevertheless, it is able to solve a modest case of identity 
testing that was left open in \cite{bib:BMS11}.

In this paper, the new avatar of the Jacobian polynomial is called a {\em Witt-Jacobian}. For polynomials 
$\term{f}=\{f_1,\ldots,f_n\}\subset \F_p[x_1,\ldots,x_n]$ we simply lift the coefficients of $\term{f}$ to the {\em $p$-adic 
integers} $\hat{\Z}_p$, to get the {\em lifted} polynomials $\term{\hat{f}}\subset \hat{\Z}_p[x_1,\ldots,x_n]$. Now, for 
$\ell\ge1$, the $\ell$-th Witt-Jacobian polynomial is $\WJP_\ell:=
(\hat{f}_1 \cdots \hat{f}_n)^{p^{\ell-1}-1} (x_1 \cdots x_n)\cdot \det \cJ_{\term{x}}(\term{\hat{f}})$. Hence, the 
Witt-Jacobian is just a suitably `scaled-up' version of the Jacobian polynomial
over the integral domain $\hat{\Z}_p$.
E.g., if $n=1, f_1=x_1^p$, then $\WJP_\ell=(x_1^p)^{p^{\ell-1}-1} (x_1)\cdot (px_1^{p-1})=
px_1^{p^\ell}$ which is a nonzero $p$-adic polynomial. Thus, Witt-Jacobian avoids mapping $x_1^p$ to zero. However, the flip side
is that a lift of the polynomial $f_1=0$, say, $\hat{f}_1=px_1^p$ gets mapped to $\WJP_\ell=(px_1^p)^{p^{\ell-1}-1} (x_1)\cdot 
(p^2x_1^{p-1})=p^{(p^{\ell-1}+1)}x_1^{p^\ell}$ which is also a nonzero $p$-adic polynomial. This shows that a Witt-Jacobian criterion 
cannot simply hinge on the 
zeroness of $\WJP_\ell$ but has to be much more subtle. Indeed, we show that the
terms in $\WJP_\ell$ carry {\em precise} information about the algebraic independence of $\term{f}$. In particular, in the two examples above, our
Witt-Jacobian criterion checks whether the coefficient of the monomial $x_1^{p^\ell}$ in $\WJP_\ell$ is divisible by~$p^{\ell}$ (which is true in the
second example, but not in the first for $\ell\ge2$). It is the magic of abstract {\em differentials} that such a weird explicit property could be 
formulated at all, let alone proved.

\subsection{Main results}

We need some notation to properly state the results. Denote $\Z^{\ge0}$ by $\N$.
Let $[n] := \{1,\ldots,n\}$, and the set of all $r$-subsets of $[n]$ be denoted by 
$\tbinom{[n]}{r}$. 
If $I \in \tbinom{[n]}{r}$, the bold-notation $\term{a}_I$ will be a short-hand
for $a_i$, $i\in I$, and we write $\term{a}_i$ for $\term{a}_{[i]}$.
Let $k/\F_p$ be an algebraic field extension, and $\W(k)$ be the ring of
Witt vectors of~$k$ ($\W(k)$ is just a `nice' extension of $\hat{\Z}_p$).
Define the $\F_p$-algebra $A:=k[\term{x}_n]$ and the $p$-adic-algebra $B:=\W(k)[\term{x}_n]$. 
For a nonzero $\alpha\in\N^n$ denote by $v_p(\alpha)$ the maximal $v\in\N$ with $p^v|\alpha_i, i\in[n]$.
Set $v_p(\term{0}):=\infty$. 

{\bf [Degeneracy]}
    We call $f \in B$ {\em degenerate} if the coefficient of $\term{x}^\alpha$ in $f$ is divisible by $p^{v_p(\alpha)+1}$
	for all $\alpha \in \N^n$. For $\ell\in\N$, $f$ is called \emph{$(\ell+1)$-degenerate} if
	the coefficient of $\term{x}^\alpha$ in $f$ is divisible by $p^{\min\{v_p(\alpha), \ell\}+1}$
	for all $\alpha \in \N^n$.

We could show for polynomials $\term{f}_r\in A$ and their $p$-adic lifts $\term{g}_r\in B$, that
if~$\term{f}_r$  are algebraically dependent,
then for any $r$ variables $\term{x}_I$, $I \in \tbinom{[n]}{r}$,  the $p$-adic polynomial 
$\bigl( {\textstyle\prod_{j \in I} x_j} \bigr) \cdot \det\cJ_{\term{x}_I}(\term{g}_r)$ is degenerate. This
would have been a rather elegant criterion,
if the converse did not fail
(see Theorem \ref{thm:Necessity}). Thus, we need to look at a more complicated
polynomial (and use the graded version of degeneracy).

{\bf [Witt-Jacobian polynomial]}
Let $\ell\in\N$, $\term{g}_r \in B$, and $I \in \tbinom{[n]}{r}$. 
	We call
	\[
		\WJP_{\ell+1, I}(\term{g}_r) := (g_1 \dotsb g_r)^{p^{\ell}-1} 
		\bigl( {\textstyle\prod_{j \in I} x_j} \bigr)
		\cdot \det \cJ_{\term{x}_I}(\term{g}_r) \in B
	\]
	the \emph{$(\ell+1)$-th Witt-Jacobian polynomial of $\term{g}_r$ w.r.t.~$I$}.

\begin{thm}[Witt-Jacobian criterion] \label{thm:ExplicitWJCriterion}
	Let $\term{f}_r \in A$ be of degree at most $\delta \ge 1$, and fix
	$\ell \ge \lfloor r \log_p \delta \rfloor$. 
	Choose $\term{g}_r \in B$ such that $\forall i\in[r], f_i \equiv g_i \pmod{pB}$. 
	
	Then, $\term{f}_r$ are algebraically independent over $k$ if and only if there exists 
	$I \in \tbinom{[n]}{r}$ such that $\WJP_{\ell+1, I}(\term{g}_r)$ is not	$(\ell+1)$-degenerate.
\end{thm}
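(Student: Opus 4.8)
The plan is to lift the question to the $p$-adic realm through the relative de Rham--Witt complex $\W\Om^{\bullet}$ of $A=k[\term{x}_n]$ -- in fact of its localization $\tilde A:=k[x_1^{\pm1},\dots,x_n^{\pm1}]$, which makes the logarithmic forms $d\log[x_j]:=[x_j]^{-1}d[x_j]$ available -- and to extract the criterion there. There is a canonical map $\Om^{\bullet}_{B/\W(k)}\to\W\Om^{\bullet}$ sending $x_j\mapsto[x_j]$ and $dx_j\mapsto d[x_j]$; write $\omega_i$ for the image of $dg_i$. Two features of $\W\Om^{\bullet}$ are essential. First, it differentiates $p$-th powers faithfully: on Teichm\"uller representatives $d[a^p]=p\,[a]^{p-1}d[a]$ is nonzero, and iterating the relation $\Frob\,d[a]=[a]^{p-1}d[a]$ gives $\Frob^{\ell}d[a]=[a]^{p^{\ell}-1}d[a]$ -- this is the source of the exponent $p^{\ell}-1$. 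Second, the monomial prefactor $\prod_{j\in I}x_j$ is forced by the passage to $d\log$: since $d[x_j]=[x_j]\,d\log[x_j]$, for $I=\{i_1<\cdots<i_r\}$ one has $d[x_{i_1}]\wedge\cdots\wedge d[x_{i_r}]=\bigl[\textstyle\prod_{j\in I}x_j\bigr]\,d\log[x_{i_1}]\wedge\cdots\wedge d\log[x_{i_r}]$, so the coordinate of $\omega_1\wedge\cdots\wedge\omega_r$ along $d\log[x_{i_1}]\wedge\cdots\wedge d\log[x_{i_r}]$ is the (Teichm\"uller-twisted) polynomial $\bigl[\prod_{j\in I}x_j\bigr]\cdot\det\cJ_{\term{x}_I}(\term{g}_r)$. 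Combining the two, the coordinate of $\Frob^{\ell}(\omega_1\wedge\cdots\wedge\omega_r)$ along $d\log[x_{i_1}]\wedge\cdots\wedge d\log[x_{i_r}]$, computed in a suitable finite truncation, is exactly $\WJP_{\ell+1,I}(\term{g}_r)$ up to Teichm\"uller bookkeeping.

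The first thing I would prove is the abstract criterion: $\term{f}_r$ are algebraically independent over $k$ if and only if $d[f_1]\wedge\cdots\wedge d[f_r]\neq 0$ in $\W\Om^{r}_{\tilde A}$ (equivalently, in some finite $\W_m\Om^{r}_{\tilde A}$). Since $k$ is algebraic over $\F_p$ it is perfect, so $\W(k)$ is an unramified complete discrete valuation ring with residue field $k$ and $p$-torsion-free de Rham--Witt complex. For ``only if'' I would take a minimal annihilator $F\in k[\term{y}_r]$, lift it to $\tilde F\in\W(k)[\term{y}_r]$ so that $\tilde F(\term{g}_r)\in pB$, apply $d$ and the chain rule, and wedge: this forces the minors $\det\cJ_{\term{x}_I}(\term{g}_r)$ into $p$ times a lattice, which -- once the Teichm\"uller rescaling has divided away the spurious $p$'s carried by $p$-th-power structure -- is precisely the $(\ell+1)$-degeneracy of every $\WJP_{\ell+1,I}(\term{g}_r)$. (The rescaling is indispensable: the honest lift $g_1=x_1^p$ of the independent $f_1=x_1^p$ already has $dg_1\in pB\,dx_1$, so the naive Jacobian cannot work.) The ``if'' direction is where the full strength of the de Rham--Witt structure is used: one must see that $d[f_1]\wedge\cdots\wedge d[f_r]$ stays nonzero even when the K\"ahler wedge $df_1\wedge\cdots\wedge df_r$ vanishes, which comes from the $d\V$-corrections in Witt addition together with the $\Frob$--$\V$ relations, perfectness of $k$ ensuring that these $p$-divisible contributions do not collapse.

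Then comes truncation and the degree bound. Since $\W\Om^{r}_{\tilde A}=\varprojlim_m\W_m\Om^{r}_{\tilde A}$ along the maps $\Restr$, nonvanishing of the wedge is visible at some finite Witt length, and I must show level $\ell+1$ suffices once $\ell\ge\lfloor r\log_p\delta\rfloor$. This is where Perron's degree bound re-enters: a dependence among the degree-$\le\delta$ polynomials $\term{f}_r$ has an annihilator of degree $\le\delta^{r}$, hence its effect on $\W\Om^{\bullet}$ is concentrated at Witt length $O(\log_p(\delta^{r}))=O(r\log_p\delta)$ and cannot hide above level $\ell+1$; conversely a nonzero wedge is already nonzero there. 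Thus some $\WJP_{\ell+1,I}(\term{g}_r)$ fails $(\ell+1)$-degeneracy iff the wedge is genuinely nonzero, which by the abstract criterion is algebraic independence. The bound is essentially sharp: for $r=1$, $\delta=p$ one needs $\ell\ge1$, and indeed $\WJP_{1}$ of $f_1=x_1^p$ equals $p\,x_1^{p}$, which is $1$-degenerate although $x_1^p$ is independent.

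The technical heart -- and the step I expect to be the main obstacle -- is making the dictionary of the first paragraph precise: identifying ``$\WJP_{\ell+1,I}(\term{g}_r)$ is $(\ell+1)$-degenerate for every $I$'' with ``$d[f_1]\wedge\cdots\wedge d[f_r]=0$ in $\W_{\ell+1}\Om^{r}_{\tilde A}$''. Here I would invoke Illusie's explicit computation of $\W_m\Om^{\bullet}$ of a (Laurent) polynomial ring: it has an explicit $\Z$-basis of forms attached to weights $\term{k}\in\N^n$ (resp.\ $\Z^n$), split into an ``integral'' $\V^{s}$-type part and a ``fractional/logarithmic'' $d\V^{s}$-type part, whose restriction maps are governed by the $p$-adic valuations $v_p(\term{k})$. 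Under this correspondence the coefficient of a monomial $\term{x}^{\alpha}$ in $\WJP_{\ell+1,I}$ survives to level $\ell+1$ exactly when it is not divisible by $p^{\min\{v_p(\alpha),\ell\}+1}$ -- the negation of the degeneracy clause for $\alpha$. Completing this requires (i) showing that replacing an arbitrary lift $g_i$ by the Teichm\"uller representative $[f_i]$ changes each $\WJP_{\ell+1,I}(\term{g}_r)$ only by elements of the ``degeneracy ideal'' (which gives the lift-independence asserted in the statement), and (ii) checking that the factor $(g_1\cdots g_r)^{p^{\ell}-1}$ from $\Frob^{\ell}d$, the factor $\prod_{j\in I}x_j$ from $d\log$, and the minor $\det\cJ_{\term{x}_I}(\term{g}_r)$ from the wedge reassemble in Illusie's basis into precisely the stated polynomial and its divisibility condition. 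Reconciling this hands-on valuation bookkeeping with the clean but opaque module structure of $\W_m\Om^{\bullet}$ is the delicate point; the rest is assembly around the two standard inputs, the de Rham--Witt relations and Perron's bound.
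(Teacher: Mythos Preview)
Your architecture matches the paper's: an abstract criterion ($\term{f}_r$ independent iff $\WJ_{\ell+1,A}(\term{f}_r):=d[f_1]\wedge\cdots\wedge d[f_r]\neq 0$ in $\W_{\ell+1}\Om_A^r$), followed by Illusie's explicit description of $\W_{\ell+1}\Om_{k[\term{x}]}^\cdot$ to translate nonvanishing into non-degeneracy of $\WJP_{\ell+1,I}$. The explicit half of your sketch (the $d\log$ basis, the factor $(g_1\cdots g_r)^{p^\ell-1}$ from $\Frob^\ell$, and the graded filtration giving the $p^{\min\{v_p(\alpha),\ell\}+1}$ threshold) is essentially what the paper does in Lemmas~\ref{lem:TeichmuellerLiftAndFiltration}--\ref{lem:WJDegeneracyCharacterization}, and you have identified that part correctly.

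The genuine gap is the hard implication of the abstract criterion, \emph{independent $\Rightarrow$ $\WJ_{\ell+1,A}(\term{f}_r)\neq 0$}. Your sketch (``$d\V$-corrections and $\Frob$--$\V$ relations prevent collapse'') is not an argument, and your attribution of the level bound to the annihilator degree is the wrong mechanism: when $\term{f}_r$ are independent there is no annihilator, yet this is precisely the direction where a finite level must be pinned down. The paper's route is quite different. It controls the \emph{inseparable degree} of $k(\term{x})/k(\term{f}_r)$: extend $\term{f}_r$ by suitable $x_j$'s to a full transcendence basis, use Theorem~\ref{thm:PerronsTheorem} to bound $[k(\term{x}):k(\term{f}_n)]\le\delta^r$, and hence $[k(\term{x}):k(\term{f}_r)]_{\insep}\le\delta^r$. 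Then (Lemma~\ref{lem:AbstractWJCriterion2}) one builds a tower $K\subseteq K_{\sep}=K_0\subseteq K_1\subseteq\cdots\subseteq K_n=k(\term{x})$ of purely inseparable simple extensions $K_i=K_{i-1}[x_i]$ with $x_i^{q_i}\in K_{i-1}$, and argues by contradiction: assuming $\WJ_{\ell,L}(\term{f})=0$, one shows inductively that every form $d[x_1^{q_1}]\wedge\cdots\wedge d[x_j^{q_j}]\wedge d[a_{j+1}]\wedge\cdots\wedge d[a_n]$ with $a_i\in K_j$ dies in $\W_\ell\Om_L^n$, using $\Frob d\V=d$ and crucially the Frobenius-kernel bound $\ker(\Frob^i)\subseteq\Fil^\ell$ (Lemma~\ref{lem:kerFrob}); at $j=n-1$ this contradicts the nonvanishing of $p^e\cdot\WJ_{\ell,L}(\term{x})$ (from injectivity of $m_p$). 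None of these ingredients---inseparable degree, the separable/purely-inseparable tower, the $\ker\Frob\subseteq\Fil^\ell$ control---appear in your plan, and I do not see how to carry the hard direction without them.

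Two smaller points. First, your ``only if''/``if'' labels are swapped: the annihilator argument you describe proves \emph{dependent $\Rightarrow$ $\WJ=0$}, which is the easy direction; the paper does this cleanly by observing $\W_\ell\Om^r_{k[\term{f}_r]}=0$ once $r>\trdeg$ (Corollary~\ref{cor:DiffModuleZero}), rather than by lifting an annihilator. Second, there is no need to pass to the Laurent ring $\tilde A$: Illusie's model $\E\subset\Om_C^\cdot$ already contains the forms $d\log x_j$ as formal symbols (the coordinates $c_I$ in \eqref{eq:generalForm} are required to be divisible by $\prod_{j\in I}x_j^{p^{-s}}$), so the $d\log$-calculus is available over $A$ itself.
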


If $p>\delta^r$, this theorem subsumes the Jacobian criterion (choose $\ell=0$).
In computational situations we are given $\term{f}_r\in A$, say, explicitly. Of course, we can efficiently
lift them to $\term{g}_r\in B$. But $\WJP_{\ell+1, I}(\term{g}_r)$ may have 
{\em exponential} sparsity (number of nonzero monomials), even for $\ell=1$.
This makes it difficult to test the Witt-Jacobian polynomial efficiently for $2$-degeneracy. 
While we improve the basic upper bound of PSPACE for this problem,
there is some evidence that the {\em general} $2$-degeneracy 
problem is outside the polynomial hierarchy \cite{bib:M12} (Theorem \ref{thm:CPHard}).

\begin{thm}[Upper bound] \label{thm:UpperBound}
Given arithmetic circuits $\term{C}_r$ computing in $A$, the problem of testing algebraic independence of 
polynomials $\term{C}_r$ is in the class $\NP^{\sP}$.      
\end{thm}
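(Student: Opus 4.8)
The plan is to execute the Witt--Jacobian criterion of Theorem~\ref{thm:ExplicitWJCriterion} inside a nondeterministic machine: after guessing the combinatorial data that the criterion quantifies over, the machine reduces to testing whether one prescribed coefficient of one prescribed Witt--Jacobian polynomial is nonzero modulo a prime power, and this single arithmetic question is handed to the $\sP$ oracle.

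In detail, let $s$ be the total size of the circuits $\term{C}_r$; from the circuits one reads off in polynomial time a degree bound $\delta\le 2^{s}$ on $\term{f}_r$, and since only finitely many field constants occur we may assume $k=\F_{p^{b}}$ is finite with $\log p,\,b\le s$. Put $\ell:=\lfloor r\log_p\delta\rfloor$ and $N:=\ell+1$, both of polynomial bit-length. For the lifts take $g_i$ to be the polynomial obtained from $C_i$ by replacing each constant $c\in k$ with its Teichm\"uller representative $[c]\in\W(k)$; reduction modulo~$p$ returns $f_i$, so this is a legal choice in Theorem~\ref{thm:ExplicitWJCriterion}. Writing $\Delta:=rp^{\ell}\delta+r\delta$ for the obvious degree bound on any $\WJP_{N,I}(\term{g}_r)$ (note $\log\Delta=\poly$), the criterion says that $\term{f}_r$ are algebraically independent over $k$ iff there exist $I\in\tbinom{[n]}{r}$ and $\alpha\in\N^{n}$ with $\alpha_j\le\Delta$ such that, setting $e:=\min\{v_p(\alpha),\ell\}+1\le N$, the element $c_{I,\alpha}:=\coeff_{\term{x}^{\alpha}}\!\bigl(\WJP_{N,I}(\term{g}_r)\bigr)\in\W(k)$ is nonzero in $\W(k)/p^{e}$. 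The machine nondeterministically guesses $I$ (polynomially many bits) and $\alpha$ ($n\log\Delta=\poly$ bits), computes $e$, and is then left with a single task for the $\sP$ oracle: decide whether $c_{I,\alpha}\not\equiv 0\pmod{p^{e}}$.

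To isolate the arithmetic content I would first package $\WJP_{N,I}(\term{g}_r)\bmod p^{N}$ as a small circuit $D$ over the finite ring $R:=\W(k)/p^{N}$: the Jacobian entries $\partial g_i/\partial x_j$ come from the lifted circuits by Baur--Strassen, the $r\times r$ determinant by a division-free routine (Berkowitz), the power $(g_1\cdots g_r)^{p^{\ell}-1}$ by repeated squaring, and one multiplies in $\prod_{j\in I}x_j$, so $D$ has size $\poly(s,n,r,\log p)$. Fixing $R\cong(\Z/p^{N})[\xi]/(h(\xi))$ with $h$ a monic lift of the minimal polynomial of a generator of $k$, one simulates $D$ over $\Z/p^{N}$ with polynomial overhead, and a Kronecker substitution $x_j\mapsto y^{e_j}$ with $e_j\le(\Delta+1)^{n}$ (computed by repeated squaring) collapses everything to one $\poly$-size univariate circuit $\tilde D$ over $\Z/p^{N}$ computing $\tilde P(y)$ of degree $\le\tilde\Delta$ (of polynomial bit-length) with $c_{I,\alpha}\equiv\coeff_{y^{a}}(\tilde P)\pmod{p^{N}}$ for an explicit $a$. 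Thus the whole problem reduces to: \emph{given a polynomial-size arithmetic circuit over $\Z/p^{N}$ computing a univariate $\tilde P$ of exponential but polynomial-bit-length degree, together with $a$ and $e\le N$, decide whether $p^{e}\mid\coeff_{y^{a}}(\tilde P)$ using one call to a $\sP$ oracle.}

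This last step is the crux, and it is exactly where the $p$-adic structure has to be used rather than a black box. The textbook identity ``coefficient $=$ evaluation $+$ inverse DFT'' is unavailable here: it needs a primitive $(\tilde\Delta+1)$-st root of unity and a division by $\tilde\Delta+1$, both impossible modulo $p$ when $p\mid\tilde\Delta+1$; since $\tilde P$ genuinely has exponential degree one cannot work in $\Z[y]/(y^{\tilde\Delta+1})$ either, nor split $\tilde P$ into residue classes of exponents modulo $p$ by the usual root-of-unity trick. The approach I would take is to fall back on the special shape of the Witt--Jacobian polynomial: it is $(g_1\cdots g_r)^{p^{\ell}-1}$ times a factor of moderate degree, the huge power is a $p^{\ell}$-th power (divided by $g_1\cdots g_r$), and modulo $p^{e}$ with $e\le\ell+1$ the exponents produced by a $p^{\ell}$-th power are governed by iterated Frobenius/Teichm\"uller congruences in $\W(k)$ --- precisely the kind of bookkeeping that the $p$-adic calculus of this paper makes explicit. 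Grouping the monomials of $\term{g}_r$ by the $p$-adic valuation of their exponent vectors, one would rewrite $c_{I,\alpha}\bmod p^{e}$ as a sum over an index set of size $2^{\poly}$, each member of which has a polynomial-length description and contributes a term computable in polynomial time modulo $p^{e}$; a single $\sP$-oracle call then returns this residue (coordinate by coordinate over $R$, digit by digit in $\Z/p^{e}$), and the machine accepts iff it is nonzero. Correctness is immediate from Theorem~\ref{thm:ExplicitWJCriterion}, and since every guess and every oracle input has polynomial bit-length the procedure witnesses membership in $\NP^{\sP}$. The main obstacle is therefore the coefficient-extraction lemma itself --- turning ``one coefficient of an exponential-degree, exponentially sparse polynomial is nonzero modulo a prime power'' into something a $\sP$ oracle can decide, with neither roots of unity nor division by $p$ at one's disposal.
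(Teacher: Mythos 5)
Your setup --- guessing $I$ and $\alpha$, lifting the circuits via Teichm\"uller representatives, building a polynomial-size circuit for $\WJP_{\ell+1,I}(\term{g}_r)$ over $\W(k)/p^{\ell+1}$ by Baur--Strassen and Berkowitz, and Kronecker-substituting to a univariate circuit --- is correct and essentially matches the paper's steps (1)--(4). But the step you flag as ``the crux'' and leave open, namely deciding $p^e \mid \coeff_{y^a}(\tilde P)$ with a single $\sP$ call, is a genuine gap, and the workaround you sketch (grouping monomials by $p$-adic valuation and ``iterated Frobenius/Teichm\"uller congruences'') is not carried out and is not what closes it.

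The missing idea is that the DFT/interpolation approach you dismissed is in fact available, because you never need a primitive root of unity of order $\tilde\Delta+1$ or of $p$-power order; you only need a root of unity of \emph{any} order exceeding the degree. Working in the Galois ring $G_{\ell+1,t}\cong\W_{\ell+1}(\F_{p^t})$, there is a primitive $(p^t-1)$-th root of unity $\xi$ (the Hensel lift of a generator of $\F_{p^t}^\times$), and $p^t-1$ is a \emph{unit} in $G_{\ell+1,t}$ since it is coprime to $p$ --- so both the root of unity and the division that the textbook identity requires are present. Choosing $t$ so that $p^t-1\ge D^n$ (which forces only $t=\poly(n,r,s)$, as $\log(D^n)=O(nrs^2)$) makes the interpolation formula
\[
\coeff(z^d,f)=(p^t-1)^{-1}\sum_{j=0}^{p^t-2}\xi^{-jd}f(\xi^j)
\]
apply to the Kronecker-collapsed circuit (this is Lemma~\ref{lem:Interpolation}); the exponentially long sum is then evaluated with one $\sP$ query exactly as you envisioned (Lemma~\ref{lem:SharpPOracle}). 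There is one more detail you would also need: a polynomial-size \emph{description} of $G_{\ell+1,t}$ and of $\xi$. The paper obtains this by having the $\NP$ machine additionally guess a monic degree-$t$ polynomial $h$ over $\Z/(p^{\ell+1})$, verify that $h\mid x^{p^t-1}-1$, that $h\bmod p$ is irreducible, and that $x+(h)$ has multiplicative order $p^t-1$ (certified via a guessed prime factorization of $p^t-1$); all of these checks are polynomial time. With $G_{\ell+1,t}$ and $\xi$ in hand, your plan goes through unchanged, so the gap is precisely the realization that one should move to a large-enough unramified extension where a $(p^t-1)$-th root of unity is available rather than fight the lack of $p$-power roots of unity.
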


We are in a better shape when $\WJP_{\ell+1, I}(\term{g}_r)$ is relatively sparse,
which happens, for instance, when $\term{f}_r$ have `sub-logarithmic' sparsity.
This case can be applied to the question
of {\em blackbox identity testing}\,: We are given an arithmetic circuit $C\in \F_p[\term{x}_n]$ via a blackbox,
and we need to decide whether $C=0$. Blackbox access means that we can only
evaluate $C$ over field extensions of $\F_p$. Hence, blackbox identity testing
boils down to efficiently constructing a {\em hitting-set} $\cH\subset\ol{\F}_p^n$
such that any nonzero $C$ (in our circuit family) has an $\term{a}\in\cH$ with
$C(\term{a})\ne0$. Designing efficient hitting-sets is an outstanding open
problem in complexity theory, see \cite{bib:SS95,bib:Sax09,bib:SY10,bib:ASSS11}
and the references therein. We apply the Witt-Jacobian criterion to 
the following case of identity testing.

\begin{thm}[Hitting-set] \label{thm:HittingSet}
Let $\term{f}_m \in A$ be $s$-sparse polynomials of degree $\le\delta$, transcendence degree $\le r$, 
and assume $s,\delta,r\ge 1$. Let $C \in k[\term{y}_m]$ such that the degree of $C(\term{f}_m)$ is bounded by 
$d$. We can construct a (hitting-)set $\mathcal{H}\subset\ol{\F}_p^n$ in $\poly\bigl( (nd)^r, 
(\delta rs)^{r^2 s}\bigr)$-time such that: If $C(\term{f}_m)\ne0$ then 
$\exists\boldsymbol{a} \in \mathcal{H}, \bigl(C(\term{f}_m)\bigr)(\boldsymbol{a}) \neq 0$. 
\end{thm}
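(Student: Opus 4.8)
The plan is to reduce the identity testing of $C(\term{f}_m)$ to a small-witness search governed by the Witt-Jacobian criterion (Theorem \ref{thm:ExplicitWJCriterion}), and then to derandomize that search by brute force over a structured set of evaluation points. First I would dispose of the easy degeneracy in the parameter $m$: since $\trdeg_k \term{f}_m \le r$, there is a subset $J \in \binom{[m]}{r'}$ with $r' \le r$ such that $\term{f}_J$ is a transcendence basis of $k(\term{f}_m)$, and then $C(\term{f}_m)$, viewed as an element of the field $k(\term{f}_J)[\term{f}_{[m]\setminus J}]$, is nonzero iff a suitable annihilator-type polynomial in the $\term{f}_J$ alone is nonzero. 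Concretely, $C(\term{f}_m) \ne 0$ forces the existence of a nonzero polynomial $\widetilde{C}$ in $r'$ variables, of controlled degree, with $\widetilde{C}(\term{f}_J) = C(\term{f}_m)$ up to clearing denominators; the point is that we may assume $m = r' \le r$ after this reduction, at the cost of enlarging the target degree from $d$ to $\poly(d, (\delta r)^{O(r)})$ via Perron-type bounds. So from now on $\term{f}_r$ is algebraically independent (otherwise $C(\term{f}_m) = 0$ is forced and there is nothing to test), and we must hit the single nonzero polynomial $g := C(\term{f}_r) \in A$, which has degree $\le d$ and lives in $n$ variables.

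Next I would invoke the Witt-Jacobian criterion in the other direction: lift $\term{f}_r$ to $\term{g}_r \in B$ over the Witt vectors, set $\ell := \lfloor r \log_p \delta \rfloor$, and — since $\term{f}_r$ is algebraically independent — Theorem \ref{thm:ExplicitWJCriterion} guarantees some $I \in \binom{[n]}{r}$ for which $\WJP_{\ell+1, I}(\term{g}_r)$ is not $(\ell+1)$-degenerate, i.e.\ there is a monomial $\term{x}^\alpha$ whose coefficient is \emph{not} divisible by $p^{\min\{v_p(\alpha),\ell\}+1}$. The sparsity of $\WJP_{\ell+1,I}(\term{g}_r)$ is bounded by the sparsity of $(g_1\cdots g_r)^{p^\ell - 1}$ times that of $(\prod_{j\in I} x_j)\det\cJ_{\term{x}_I}(\term{g}_r)$; using $s$-sparsity of each $f_i$ and $\ell \le r\log_p\delta$ (so $p^\ell \le \delta^r$), the first factor has at most $\binom{rs + p^\ell - 1}{p^\ell - 1} \le (rs)^{p^\ell} \le (rs)^{\delta^r}$ monomials — too big in general, but when $s$ is sub-logarithmic this is $\poly$-sized. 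From a not-$(\ell+1)$-degenerate monomial we extract, essentially by reading off the non-divisible coefficient and dividing through by the known powers of $p$ from the $\prod x_j$ and from the product-of-$g_i$ factor, a \emph{genuine} nonzero polynomial identity over $k$ that certifies independence; but more usefully, the witness monomial $\alpha$ and the index set $I$ are objects of small bit-size, so the certificate "$\term{f}_r$ independent" is $\NP$-checkable with a $\sP$ oracle as in Theorem \ref{thm:UpperBound}.

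For the hitting set itself I would combine two ingredients. Because $g = C(\term{f}_r)$ is a polynomial of degree $\le d$ in $n$ variables, the classical Schwartz–Zippel / Ore lemma says any set of the form $S^n$ with $|S| > d$ hits it; but $|S^n| = (d+1)^n$ is not polynomial, so instead I would use a low-variable Kronecker-style substitution $x_j \mapsto t^{e_j}$ that collapses to a univariate polynomial of degree $\le nd$ while preserving nonzeroness, which only needs $nd+1$ points in a field extension of size $\poly(nd)$ — this accounts for the $(nd)^r$ factor after one also ranges over the $\binom{n}{r} \le n^r$ choices of $I$ implicit in the reduction. The $(\delta rs)^{r^2 s}$ factor is exactly the cost of enumerating the monomials of the Witt-Jacobian-type certificate: one builds the $\poly\bigl((\delta rs)^{r^2 s}\bigr)$-many candidate monomials $\term{x}^\alpha$ of $\WJP_{\ell+1,I}(\term{g}_r)$ explicitly (feasible precisely because sub-logarithmic $s$ keeps $(rs)^{\delta^r}$ under control, and more generally the bound $(\delta rs)^{r^2 s}$ absorbs the $\ell \le r\log_p\delta$ blow-up), evaluates each at the Kronecker points, and thereby certifies $g \ne 0$ without a randomness source.

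The main obstacle I expect is controlling the sparsity and degree bookkeeping through the transcendence-basis reduction: passing from $C(\term{f}_m)$ to a certificate living purely in $\term{f}_J$ inflates the degree via Perron's bound (degree $(\delta)^{r}$-ish), and this inflated degree then feeds into both the $(nd)^r$ Kronecker cost and — more delicately — into the exponent $p^\ell - 1$ in the Witt-Jacobian product, so one must check that the final exponent $r^2 s$ in $(\delta rs)^{r^2 s}$ genuinely dominates all of $p^\ell \le \delta^r$, $(rs)^{p^\ell}$, and the Perron blow-up simultaneously. The other subtle point is extracting an \emph{effective} certificate from "not $(\ell+1)$-degenerate": the criterion is a divisibility statement about a coefficient of an exponentially-sparse $p$-adic polynomial, and turning it into a finite, enumerable, verifiable object over $\ol{\F}_p$ requires care about how the Witt-vector arithmetic reduces modulo $p^{\ell+1}$ — but since $\ell = \lfloor r\log_p\delta\rfloor$ is small and the relevant coefficients are integer combinations of products of at most $rs \cdot p^\ell$ lifted coefficients, all arithmetic is over $\W(k)/p^{\ell+1}$, a finite ring of manageable size, which is what ultimately makes the enumeration in time $\poly\bigl((nd)^r, (\delta rs)^{r^2 s}\bigr)$ possible.
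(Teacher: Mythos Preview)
There is a genuine gap: you have not identified the mechanism by which the Witt-Jacobian criterion produces a hitting set for $C(\term{f}_m)$. In the paper the criterion is used not to ``certify $g\ne 0$'' but to construct a \emph{faithful} substitution. Concretely, one completes $\term{f}_r$ to an independent system $\term{f}_r,\term{x}_{[n]\setminus I}$, writes down $\WJP_{\ell+1,[n]}(\term{g}_r,\term{x}_{[n]\setminus I})$, strips off the factor $(\prod_{j\notin I}x_j)^{p^\ell}$, and then uses sparse-PIT (Kronecker plus a small prime modulus) on the remaining polynomial to find constants $\boldsymbol{c}$ for $\term{x}_{[n]\setminus I}$ that preserve non-$(\ell+1)$-degeneracy. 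By the criterion again, this means $\term{f}_r(\term{x}_I,\boldsymbol{c})$ stay independent, so the substitution is faithful to $\term{f}_m$ and hence injective on $k[\term{f}_m]$ by \cite{bib:BMS11}; thus $C(\term{f}_m)(\term{x}_I,\boldsymbol{c})\ne 0$ is now an $r$-variate polynomial of degree $\le d$, and Schwartz--Zippel over a set of size $d+1$ in each of the $r$ remaining variables finishes. The $(nd)^r$ factor is $\binom{n}{r}\cdot(d+1)^r$, and the $(\delta rs)^{r^2 s}$ factor is the search space for $(\boldsymbol{c},q)$. Your proposal never arrives at this variable-reduction step; enumerating monomials of $\WJP$ and evaluating them ``at Kronecker points'' does not give evaluation points for $C(\term{f}_m)$ at all.

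Two subsidiary errors compound this. First, your sparsity estimate is inverted: for an $s$-sparse $g_i$, the power $g_i^{p^\ell-1}$ has $\binom{s+p^\ell-2}{s-1}\le (p^\ell)^{s-1}\le \delta^{r(s-1)}$ monomials, so the product $(g_1\cdots g_r)^{p^\ell-1}$ has at most roughly $\delta^{r^2 s}$ monomials, which is polynomial in the stated parameters. Your bound $(rs)^{p^\ell}=(rs)^{\delta^r}$ is exponential in $\delta$ for any $rs\ge 2$, sub-logarithmic $s$ or not, so your ``too big in general, but\ldots'' rescue does not work. Second, the reduction from $m$ to $r$ via Perron is unnecessary: the faithful substitution is injective on all of $k[\term{f}_m]$, so $C(\term{f}_m)$ can be kept as is, and no degree blow-up through annihilating polynomials is incurred.
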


An interesting parameter setting is $r=O(1)$ and $s=O(\log d/r^2\log(\delta r \log d))$. In other
words, we have an efficient hitting-set, when $\term{f}_m$ have constant transcendence degree and 
sub-logarithmic sparsity. This is new, though, for zero and large characteristic, a much better
result is in \cite{bib:BMS11} (thanks to the classical Jacobian).

\subsection{Our approach}

Here we sketch the ideas for proving Theorem \ref{thm:ExplicitWJCriterion}, without going 
into the definitions and technicalities (those come later in plenty). The central tool in the 
proof is the {\em de Rham-Witt complex} which was invented by Illusie, for $\F_p$-ringed topoi, in the seminal work 
\cite{bib:Il79}.  
While it is fundamental for several cohomology theories for schemes in characteristic $p>0$
(see the beautiful survey \cite{bib:Il94}), we focus here on its algebraic strengths only.
We will see that it is just the right machinery, though quite heavy, to 
churn a criterion. We lift a polynomial $f\in A$ to a more `geometric' ring $\W(A)$, via the
{\em Teichm\"uller lift} $[f]$. This process is the same functor that builds $\hat{\Z}_p$ from $\F_p$ 
\cite{bib:Ser79}. The formalization of differentiation in this ring is by the
$\W(A)$-module of {\em K\"ahler differentials} $\Om_{\W(A)}^1$~\cite{bib:Eis95}.
Together with its {\em exterior powers} it provides a fully-fledged linear
algebra structure, the {\em de Rham complex} $\Om_{\W(A)}^{\cdot}$. But this is
all in zero characteristic and we have to do more to correctly extract the
properties of $A$ -- which has characteristic $p$.

The ring $\W(A)$ admits a natural {\em filtration} by ideals
$\V^\ell\W(A)\supseteq p^\ell\W(A)$, so we
have {\em length-$\ell$ Witt vectors} $\W_\ell(A):= \W(A)/\V^\ell\W(A)$.
This filtration is inherited by $\Om_{\W(A)}^{\cdot}$,
and a suitable quotient defines the {\em de Rham-Witt complex}
$\W_\ell\Om_A^\cdot$ of
 $\W_\ell(A)$-modules, and the {\em de Rham-Witt pro-complex} 
$\W_\bullet\Om_A^\cdot$. This is still an abstractly defined object, but it can be explicitly realized as a 
subspace of the algebra $B':=\cup_{i\ge0}\W(k)[\term{x}_n^{p^{-i}}]$ (a {\em perfection} of $B$). Illusie
defined a subalgebra $\E^0\subset B'$ that is `almost' isomorphic to $\W(A)$,
and could then identify a {\em differential graded algebra} $\E\subset\Om_{B'}^\cdot$
such that a suitable quotient $\E_\ell:=\E/\Fil^\ell \E$ realizes $\W_\ell\Om_A^\cdot$.

To prove Theorem \ref{thm:ExplicitWJCriterion} we consider the {\em Witt-Jacobian differential}
$\WJ_\ell:= d [f_1] \wedge \dotsb \wedge d [f_r] \in \W_\ell\Om^r_A$. By studying the behavior of 
$\W_\ell\Om^r_A$ as we move from $A$ to an extension ring, we show that $\WJ_\ell$ vanishes 
iff $\term{f}_r$ are algebraically dependent. The concept of {\em \'etale} extension is really useful here 
\cite{bib:Milne80}. In our situation, it corresponds to a {\em separable} field extension. We try to `force' 
separability, and here Perron-like Theorem \ref{thm:PerronsTheorem}  helps to bound $\ell$.
Next, we realize $\WJ_\ell$
as an element of $\E_\ell^r$. It is here where the
{\em Witt-Jacobian polynomials} $\WJP_{\ell,I}$ appear and satisfy: $\WJ_\ell=0$ iff its explicit version is in
$\Fil^\ell \E^r$ iff $\WJP_{\ell,I}$ is $\ell$-degenerate for all $I$.

The idea in Theorem \ref{thm:UpperBound} is that,
by the Witt-Jacobian criterion, the given polynomials are algebraically independent
iff some $\WJP_{\ell+1,I}$ has some monomial~$\term{x}^\alpha$ whose coefficient is
{\em not} divisible by $p^{\min\{v_p(\alpha), \ell\}+1}$. An NP machine can `guess'~$I$ and 
$\alpha$, while computing the coefficient is harder. We do the latter following
an idea of~\cite{bib:KS11} by evaluating the exponentially large sum in an
interpolation formula using a $\sP$-oracle.
In this part the isomorphism between
 $\W_{\ell+1}(\F_{p^t})$ 
and the handier {\em Galois ring} $G_{\ell+1,t}$ \cite{bib:Rag69, bib:Wan03} allows to evaluate $\WJP_{\ell+1,I}$.

The main idea in Theorem \ref{thm:HittingSet} is that non-$\ell$-degeneracy of
$\WJP_{\ell,I}$ is preserved under evaluation of the variables $\term{x}_{[n]\setminus I}$.
This implies with \cite{bib:BMS11} that algebraically independent $\term{f}_r$ can be made
$r$-variate efficiently without affecting the 
zeroness of $C(\term{f}_r)$.
The existence of the claimed hitting-sets follows easily from \cite{bib:Sch80}.

\subsection{Organization}
In \S \ref{sec:Preliminaries} we introduce all necessary preliminaries about algebraic independence 
and transcendence degree
(\S \ref{sec:AlgebraicIndependenceTrdeg}), derivations, differentials and the de Rham complex
(\S \ref{sec:DerivationsDifferentialsDeRhamComplex}), separability (\S \ref{sec:Separability}), 
the ring of Witt vectors (\S \ref{sec:WittVectors}) and the de Rham-Witt complex (\S 
\ref{sec:DeRhamWittComplex} and \S \ref{sec:DeRhamWittComplexPolynomialRing}). To warm up the
concept of differentials we discuss the classical Jacobian criterion in a `modern' language in \S 
\ref{sec:SeparabilityClassicalJacobianCriterion}.

Our main results are contained in \S\ref{sec:WJCriterion}.
In \S \ref{sec:WJDifferential} we define the Witt-Jacobian differential
and prove the abstract Witt-Jacobian criterion, and in \S \ref{sec:WJPolynomial}
we derive its explicit version Theorem \ref{thm:ExplicitWJCriterion}.
In \S \ref{sec:UpperBound} and \S \ref{sec:ApplicationPIT} we prove
Theorems \ref{thm:UpperBound} resp.~\ref{thm:HittingSet}.
To save space we have skipped several worthy references and moved some
proofs to Appendix \ref{app:Proofs}.

\section{Preliminaries} \label{sec:Preliminaries}

Unless stated otherwise, a ring in this paper is commutative with unity.
For integers $r \le n$, we write $[r,n] := \{r, r+1, \dotsc, n\}$. 

\subsection{Algebraic independence and transcendence degree} \label{sec:AlgebraicIndependenceTrdeg}

Let $k$ be a field and let $A$ be a $k$-algebra. Elements $\term{a}_r \in A$ are called
\emph{algebraically independent over $k$} if $F(\term{a}_r) \neq 0$ for all
nonzero polynomials $F \in k[\term{y}_r]$. For a subset $S \subseteq A$,
the \emph{transcendence degree of $S$ over $k$} is defined as
$\trdeg_k(S) := \sup\{ \#T \,\vert\; T \subseteq S\text{ finite and algebraically independent over }k\}$.
For an integral domain $A$ we have $\trdeg_k(A)=\trdeg_k(Q(A))$,
where $Q(A)$ denotes the quotient field of $A$.

Now let $k[\term{x}] = k[\term{x}_n]$ be a polynomial ring over $k$. We have the following 
effective criterion for testing algebraic independence, which is stronger than the classical
Perron's bound \cite{bib:Per27}. We prove it in \S\ref{app:Preliminaries} using 
\cite[Corollary 1.8]{bib:K96}.

\begin{thm}[Degree bound] \label{thm:PerronsTheorem}
Let $k$ be a field, $\term{f}_{n} \in k[\term{x}]$ be algebraically independent, and set 
$\delta_i := \deg(f_i)$ for $i \in [n]$. 
Then $[k(\term{x}_n):k(\term{f}_n)] \le \delta_1 \dotsb \delta_{n}$.
\end{thm}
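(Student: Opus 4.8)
The plan is to derive the bound $[k(\term{x}_n):k(\term{f}_n)] \le \delta_1 \dotsb \delta_n$ from a suitable B\'ezout-type intersection estimate, packaged in the cited \cite[Corollary 1.8]{bib:K96}. Since $\term{f}_n \in k[\term{x}_n]$ are algebraically independent, the inclusion $k[\term{f}_n] \hookrightarrow k[\term{x}_n]$ realizes $k[\term{x}_n]$ as a finite module over the polynomial subring $k[\term{f}_n]$ after localizing appropriately, and hence the field extension $k(\term{x}_n)/k(\term{f}_n)$ is finite; its degree is the quantity we must bound. The geometric reformulation is: consider the dominant morphism $\varphi \colon \A^n \to \A^n$, $\term{x} \mapsto (f_1(\term{x}), \dotsc, f_n(\term{x}))$. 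Then $[k(\term{x}_n):k(\term{f}_n)]$ equals the degree of $\varphi$, i.e.\ the number of points (counted with multiplicity) in a generic fiber $\varphi^{-1}(\term{c})$ for $\term{c} \in \A^n$. That fiber is the zero set of the system $f_1 = c_1, \dotsc, f_n = c_1, \dotsc$; wait --- it is the zero set of $f_1 - c_1, \dotsc, f_n - c_n$, a system of $n$ polynomials in $n$ variables with $\deg(f_i - c_i) = \delta_i$.

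First I would make precise that the generic fiber is finite (zero-dimensional) --- this is exactly where algebraic independence of $\term{f}_n$ is used, via the fact that $\trdeg_k k(\term{f}_n) = n = \trdeg_k k(\term{x}_n)$, so the extension is algebraic, hence finite, hence $\varphi$ is generically finite and a generic fiber has dimension $0$. Second, I would invoke the B\'ezout inequality in the sharp form of \cite[Corollary 1.8]{bib:K96}: for a zero-dimensional intersection of hypersurfaces of degrees $\delta_1, \dotsc, \delta_n$ in $\A^n$ (or $\PP^n$), the number of isolated solutions, counted with multiplicity, is at most $\delta_1 \dotsb \delta_n$. Applying this to the fiber system $\{f_i - c_i = 0\}_{i \in [n]}$ for generic $\term{c}$ bounds $\#\varphi^{-1}(\term{c})$, and since for generic $\term{c}$ this count (with multiplicities) equals $\deg \varphi = [k(\term{x}_n):k(\term{f}_n)]$, we are done. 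One must be slightly careful that Kollár's bound as cited applies over an algebraically closed field, so I would base-change to $\ol{k}$, noting this changes neither the degree of the field extension nor the degrees $\delta_i$.

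The main obstacle is bookkeeping rather than conceptual: one has to ensure that the generic fiber cardinality genuinely computes the field-extension degree (no generic multiplicity inflation, or rather, accounting for it correctly so that ``counted with multiplicity'' on the geometric side matches the separable-or-not degree on the algebraic side), and that the hypotheses of \cite[Corollary 1.8]{bib:K96} --- in particular the zero-dimensionality of the relevant component --- are met. The cleanest route is to pick $\term{c}$ in the dense open locus over which $\varphi$ is finite flat of degree $[k(\term{x}_n):k(\term{f}_n)]$; flatness makes the fiber length constant and equal to that degree, and Kollár's inequality bounds that length by $\prod_i \delta_i$. Everything else --- that such a locus is nonempty open, that passing to $\ol{k}$ is harmless --- is routine. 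This is why the statement is only ``stronger than'' classical Perron: the same B\'ezout input, but read off as an extension degree rather than merely as a degree bound on a single annihilating polynomial, which would instead give $\deg F \le \delta_1 \dotsb \delta_n$ for the minimal annihilator of one extra polynomial $f_{n+1}$.
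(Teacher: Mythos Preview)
Your argument is correct and arrives at the same endpoint, but by a different route than the paper. You reason geometrically: $[k(\term{x}_n):k(\term{f}_n)]$ is the degree of the dominant map $\varphi\colon\mathbb{A}^n\to\mathbb{A}^n$, hence the length of a generic fiber, which is a zero-dimensional intersection of $n$ hypersurfaces of degrees $\delta_i$ and is therefore bounded by $\prod_i\delta_i$ via B\'ezout. The paper instead works purely algebraically: it homogenizes each $f_i$ to $g_i:=z^{\delta_i}f_i(\term{x}/z)\in k[z,\term{x}]$, proves by an explicit monomial-basis comparison that $[k(\term{x}_n):k(\term{f}_n)]=[k(z,\term{x}_n):k(z,\term{g}_n)]$, and only then applies \cite[Corollary~1.8]{bib:K96} to the homogeneous system $z,g_1,\dotsc,g_n$. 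Your approach is conceptually cleaner and bypasses the basis-comparison step; the paper's approach has the virtue of matching the hypotheses of Koll\'ar's corollary (stated for homogeneous forms) on the nose, and avoids any appeal to flatness or scheme-theoretic fiber length. If you wanted to cite Koll\'ar's result exactly as stated, you would in effect have to pass to the projective closure---which is precisely what the paper's homogenization step accomplishes.
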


%


\subsection{Differentials and the de Rham complex} \label{sec:DerivationsDifferentialsDeRhamComplex}

Let $R$ be a ring and let $A$ be an $R$-algebra. The \emph{module of K\"ahler differentials}
of $A$ over $R$, denoted by $\Om_{A/R}^1$, is the $A$-module generated by the set
of symbols $\{da \,\vert\; a \in A\}$ subject to the relations
\[
	d(ra+sb) = r\,da + s\,db \;\; (\text{$R$-linearity}), \quad
	d(ab) = a\,db + b\,da \;\; (\text{Leibniz rule})
\]
for all $r,s \in R$ and $a,b \in A$. The map $d \colon A \rightarrow \Om_{A/R}^1$
defined by $a \mapsto da$ is an $R$-derivation called the \emph{universal $R$-derivation}
of $A$.

For $r \ge 0$, let $\Om_{A/R}^r := \bigwedge^r \Om_{A/R}^1$ be the $r$-th exterior
power over $A$.
The universal derivation $d \colon A=\Om_{A/R}^0 \rightarrow \Om_{A/R}^1$ extends
to the {\em exterior derivative} $d^r \colon \Om_{A/R}^r \to \Om_{A/R}^{r+1}$ by
$d^r(a \, da_1 \wedge \dotsb \wedge da_r) = da \wedge da_1 \wedge \dotsb \wedge da_r$
for $a, a_1, \dotsc, a_r \in A$. It satisfies $d^{r+1}\circ d^r=0$ and hence
defines a complex of $R$-modules
\[
	\Om_{A/R}^\cdot\colon \quad 0 \rightarrow A \stackrel{d}{\rightarrow} \Om_{A/R}^1 \stackrel{d^1}{\rightarrow} 
	\dotsb \rightarrow \Om_{A/R}^r \stackrel{d^r}{\rightarrow} \Om_{A/R}^{r+1} \rightarrow \dotsb
\]
called the \emph{de Rham complex}  of $A$ over $R$. This complex also has an
$R$-algebra structure with the exterior product.
The K\"ahler differentials satisfy the following properties, which make it
convenient to study algebra extensions.

\begin{lem}[Base change] \label{lem:DeRhamBaseChange}
	Let $R$ be a ring, let $A$ and $R'$ be $R$-algebras. Then $A' := R' \otimes_R A$ is an $R'$-algebra and,
	for all $r \ge 0$, there is an $A'$-module isomorphism
	$R' \otimes_R \Om_{A/R}^r \rightarrow \Om_{A'/R'}^r$
	given by $r' \otimes (da_1 \wedge \dotsb \wedge da_r) \mapsto (r'\otimes 1) \, d(1 \otimes a_1) \wedge \dotsb \wedge d(1 \otimes a_r)$.
\end{lem}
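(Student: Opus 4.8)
The plan is to reduce to the case $r = 1$ and then invoke the compatibility of exterior powers with extension of scalars. Throughout I will use that $\Om^1_{A/R}$ represents the functor $M \mapsto \Der_R(A,M)$ on $A$-modules, i.e.~every $R$-derivation $A \to M$ factors uniquely through $d \colon A \to \Om^1_{A/R}$, and likewise for $\Om^1_{A'/R'}$. Note first that $A' = R' \otimes_R A$ is an $R'$-algebra in the obvious way and an $A$-algebra via $a \mapsto 1 \otimes a$, so the statement makes sense; and for any $A$-module $M$ one has $A' \otimes_A M = (R' \otimes_R A)\otimes_A M \cong R' \otimes_R M$.

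For $r = 1$ I would construct two mutually inverse $A'$-linear maps. In one direction, the $R$-derivation $A \to \Om^1_{A'/R'}$, $a \mapsto d(1\otimes a)$, factors through an $A$-linear map $\Om^1_{A/R} \to \Om^1_{A'/R'}$; extending scalars along $A \to A'$ and using $A' \otimes_A \Om^1_{A/R} \cong R' \otimes_R \Om^1_{A/R}$ yields an $A'$-linear map $\Psi \colon R' \otimes_R \Om^1_{A/R} \to \Om^1_{A'/R'}$ with $r' \otimes da \mapsto (r'\otimes 1)\, d(1\otimes a)$. In the other direction, put $N := R' \otimes_R \Om^1_{A/R}$ with its natural $A' = R' \otimes_R A$-module structure and define $d' \colon A' \to N$ on elementary tensors by $r' \otimes a \mapsto r' \otimes da$; well-definedness holds because $(r',a)\mapsto r'\otimes da$ is $R$-bilinear, $R'$-linearity is clear, and the Leibniz rule reduces to the Leibniz rule for $d$ in $\Om^1_{A/R}$ after expanding $(r'\otimes a)(s'\otimes b) = r's'\otimes ab$. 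By the universal property of $\Om^1_{A'/R'}$, $d'$ induces an $A'$-linear map $\Phi \colon \Om^1_{A'/R'} \to N$ with $\Phi\bigl(d(1\otimes a)\bigr) = 1\otimes da$. Checking $\Phi\circ\Psi$ and $\Psi\circ\Phi$ on the generators $\{r'\otimes da\}$ of $N$ and $\{da' : a'\in A'\}$ of $\Om^1_{A'/R'}$ (using that $A'$ is generated as an $R'$-module by $1\otimes A$, so that these $d(1\otimes a)$ generate $\Om^1_{A'/R'}$ as an $A'$-module) shows both composites are the identity, so $\Psi$ is the claimed isomorphism for $r=1$.

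For general $r$ I would invoke the standard fact that exterior powers commute with extension of scalars: for any ring map $A \to A'$ and $A$-module $M$ there is a natural $A'$-isomorphism $\bigwedge^r_{A'}(A'\otimes_A M) \cong A' \otimes_A \bigwedge^r_A M$ sending $(a'_1\otimes m_1)\wedge\dotsb\wedge(a'_r\otimes m_r) \mapsto a'_1\dotsb a'_r \otimes (m_1\wedge\dotsb\wedge m_r)$. Taking $M = \Om^1_{A/R}$ and composing with $\Psi$ gives
\[
	R' \otimes_R \Om^r_{A/R} \;\cong\; A' \otimes_A {\textstyle\bigwedge^r_A}\Om^1_{A/R} \;\cong\; {\textstyle\bigwedge^r_{A'}}\bigl(A'\otimes_A \Om^1_{A/R}\bigr) \;\cong\; {\textstyle\bigwedge^r_{A'}}\Om^1_{A'/R'} \;=\; \Om^r_{A'/R'},
\]
and tracing an elementary tensor $r' \otimes (da_1\wedge\dotsb\wedge da_r)$ through these identifications produces exactly $(r'\otimes 1)\, d(1\otimes a_1)\wedge\dotsb\wedge d(1\otimes a_r)$, as required.

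The only real work is bookkeeping: verifying that $d'$ is well defined and satisfies the Leibniz rule, and chasing an elementary tensor through a composite of three canonical isomorphisms to confirm the explicit formula. No step is conceptually difficult — the universal property of Kähler differentials, extension of scalars for modules, and base change for exterior powers are all standard — but one must keep careful track of which ring each module is regarded over at each stage.
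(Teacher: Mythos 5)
Your proof is correct and follows essentially the same route as the paper, which simply cites Eisenbud (Proposition 16.4 for the $r=1$ case and Proposition A2.2~b for the compatibility of exterior powers with extension of scalars); you have just written out the standard details of those two references. No issues.
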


\begin{lem}[Localization] \label{lem:DeRhamLocalization}
	Let $R$ be a ring, let $A$ be an $R$-algebra and let $B = S^{-1}A$ for some
	multiplicatively closed set $S \subset A$. Then there 
	is a $B$-module isomorphism
	$B \otimes_A \Om_{A/R}^r \rightarrow \Om_{B/R}^r$
	given by $b \otimes (da_1 \wedge \dotsb \wedge da_r) \mapsto b \, da_1 \wedge \dotsb \wedge da_r$.
	The universal $R$-derivation $d \colon B \rightarrow \Om_{B/R}^1$ satisfies $d(s^{-1}) = -s^{-2} \, ds$ for $s \in S$.
\end{lem}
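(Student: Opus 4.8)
Let me think about how to prove this localization lemma for Kähler differentials.\textbf{Plan of proof.} The statement to establish is the Localization lemma: for $B = S^{-1}A$ with $S \subset A$ multiplicatively closed, there is a $B$-module isomorphism $B \otimes_A \Om_{A/R}^r \to \Om_{B/R}^r$, and the universal derivation on $B$ satisfies $d(s^{-1}) = -s^{-2}\,ds$. The plan is to first treat the case $r = 1$ and then bootstrap to general $r$ by a standard exterior-power argument. For $r = 1$, I would define the candidate map $\varphi \colon B \otimes_A \Om_{A/R}^1 \to \Om_{B/R}^1$ on generators by $b \otimes da \mapsto b\,da$ (using that $B$ is an $A$-algebra, so $da \in \Om_{A/R}^1$ has a canonical image in $\Om_{B/R}^1$, and then multiply by $b \in B$). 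Well-definedness is immediate since the defining relations of $\Om_{A/R}^1$ map to relations valid in $\Om_{B/R}^1$; and $\varphi$ is $B$-linear by construction. The real content is constructing the inverse.

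\textbf{Constructing the inverse and the formula $d(s^{-1}) = -s^{-2}\,ds$.} To build $\psi \colon \Om_{B/R}^1 \to B \otimes_A \Om_{A/R}^1$, I would first observe that the composite $A \xrightarrow{d} \Om_{A/R}^1 \to B \otimes_A \Om_{A/R}^1$ is an $R$-derivation of $A$ valued in a $B$-module, hence extends uniquely to an $R$-derivation $D \colon B \to B \otimes_A \Om_{A/R}^1$ \emph{provided} such an extension exists; existence is the crux. The clean way is to define $D(a/s) := s^{-1} \otimes da - (a/s^2) \otimes ds$ for $a \in A$, $s \in S$, check this is well-defined on equivalence classes (if $a/s = a'/s'$ then $t(as' - a's) = 0$ for some $t \in S$, and one expands $d(tas' - ta's) = 0$ in $\Om_{A/R}^1$, then localizes), and verify the Leibniz rule and $R$-linearity by direct computation. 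By the universal property of $\Om_{B/R}^1$, $D$ factors through a unique $B$-linear map $\psi \colon \Om_{B/R}^1 \to B \otimes_A \Om_{A/R}^1$ with $\psi(d(a/s)) = D(a/s)$. Taking $a = 1$ gives precisely $d(s^{-1}) \mapsto -s^{-2}\,ds$ after applying $\varphi$, which yields the stated derivation formula once we know $\varphi, \psi$ are mutually inverse. That last check is a routine verification on generators: $\varphi \circ \psi = \id$ because both send $d(a/s)$ to $d(a/s)$ (using the Leibniz/linearity computation $d(a/s) = s^{-1}\,da - a s^{-2}\,ds$ in $\Om_{B/R}^1$, which is itself an instance of the formula we are proving — so I would prove this identity in $\Om_{B/R}^1$ \emph{first}, purely from the Leibniz rule applied to $s \cdot s^{-1} = 1$), and $\psi \circ \varphi = \id$ because both fix the generators $b \otimes da$.

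\textbf{From $r = 1$ to general $r$.} Once the $r = 1$ isomorphism $B \otimes_A \Om_{A/R}^1 \cong \Om_{B/R}^1$ is in hand, the general case follows formally: exterior powers commute with the base change $A \to B$ in the sense that $\bigwedge_B^r(B \otimes_A M) \cong B \otimes_A \bigwedge_A^r M$ for any $A$-module $M$, applied to $M = \Om_{A/R}^1$. So $\Om_{B/R}^r = \bigwedge_B^r \Om_{B/R}^1 \cong \bigwedge_B^r(B \otimes_A \Om_{A/R}^1) \cong B \otimes_A \bigwedge_A^r \Om_{A/R}^1 = B \otimes_A \Om_{A/R}^r$, and unwinding the isomorphisms shows the composite is exactly $b \otimes (da_1 \wedge \dotsb \wedge da_r) \mapsto b\,da_1 \wedge \dotsb \wedge da_r$, as claimed.

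\textbf{Main obstacle.} The only genuinely delicate point is the well-definedness of the extended derivation $D$ on $B = S^{-1}A$ — i.e.\ that the formula $D(a/s) = s^{-1} \otimes da - (a/s^2) \otimes ds$ respects the equivalence relation defining the localization. This requires carefully pushing the relation $t(as' - a's) = 0$ through $d$ and then clearing denominators in the $B$-module $B \otimes_A \Om_{A/R}^1$; everything else (Leibniz, $R$-linearity, the two round-trip checks, the exterior-power step) is mechanical. I would also flag that this lemma is entirely standard (e.g.\ it appears in Eisenbud or Matsumura), so a terse proof citing the exterior-power fact and focusing on the $r=1$ construction is appropriate.
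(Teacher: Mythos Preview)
Your proposal is correct and follows the standard textbook argument. The paper itself does not give a proof but simply cites Eisenbud: Proposition 16.9 for the $r=1$ case and Proposition A2.2~b for the passage to exterior powers, which is exactly the decomposition you outline (and you even flag the Eisenbud reference yourself).
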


For $r = 1$ these lemmas are proved in~\cite{bib:Eis95} as Propositions 16.4 and
16.9, respectively, and for $r\ge 2$ they follow from~\cite[Proposition A2.2~b]{bib:Eis95}.

The Jacobian emerges quite naturally in this setting.

\begin{defn} \label{defn:JacobianDifferential}
	The \emph{Jacobian differential of $\term{a}_r\in A$} is defined as
	$\J_{A/R}(\term{a}_r) := d a_1 \wedge \dotsb \wedge d a_r \in \Om_{A/R}^r$.
\end{defn}

Now consider the polynomial ring $k[\term{x}]$.
Then $\Om_{k[\term{x}]/k}^1$ is a free $k[\term{x}]$-module of rank $n$ with basis
$dx_1, \dotsc, dx_n$. It follows that
$\Om_{k[\term{x}]/k}^r = 0$ for $r>n$. For $r \le n$ and
$I = \{j_1 < \dotsb < j_r\} \in \tbinom{[n]}{r}$, we use the notation
$\bigwedge_{j \in I} d x_j := d x_{j_1} \wedge \dotsb \wedge d x_{j_r}$.
The $k[\term{x}]$-module $\Om_{k[\term{x}]/k}^r$ is free of rank~$\tbinom{n}{r}$
with basis 
$\bigl\{ \bigwedge_{j \in I} d x_j \,\vert\; I \in \tbinom{[n]}{r} \bigr\}$.
The derivation $d \colon k[\term{x}] \rightarrow \Om_{k[\term{x}]/k}^1$
is given by $f \mapsto \sum_{i=1}^n (\partial_{x_i}f) dx_i$.

The \emph{Jacobian matrix} of $\term{f}_m \in k[\term{x}]$ is
$\cJ_{\term{x}}(\term{f}_m) := (\partial_{x_j} f_i)_{i,j} \in k[\term{x}]^{m \times n}$.
For an index set 
$I = \{j_1 < \dotsb < j_r\} \in \tbinom{[n]}{r}$, we write $\term{x}_I:=(x_{j_1},\dotsc,x_{j_r})$
and
$\cJ_{\term{x}_I}(\term{f}_m) := (\partial_{x_{j_k}} f_i)_{i,k} \in k[\term{x}]^{m \times r}$.
A standard computation shows
\[
	d f_1 \wedge \dotsb \wedge d f_r = \sideset{}{_I}\sum \det \cJ_{\term{x}_I}(\term{f}_r) 
			\cdot {\textstyle \bigwedge_{j \in I} d x_j},
\]
where the sum runs over all $I \in \tbinom{[n]}{r}$, which implies the following relationship 
between the Jacobian differential and the rank of the Jacobian matrix.

\begin{lem} \label{lem:RelationshipJacobianDifferentialJacobianMatrix}
	For $\term{f}_r \in k[\term{x}]$ we have
	$\J_{k[\term{x}]/k}(\term{f}_r) \neq 0$ if and only if
	$\rk_{k(\term{x})} \cJ_{\term{x}}(\term{f}_r) = r$. 
\end{lem}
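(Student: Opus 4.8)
The plan is to derive Lemma~\ref{lem:RelationshipJacobianDifferentialJacobianMatrix} directly from the displayed identity $df_1\wedge\dotsb\wedge df_r = \sum_I \det\cJ_{\term{x}_I}(\term{f}_r)\cdot\bigwedge_{j\in I}dx_j$ that immediately precedes it, combined with the structural fact that $\bigl\{\bigwedge_{j\in I}dx_j \mid I\in\tbinom{[n]}{r}\bigr\}$ is a free basis of $\Om^r_{k[\term{x}]/k}$ over $k[\term{x}]$. First I would observe that $\J_{k[\term{x}]/k}(\term{f}_r) \neq 0$ in $\Om^r_{k[\term{x}]/k}$ is equivalent, by freeness of this module, to at least one coefficient $\det\cJ_{\term{x}_I}(\term{f}_r)$ being nonzero as an element of $k[\term{x}]$, i.e.\ nonzero in the fraction field $k(\term{x})$.

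Next I would translate this into a rank statement. The $r\times r$ minors of $\cJ_{\term{x}}(\term{f}_r)\in k[\term{x}]^{r\times n}$, viewed over the field $k(\term{x})$, are precisely the $\det\cJ_{\term{x}_I}(\term{f}_r)$ for $I\in\tbinom{[n]}{r}$. By the standard characterization of matrix rank over a field, $\rk_{k(\term{x})}\cJ_{\term{x}}(\term{f}_r) = r$ if and only if some maximal ($r\times r$) minor is nonzero over $k(\term{x})$, since an $r\times n$ matrix has rank $r$ exactly when it has an invertible $r\times r$ submatrix. (Here we use $r\le n$, which holds since otherwise $\Om^r=0$; and if $r=0$ both sides are trivially true.) Chaining the two equivalences gives $\J_{k[\term{x}]/k}(\term{f}_r)\neq 0 \iff$ some $\det\cJ_{\term{x}_I}(\term{f}_r)\neq 0$ in $k(\term{x}) \iff \rk_{k(\term{x})}\cJ_{\term{x}}(\term{f}_r)=r$, which is the claim.

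There is essentially no serious obstacle here: the lemma is a bookkeeping consequence of the Cauchy--Binet-type expansion already displayed and of elementary linear algebra over the function field $k(\term{x})$. The only point requiring a word of care is the passage between ``nonzero as a polynomial in $k[\term{x}]$'' and ``nonzero as an element of $k(\term{x})$'' — but since $k[\term{x}]$ is an integral domain embedding into its fraction field $k(\term{x})$, these are the same condition, so the coefficients in the free-module expansion over $k[\term{x}]$ may be read interchangeably as minors over $k(\term{x})$. Thus the proof reduces to citing the preceding identity, invoking the freeness of $\Om^r_{k[\term{x}]/k}$, and applying the minor criterion for rank; I would write it in two or three lines.
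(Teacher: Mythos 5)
Your proof is correct and is exactly the argument the paper intends: it states the lemma as an immediate consequence of the displayed expansion of $df_1\wedge\dotsb\wedge df_r$ in the free basis $\bigl\{\bigwedge_{j\in I}dx_j\bigr\}$ of $\Om^r_{k[\term{x}]/k}$, with the same reading of the coefficients as the maximal minors and the same rank criterion over $k(\term{x})$. Nothing further is needed.
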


\subsection{Separability} \label{sec:Separability}

A univariate polynomial $f\in k[x]$ is called \emph{separable} if it has no
multiple roots in~$\ol{k}$. If $f$ is irreducible, then it is separable if and
only if $\partial_x f \neq 0$, which is always the case in characteristic zero.
If $\ch(k) = p > 0$, then $f$ is separable if and only if $f \notin k[x^p]$.
Now let $L/k$ be a field extension. An algebraic element $a \in L$ over $k$ is
called \emph{separable} if its minimal polynomial
in $k[x]$ is separable. The separable elements form a field $k \subseteq k_{\sep} \subseteq L$ which
is called the \emph{separable closure} of $k$ in $L$.
Now let $L/k$ be an algebraic extension. Then $[L:k]_{\sep} := [k_{\sep}:k]$ resp.
$[L:k]_{\insep} := [L:k_{\sep}]$
are called \emph{separable} resp. \emph{inseparable degree} of $L/k$.
If $L = k_{\sep}$, then $L/k$ is called \emph{separable}.
The extension $L/k_{\sep}$ is \emph{purely inseparable}, i.e. $a^{p^e} \in k_{\sep}$
for some $e \ge 0$, where $p = \ch(k)$.

More generally, a finitely generated extension
$L/k$ is \emph{separable} if it has a transcendence basis $B \subset L$ such
that the finite extension~$L/k(B)$ is separable. In this case, $B$ is called a
\emph{separating transcendence basis} of $L/k$. If $L/k$ is separable, then
every generating system of~$L$ over $k$ contains a separating transcendence
basis. If $k$ is perfect, then every finitely generated field extension of $k$
is separable~\cite[\S X.6]{bib:Lang84}.

Lemma 16.15 in~\cite{bib:Eis95} implies that a separable field extension adds
no new linear relations in the differential module, and Proposition A2.2~b [loc.cit.] yields

\begin{lem}[Separable extension] \label{lem:DeRhamFiniteSeparableExtension}
	Let $L/k$ be a separable algebraic field extension and let $R$ be a subring of $k$.
	Then there is an $L$-vector space isomorphism
	$L \otimes_k \Om_{k/R}^r \cong \Om_{L/R}^r$
	given by $b \otimes (da_1 \wedge \dotsb \wedge da_r) \mapsto b\,da_1 \wedge \dotsb \wedge da_r$.
\end{lem}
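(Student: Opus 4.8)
The plan is to reduce the statement for general $r$ to the case $r=1$, which is the content of Lemma 16.15 of \cite{bib:Eis95}, and then to propagate that isomorphism through exterior powers using flatness. First I would treat $r=1$: by \cite[Lemma 16.15]{bib:Eis95} (applied with the separable algebraic extension $L/k$), the natural $L$-linear map $L \otimes_k \Om_{k/R}^1 \to \Om_{L/R}^1$ sending $b \otimes da \mapsto b\,da$ is an isomorphism. The key point that makes this work even though $L/k$ need not be finite is that separability of an \emph{algebraic} extension is a ``local'' condition: every element of $L$ lies in a finite separable subextension, and Kähler differentials commute with filtered colimits, so one may assume $[L:k]<\infty$, where the cited lemma applies directly. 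I would state this reduction explicitly, since the paper later needs the infinite (e.g.\ separable closure) case.

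Next I would pass from $r=1$ to general $r\ge 0$. Since $L$ is in particular a flat $k$-module, the canonical map
\[
	L \otimes_k \textstyle\bigwedge_k^r \Om_{k/R}^1 \;\longrightarrow\; \textstyle\bigwedge_L^r \bigl(L \otimes_k \Om_{k/R}^1\bigr)
\]
is an isomorphism; this is exactly the ``change of base ring'' compatibility of exterior powers, which is \cite[Proposition A2.2~b]{bib:Eis95}. Composing with the $r$-th exterior power of the $r=1$ isomorphism $L\otimes_k\Om_{k/R}^1\xrightarrow{\ \sim\ }\Om_{L/R}^1$, and noting that $\bigwedge_k^r\Om_{k/R}^1=\Om_{k/R}^r$ and $\bigwedge_L^r\Om_{L/R}^1=\Om_{L/R}^r$ by definition, yields an $L$-vector space isomorphism $L\otimes_k\Om_{k/R}^r\xrightarrow{\ \sim\ }\Om_{L/R}^r$. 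A direct check on generators confirms that this composite sends $b\otimes(da_1\wedge\dots\wedge da_r)$ to $b\,da_1\wedge\dots\wedge da_r$, which is the asserted formula; the relevant naturality (the $r=1$ map is the one from Lemma~\ref{lem:DeRhamBaseChange} with $R'=L$ and $A=k$, restricted to the subring $R$) makes this routine.

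The only genuine subtlety — the ``main obstacle'' — is handling the case where $L/k$ is infinite, since the textbook statements are cleanest for finite extensions. I expect this to be dispatched by the colimit argument above: writing $L=\varinjlim L_\alpha$ over finite separable subextensions $L_\alpha/k$, one has $\Om_{L/R}^r=\varinjlim \Om_{L_\alpha/R}^r$ and $L\otimes_k\Om_{k/R}^r=\varinjlim (L_\alpha\otimes_k\Om_{k/R}^r)$, and a filtered colimit of isomorphisms is an isomorphism. Everything else is formal manipulation with exterior powers and base change, so no further difficulties are anticipated.
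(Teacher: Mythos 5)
Your proof matches the paper's own justification exactly: both invoke \cite[Lemma 16.15]{bib:Eis95} for the case $r=1$ and \cite[Proposition A2.2~b]{bib:Eis95} to pass to exterior powers. Two minor remarks: A2.2~b does not actually require the flatness you cite (the base-change isomorphism $S\otimes_R\bigwedge^r_R M\cong\bigwedge^r_S(S\otimes_R M)$ holds for arbitrary ring maps by universal properties and right-exactness of $\otimes$), and the filtered-colimit reduction to the finite case is harmless but not needed for the paper's applications, which only use this lemma along finite separable extensions arising from separating transcendence bases.
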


\subsection{The ring of Witt vectors} \label{sec:WittVectors}

The Witt ring was defined in~\cite{bib:Wit36}. For its precise definition and
basic properties we also refer to~\cite{bib:Lang84,bib:Ser79,bib:Haz78}.

Fix a prime $p$ and a ring $A$. As a set, the \emph{ring $\W(A)$ of ($p$-typical) Witt vectors of $A$}
(or {\em Witt ring} for short)
is defined as $A^{\N}$. An element $a \in \W(A)$ is written
$(a_0, a_1, \dotsc)$ and is called a \emph{Witt vector} with \emph{coordinates} $a_i \in A$.
The ring structure of $\W(A)$ is given by universal polynomials $S_i, P_i \in \Z[x_0, \dotsc, x_i, y_0, \dotsc, y_i]$
such that
\[
	a+b = (S_0(a_0, b_0), S_1(a_0,a_1,b_0,b_1), \dotsc ), \quad
	ab = (P_0(a_0, b_0), P_1(a_0,a_1,b_0,b_1), \dotsc )
\]
for all $a, b \in \W(A)$. The first few terms are $S_0 = x_0+y_0$, $P_0 = x_0 y_0$,
\[
	S_1 = x_1 + y_1 - {\textstyle\sum_{i=1}^{p-1} p^{-1} \tbinom{p}{i} x_0^i y_0^{p-i}}, \quad
	P_1 = x_0^p y_1+ x_1 y_0^p + p x_1 y_1 .
\]
The additive and multiplicative identity elements of $\W(A)$ are $(0, 0, 0, \dotsc)$ and $(1, 0, 0, \dotsc)$,
respectively. The ring structure is uniquely determined by a universal property,
which we refrain from stating.
If $p$ is invertible in $A$, then $\W(A)$ is isomorphic to $A^\N$ with componentwise
operations.

The projection $\W_\ell(A)$ of $\W(A)$ to the first $\ell\ge 1$ coordinates is a ring
with the same rules for addition and multiplication as for $\W(A)$, which is
called the \emph{ring of Witt vectors of $A$ of length $\ell$}. We have $\W_1(A) = A$. The
ring epimorphisms
\[
	\Restr\colon \W_{\ell+1}(A) \rightarrow \W_\ell(A), \qquad (a_0, \dotsc, a_\ell) \mapsto (a_0, \dotsc, a_{\ell-1})
\]
are called \emph{restriction} and $((\W_\ell(A))_{\ell \ge 1}, \Restr\colon \W_{\ell+1}(A) \rightarrow \W_\ell(A))$ is a 
projective (inverse) system of rings with limit
$\W(A)$. The additive group homomorphism
\[
	\V\colon \W(A) \rightarrow \W(A), \qquad (a_0, a_1, \dotsc ) \mapsto (0, a_0, a_1, \dotsc)
\]
is called \emph{Verschiebung} (shift). For $\ell, r \ge 1$, we have exact sequences
\[
	0 \rightarrow \W(A) \stackrel{\V^\ell}{\rightarrow} \W(A) \rightarrow \W_\ell(A) \rightarrow 0, \quad
	0 \rightarrow \W_r(A) \stackrel{\V^\ell}{\rightarrow} \W_{\ell+r}(A) 
	\stackrel{\Restr^r}{\rightarrow} \W_\ell(A) \rightarrow 0.
\]
The Verschiebung also induces additive maps $\V\colon\W_\ell(A)\to\W_{\ell+1}(A)$.

The \emph{Teichm\"{u}ller lift} of $a \in A$ is defined as $[a] := (a, 0, 0, \dotsc) \in \W(A)$.
The image of $[a]$ in $\W_\ell(A)$ is denoted by $[a]_{\le\ell}$. We have
\[
	[a] \cdot w = \bigl(a w_0, a^p w_1, \dotsc, a^{p^i} w_i, \dotsc\bigr)
\]
for all $w \in \W(A)$. In particular, the map $A \rightarrow \W(A)$, $a \mapsto [a]$ is multiplicative,
i.\,e., $[ab] = [a][b]$ for all $a, b \in A$. Every $a \in \W(A)$ can be written as
$a = \sum_{i=0}^\infty \V^i[a_i]$.

We are only interested in the case where $A$ has characteristic $p$.
The most basic example is the prime field $A=\F_p$, for which $\W(\F_p)$ is
the ring $\hat{\Z}_p$ of $p$-adic integers. More generally, the Witt ring $\W(\F_{p^t})$
of a finite field~$\F_{p^t}$ is the ring of integers $\hat{\Z}_p^{(t)}$ in the
unique unramified extension $\Q_p^{(t)}$ of $\Q_p$ of degree $t$ \cite{bib:Kob84}.

Now let $A$ be an $\F_p$-algebra.
Then the \emph{Frobenius endomorphism} $\Frob\colon A \rightarrow A$, $a \mapsto a^p$ induces a ring endomorphism
\begin{equation}\label{eq:frobWittRing}
	\Frob\colon \W(A) \rightarrow \W(A), \qquad (a_0, a_1, \dotsc) \mapsto (a_0^p, a_1^p, \dotsc).
\end{equation}
We have $\V\Frob = \Frob\V = p$ and $a \V b = \V(\Frob a \cdot b)$ for all $a, b \in \W(A)$.
The Frobenius further induces endomorphisms on $\W_\ell(A)$.
An $\F_p$-algebra $A$ is called {\em perfect}, if $\Frob$ is an automorphism. In
this case, the induced endomorphism $\Frob$ on $\W(A)$ is an automorphism as well.

%

Let $v_p \colon \Q \rightarrow \Z \cup \{\infty\}$ denote the \emph{$p$-adic valuation}
of $\Q$. For a nonzero $q \in \Q$, $v_p(q)$ is defined as the unique integer $v \in \Z$
such that $q = p^v \frac{a}{b}$ for $a, b \in \Z\setminus p\Z$.
For tuples
$\alpha \in \Q^s$, $s \ge 1$, set $v_p(\alpha) := \min_{1 \le i \le s} v_p(\alpha_i) \in \Z \cup \{\infty\}$.

\begin{lem}[Expanding Teichm\"{u}ller] \label{lem:TeichmuellerExpansion}
	Let $A = R[\term{a}] = R[\term{a}_n]$ be an $R$-algebra, where $R$ is an $\F_p$-algebra,
	and let $f = \sum_{i=1}^s c_i \term{a}^{\alpha_i} \in A$, where $c_i \in R$ and $\alpha_i \in \N^n$. 
	Then, in $\W_{\ell+1}(A)$, we have the sum over $\term{i} \in \N^s$ and 
	$\tbinom{p^\ell}{\term{i}} = \tbinom{p^\ell}{i_1, \dotsc, i_s}$ :
	\begin{equation} \label{eqn:TeichmuellerExpansion}
		[f] = \sum_{\abs{\term{i}}=p^\ell} p^{-\ell+v_p(\term{i})}
		\tbinom{p^\ell}{\term{i}} \cdot \V^{\ell-v_p(\term{i})} \Frob^{-v_p(\term{i})}
		\bigl([c_1 \term{a}^{\alpha_1}]^{i_1} \dotsb [c_s \term{a}^{\alpha_s}]^{i_s} \bigr).
	\end{equation}
\end{lem}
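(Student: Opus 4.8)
The plan is to pass to the perfection of $A$, where the Frobenius is invertible and the right-hand side of \eqref{eqn:TeichmuellerExpansion} collapses to a single $p^\ell$-th power in the Witt ring; the identity then drops out of the $\V$-adic filtration. First I would reduce to the case that $A$ is a domain. Both sides of \eqref{eqn:TeichmuellerExpansion} are built from $c_1,\dotsc,c_s\in R$ and the monomials $\term a^{\alpha_j}$ using ring operations, the maps $\V$ and $\Frob$, Teichm\"uller lifts, and the formation of $p^{v_p(\term i)}$-th roots of the Teichm\"uller elements $[c_1\term a^{\alpha_1}]^{i_1}\dotsb[c_s\term a^{\alpha_s}]^{i_s}=\bigl[\prod_j c_j^{i_j}\term a^{i_j\alpha_j}\bigr]$; such a root lies in $\W(A)$ because $p^{v_p(\term i)}\mid i_j$ for all $j$ makes $\prod_j c_j^{i_j}\term a^{i_j\alpha_j}$ a $p^{v_p(\term i)}$-th power in $A$, and $\Frob^{-v_p(\term i)}$ applied to the lift denotes the Teichm\"uller lift of that root. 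All of these constructions are natural in the $\F_p$-algebra $R$, so it suffices to treat the universal instance $R=\F_p[t_1,\dotsc,t_s]$ with $c_j=t_j$, for which $A=\F_p[t_1,\dotsc,t_s,a_1,\dotsc,a_n]$ is an integral domain, hence reduced; then $A$ embeds into its perfection $A_\infty:=\bigcup_{j\ge0}A^{1/p^j}$, so $\W_{\ell+1}(A)\hookrightarrow\W_{\ell+1}(A_\infty)$, and since both sides of \eqref{eqn:TeichmuellerExpansion} lie in $\W_{\ell+1}(A)$ it is enough to verify the identity in $\W_{\ell+1}(A_\infty)$.

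Inside $\W(A_\infty)$ the Frobenius is an automorphism with $\V=p\,\Frob^{-1}$, so $\V^{\ell-v}\Frob^{-v}=p^{\ell-v}\Frob^{-\ell}$ where $v:=v_p(\term i)$, and the powers of $p$ cancel: the $(\term i)$-th summand of \eqref{eqn:TeichmuellerExpansion} becomes $\binom{p^\ell}{\term i}\,\Frob^{-\ell}\bigl([c_1\term a^{\alpha_1}]^{i_1}\dotsb[c_s\term a^{\alpha_s}]^{i_s}\bigr)$. Setting $d_j:=c_j^{1/p^\ell}\term a^{\alpha_j/p^\ell}\in A_\infty$ we have $\Frob^{-\ell}[c_j\term a^{\alpha_j}]=[d_j]$, so this summand equals $\binom{p^\ell}{\term i}[d_1]^{i_1}\dotsb[d_s]^{i_s}$, and summing over $\abs{\term i}=p^\ell$ the multinomial theorem in the commutative ring $\W(A_\infty)$ yields
\[
	\text{(right-hand side of \eqref{eqn:TeichmuellerExpansion})}\;=\;\bigl([d_1]+\dotsb+[d_s]\bigr)^{p^\ell}.
\]

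Finally I would compare this with $[f]$ using the filtration. Since $A_\infty$ has characteristic $p$, the element $z:=d_1+\dotsb+d_s$ satisfies $z^{p^\ell}=d_1^{p^\ell}+\dotsb+d_s^{p^\ell}=c_1\term a^{\alpha_1}+\dotsb+c_s\term a^{\alpha_s}=f$. The Witt vector $y:=[d_1]+\dotsb+[d_s]$ has zeroth coordinate $z$, hence $y=[z]+\V w$ for some $w\in\W(A_\infty)$, and
\[
	y^{p^\ell}=\bigl([z]+\V w\bigr)^{p^\ell}=[z^{p^\ell}]+\sum_{k=1}^{p^\ell}\binom{p^\ell}{k}[z]^{p^\ell-k}(\V w)^k,\qquad[z^{p^\ell}]=[f].
\]
From $a\,\V b=\V(\Frob a\cdot b)$ one gets, by induction on $k$, that $(\V w)^k=\V^k\Frob^{k-1}(w^k)$, and $\Frob\V=p$ gives $p\cdot\V^k\W(A_\infty)\subseteq\V^{k+1}\W(A_\infty)$; since $v_p\bigl(\binom{p^\ell}{k}\bigr)=\ell-v_p(k)$ and $k\ge p^{v_p(k)}\ge v_p(k)+1$, the $k$-th summand lies in $\V^{\,k+\ell-v_p(k)}\W(A_\infty)\subseteq\V^{\ell+1}\W(A_\infty)$ for every $k\ge1$. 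Hence $y^{p^\ell}\equiv[f]\pmod{\V^{\ell+1}\W(A_\infty)}$, which is exactly \eqref{eqn:TeichmuellerExpansion} in $\W_{\ell+1}(A_\infty)$, and therefore in $\W_{\ell+1}(A)$. The naturality reduction and the passage to $A_\infty$ are routine; the delicate point --- and the heart of the argument --- is this last step, controlling the cross terms $\binom{p^\ell}{k}[z]^{p^\ell-k}(\V w)^k$ in the $\V$-adic filtration, which rests on the commutation identities $(\V w)^k=\V^k\Frob^{k-1}(w^k)$ and $p\,\V^k\W(A_\infty)\subseteq\V^{k+1}\W(A_\infty)$ together with $v_p\bigl(\binom{p^\ell}{k}\bigr)=\ell-v_p(k)$ and the bound $p^m\ge m+1$.
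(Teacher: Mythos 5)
Your proof is correct, and at its core it runs on the same two ingredients as the paper's: the multinomial expansion of a $p^{\ell}$-th power of a sum of Teichm\"uller lifts, and the control of the cross terms in the $\V$-adic filtration via the divisibility $p^{\ell-v_p(k)}\mid\tbinom{p^\ell}{k}$ together with $(\V w)^k\in\V^k\W$ and $p\V^k\W\subseteq\V^{k+1}\W$ (your final display is precisely a direct re-proof of the paper's Lemma~\ref{lem:PthPowerAndFiltration} in the special case $a=y$, $b=[z]$, which you could simply have cited). The packaging, however, is genuinely different. The paper stays in $\W_{\ell+1}(A)$, applies $\Frob^\ell$ to both sides so that all the $\V^{\ell-v}\Frob^{-v}$ operators collapse via $\V\Frob=\Frob\V=p$, and then cancels $\Frob^\ell$ using its injectivity; you instead first reduce to the universal instance $R=\F_p[\term{t}_s]$ and embed $A$ into its perfection, where $\Frob$ is an automorphism, so that the right-hand side can be rewritten honestly as $\bigl(\sum_j[d_j]\bigr)^{p^\ell}$ with $d_j$ the $p^\ell$-th roots, and compared to $[f]=[z^{p^\ell}]$ directly. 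What your route buys is that it never needs $\Frob$ to be injective on $\W_{\ell+1}(A)$ — a property equivalent to $A$ being reduced, which the paper's proof tacitly assumes (harmlessly for its applications, but the lemma is stated for arbitrary $\F_p$-algebras $R$); your universal-case reduction handles that cleanly. The price is the extra (routine but necessary) verification that both sides are natural in $R$ and that $\W_{\ell+1}$ preserves the injection into the perfection, both of which you address adequately.
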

\begin{proof}
	Note that the RHS $w$ of \eqref{eqn:TeichmuellerExpansion} is a well-defined element of $\W(A)$, 
	because $p^{-\ell+v_p(\term{i})} \cdot \tbinom{p^\ell}{\term{i}} \in \N$
	by Lemma \ref{lem:DivisibilityMultinomialCoefficient}, $v_p(\term{i}) \le \ell$ and 
	$p^{-v_p(\term{i})} \cdot \term{i} \in \N^s$. We have $[f] = \sum_{i=1}^s [c_i \term{a}^{\alpha_i}]$
	in $\W_1(A)$, so Lemma \ref{lem:PthPowerAndFiltration} implies
	\[
		\Frob^\ell [f] = [f]^{p^\ell} = \bigl( {\textstyle\sum_{i=1}^s [c_i \term{a}^{\alpha_i}]}\bigr)^{p^\ell}
		= \sum_{\abs{\term{i}}=p^\ell} \tbinom{p^\ell}{\term{i}} \cdot
		[c_1 \term{a}^{\alpha_1}]^{i_1} \dotsb [c_s \term{a}^{\alpha_s}]^{i_s} \quad\text{in $\W_{\ell+1}(A)$}.
	\]
	Since $\V\Frob = \Frob\V = p$, we see that this is equal to $\Frob^\ell w$. The injectivity of $\Frob$
	implies $[f] = w$ in $\W_{\ell+1}(A)$.
\end{proof}

\subsection{The de Rham-Witt complex} \label{sec:DeRhamWittComplex}

For this section we refer to~\cite{bib:Il79}.
Let $R$ be a ring. Recall that a {\em differential graded $R$-algebra} ({\em $R$-dga} for short) is a graded $R$-algebra
$M=\bigoplus_{r\ge 0}M^r$ together with an $R$-linear differential $d\colon M^r\to M^{r+1}$ such that
$M$ is graded skew-commutative, i.e., $ab=(-1)^{rs}ba$ for $a\in M^r,b\in M^s$
(in fact, we also assume that $a^2=0$ for $a\in M^{2r+1}$), and
$d$ satisfies: $d\circ d=0$ and the graded Leibniz rule $d(ab)=b\ da+(-1)^ra\ db$ for $a\in M^r$, $b\in M$.
A $\Z$-dga is simply called {\em dga}.
An important example is the $R$-dga $\Om_{A/R}^\cdot:=\bigoplus_{r\ge 0}\Om_{A/R}^r$
together with $d:=\bigoplus_{r\ge 0}d^r$.

\begin{defn} \label{def:DeRhamWittComplex}
	Fix a prime $p$.
	A \emph{de Rham $\V$-pro-complex} ({\em VDR} for short) is a projective
        system $M_{\bullet} = ((M_\ell)_{\ell\ge 1}, \Restr\colon M_{\ell+1} \rightarrow M_\ell)$
	of dga's together with additive homomorphisms
	$(\V\colon M_\ell^r \rightarrow M_{\ell+1}^r)_{r\ge0,\ell\ge 1}$ such that $\Restr\V = \V\Restr$
        and we have
	\begin{enumerate}[(a)]
		\item 
			$M_1^0$ is an $\F_p$-algebra and $M_\ell^0 = \W_\ell(M_1^0)$ with the restriction and Verschiebung maps of Witt rings
			$\Restr\colon M_{\ell+1}^0 \rightarrow M_{\ell}^0$ and $\V\colon M_\ell^0 \rightarrow M_{\ell+1}^0$,

		\item $\V(\omega \, d\eta) = (\V\omega) d\V\eta$ for
			all $\omega \in M_\ell^r, \eta \in M_\ell^s$,
		\item $(\V w) d[a] = \V([a]^{p-1} w) d \V [a]$ for all
			$a \in M_1^0, w \in M_\ell^0$.
	\end{enumerate}
\end{defn}

\cite{bib:Il79} constructs for any $\F_p$-algebra $A$ a functorial de Rham $\V$-pro-complex
$\W_\bullet\Om_A^\cdot$ with $\W_\ell\Om_A^0=\W_\ell(A)$, which is called the
\emph{de Rham-Witt pro-complex of~$A$}. We have a surjection $\Om_{\W_\ell(A)/\W_\ell(\F_p)}^\cdot 
\twoheadrightarrow \W_\ell\Om_A^\cdot$,
which restricts to the identity on $\W_\ell(A)$ and, for $\ell=1$, is an isomorphism
$\Om_{\W_1(A)/\F_p}^\cdot=\Om_{A/\F_p}^\cdot \stackrel{\sim}{\to} \W_1\Om_A^\cdot$.

Like the K\"ahler differentials, $\W_\bullet\Om_A^\cdot$ satisfy properties that 
make it convenient to study algebra extensions.

\begin{lem}[Base change {\cite[Proposition I.1.9.2]{bib:Il79}}] \label{lem:DeRhamWittBaseChange}
	Let $k'/k$ be an extension of perfect fields of characteristic $p$. Let $A$ be
	a $k$-algebra and set $A' := k' \otimes_k A$. Then there is a natural $\W_\ell(k')$-module isomorphism 
	$\W_\ell(k') \otimes_{\W_\ell(k)} \W_\ell\Om_A^r \cong \W_\ell\Om_{A'}^r$ for all $\ell \ge 1$ and $r \ge 0$.
\end{lem}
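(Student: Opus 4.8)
The plan is to prove the base-change isomorphism for the de Rham–Witt pro-complex by reducing it to the already-available base-change isomorphism for Kähler differentials (Lemma~\ref{lem:DeRhamBaseChange}) together with the exactness of flat base change, and then to observe that the de Rham–Witt construction is defined by a universal property that is compatible with such base change. First I would recall that for perfect fields $k'/k$ of characteristic $p$ the rings $\W_\ell(k')$ and $\W_\ell(k)$ are well-behaved: $\W_\ell(k)$ is a local Artinian ring, $k' \otimes_k$ commutes with $\W_\ell$ in the sense that $\W_\ell(k') = \W_\ell(k') \otimes_{\W_\ell(k)} \W_\ell(k)$ is faithfully flat over $\W_\ell(k)$ (this uses perfectness crucially, so that $\W_\ell$ behaves like a smooth base change and $\W_\ell(A') = \W_\ell(k') \otimes_{\W_\ell(k)} \W_\ell(A)$ in degree $0$). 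This degree-$0$ identity is the anchor of the whole argument and can be extracted from the structure theory of Witt rings of perfect rings.

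Next I would set up the comparison map. On one side we have $\W_\ell(k') \otimes_{\W_\ell(k)} \W_\ell\Om_A^\cdot$; I claim this object, equipped with the differential $\id \otimes d$, the Verschiebung $\id \otimes \V$ (after checking it lands correctly using the degree-$0$ identity), and the restriction $\id \otimes \Restr$, is itself a de Rham $\V$-pro-complex over the $\F_p$-algebra $A'$ in the sense of Definition~\ref{def:DeRhamWittComplex}. Conditions (a), (b), (c) all extend along the flat base change $\W_\ell(k) \to \W_\ell(k')$ essentially formally: (a) is exactly the degree-$0$ identity $\W_\ell(A') \cong \W_\ell(k') \otimes_{\W_\ell(k)} \W_\ell(A)$, while (b) and (c) are $\W_\ell(k')$-linear (resp. additive) relations that hold after tensoring because they already hold before and tensoring is exact. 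By the universal property of the de Rham–Witt pro-complex (it is the \emph{initial} de Rham $\V$-pro-complex $M_\bullet$ with $M_1^0 = A'$, as constructed in~\cite{bib:Il79}), there is a unique morphism of de Rham $\V$-pro-complexes $\W_\bullet\Om_{A'}^\cdot \to \W_\ell(k') \otimes_{\W_\ell(k)} \W_\bullet\Om_A^\cdot$ compatible with the identity in degree $0$.

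To see this map is an isomorphism in each bidegree $(\ell,r)$ I would argue by induction on $\ell$, using the exact sequence $0 \to \W_{r}\Om_A^\cdot \xrightarrow{\V^\ell} \W_{\ell+r}\Om_A^\cdot \to \W_\ell\Om_A^\cdot \to 0$-type filtration structure on the de Rham–Witt complex (the graded pieces $\gr^i$ of the standard filtration are built out of $\Om_A^r$ and $\Om_A^{r-1}$), tensoring with the faithfully flat $\W_\ell(k')$, and invoking the Kähler base-change Lemma~\ref{lem:DeRhamBaseChange} on the associated graded. For $\ell = 1$ the claim is precisely Lemma~\ref{lem:DeRhamBaseChange} via the identification $\W_1\Om_A^\cdot = \Om_{A/\F_p}^\cdot$ and $\F_p \to k$ being the ground, reduced to $k' \otimes_k \Om_{A/k}^r \cong \Om_{A'/k'}^r$ (here one also uses that $k$, being perfect, makes $\Om_{A/\F_p}$ and $\Om_{A/k}$ agree up to the contribution of $\Om_{k/\F_p} = 0$). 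A five-lemma / snake-lemma chase then propagates the isomorphism up the filtration and up in $\ell$.

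The main obstacle I anticipate is justifying the degree-$0$ base-change identity $\W_\ell(k') \otimes_{\W_\ell(k)} \W_\ell(A) \cong \W_\ell(A')$ and, more subtly, the faithful flatness of $\W_\ell(k')$ over $\W_\ell(k)$ — this is where perfectness of the fields is indispensable and where a naive argument would fail, since $\W_\ell$ is not an exact functor and does not in general commute with tensor products. Once that input and the compatibility of the standard filtration on $\W_\bullet\Om_A^\cdot$ with flat base change are in hand, the rest is a bootstrapping argument via the universal property and Lemma~\ref{lem:DeRhamBaseChange}; the cleanest route, and the one~\cite{bib:Il79} takes in Proposition~I.1.9.2, is to verify directly that the base-changed object satisfies the defining universal property of $\W_\bullet\Om_{A'}^\cdot$, so that the isomorphism is a formal consequence rather than something to be checked degree by degree.
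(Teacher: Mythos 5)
The paper offers no proof of this lemma at all---it is quoted verbatim from Illusie's Proposition I.1.9.2---and your sketch is essentially a reconstruction of Illusie's own argument: establish the degree-zero identity $\W_\ell(A')\cong\W_\ell(k')\otimes_{\W_\ell(k)}\W_\ell(A)$ for perfect $k\subseteq k'$, check that the base-changed system is a de Rham $\V$-pro-complex over $A'$, and conclude by the universal property (with the inverse map supplied by functoriality of $\W_\bullet\Om^\cdot$ applied to $A\to A'$). Two points to tighten: the Verschiebung on the tensor product cannot be $\id\otimes\V$, since $\V$ is only additive and satisfies $\lambda\V(\omega)=\V(\Frob(\lambda)\omega)$, so the correct map is $\Frob^{-1}\otimes\V$ (well defined because $k'$ is perfect); and your fallback induction on $\ell$ assumes the graded pieces of the standard filtration are built from $\Om_A^r$ and $\Om_A^{r-1}$, which holds for smooth (e.g.\ polynomial) algebras but not for an arbitrary $k$-algebra $A$ as in the statement, so for the general case you should rely on the universal-property route you already identify as the cleanest.
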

\begin{lem}[Localization {\cite[Proposition I.1.11]{bib:Il79}}] \label{lem:DeRhamWittLocalization}
	Let $A$ be an $\F_p$-algebra and let $B = S^{-1}A$ for some
	multiplicatively closed set $S \subset A$. Then there is a natural
	$\W_\ell(B)$-module isomorphism 
	$\W_\ell(B) \otimes_{\W_\ell(A)} \W_\ell\Om^r_A \cong \W_\ell\Om^r_B$ for all $\ell \ge 1$ and $r \ge 0$.
\end{lem}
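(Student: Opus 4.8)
The plan is to identify the left-hand side with the localization of $\W_\bullet\Om_A^\cdot$ along the images of $S$, and then to recognize that localized pro-complex as $\W_\bullet\Om_B^\cdot$ via the universal property of the de Rham--Witt pro-complex. Write $B=S^{-1}A$ and let $T\subset\W_\ell(A)$ be the multiplicative set generated by the Teichm\"uller lifts $[s]$, $s\in S$. There is a canonical $\W_\ell(B)$-linear map $c_r\colon\W_\ell(B)\otimes_{\W_\ell(A)}\W_\ell\Om^r_A\to\W_\ell\Om^r_B$ coming from functoriality of $\W_\ell\Om^\cdot$ applied to $A\to B$, and the task is to prove it bijective. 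The first ingredient is that $\W_\ell$ commutes with localization: $\W_\ell(A)\to\W_\ell(B)$ identifies $\W_\ell(B)$ with $T^{-1}\W_\ell(A)$. Here each $[s]$ is a unit in $\W_\ell(B)$ with inverse $[s^{-1}]$; surjectivity of $T^{-1}\W_\ell(A)\to\W_\ell(B)$ follows by clearing a common denominator $s\in S$ and using $[c]\cdot w=(cw_0,c^pw_1,\dotsc)$ to write a Witt vector over $B$ all of whose coordinates have denominator $s$ as $[s]^{-1}$ times a Witt vector over $A$; injectivity follows because $tw_i=0$ in $A$ for all $i$ forces $[t]w=0$ in $\W_\ell(A)$. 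In particular $\W_\ell(B)$ is flat over $\W_\ell(A)$, so $\W_\ell(B)\otimes_{\W_\ell(A)}\W_\ell\Om^r_A$ is simply the module of fractions $T^{-1}\W_\ell\Om^r_A$, and $c_r$ is its natural comparison map to $\W_\ell\Om^r_B$.

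The second step is to equip the localized system $\widetilde M_\bullet:=\bigl(T^{-1}\W_\ell\Om_A^\cdot\bigr)_{\ell\ge1}$ with the structure of a de Rham $\V$-pro-complex (Definition \ref{def:DeRhamWittComplex}) with $\widetilde M_1^0=B$. The graded algebra product and the exterior differential extend to fractions by the quotient rule; the restriction maps localize since $\Restr(T)=T$; and the Verschiebung $\V\colon\W_\ell\Om^r_A\to\W_{\ell+1}\Om^r_A$ extends to the localization in view of the identity $x\,\V y=\V(\Frob x\cdot y)$ and the fact that $\Frob$ preserves $T$ (as $\Frob[s]=[s]^p$); concretely $\V$ sends $\omega\,[s]^{-n}$ to $[s]^{-np}\,\V(\omega\,[s]^{n(p-1)})$. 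The axioms (a)--(c) of Definition \ref{def:DeRhamWittComplex} then localize from those for $\W_\bullet\Om_A^\cdot$; for (a) one uses $\widetilde M_\ell^0=T^{-1}\W_\ell(A)=\W_\ell(B)=\W_\ell(\widetilde M_1^0)$ from Step~1. The maps $c_r$ assemble into a morphism of de Rham $\V$-pro-complexes $c\colon\widetilde M_\bullet\to\W_\bullet\Om_B^\cdot$ over $\id_B$. By the universal property of the de Rham--Witt pro-complex --- $\W_\bullet\Om^\cdot_{(-)}$ is left adjoint to the functor $M_\bullet\mapsto M_1^0$, which is the structural content of \cite{bib:Il79} --- there is a unique de Rham $\V$-morphism $u\colon\W_\bullet\Om_B^\cdot\to\widetilde M_\bullet$ over $\id_B$. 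Now $c\circ u$ is a de Rham $\V$-endomorphism of $\W_\bullet\Om_B^\cdot$ inducing $\id_B$, hence equals $\id$ by uniqueness; and since the localization morphism $\lambda\colon\W_\bullet\Om_A^\cdot\to\widetilde M_\bullet$ is a de Rham $\V$-morphism over $A\to B$, uniqueness for $\W_\bullet\Om_A^\cdot$ gives $(u\circ c)\circ\lambda=\lambda$, whence $u\circ c=\id$ because every element of $\widetilde M_\bullet$ is of the form $[s]^{-n}\lambda(x)$ and $u\circ c$ is $\W_\ell(B)$-linear. Therefore $c$ is an isomorphism.

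An alternative that avoids the universal property is induction on $\ell$. For $\ell=1$ one has $\W_1\Om_A^\cdot\cong\Om_{A/\F_p}^\cdot$ and $\W_1(A)=A$, $\W_1(B)=B$, so the statement is exactly Lemma \ref{lem:DeRhamLocalization} with $R=\F_p$. For the step from $\ell$ to $\ell+1$ one applies the exact functor $\W_{\ell+1}(B)\otimes_{\W_{\ell+1}(A)}(-)$ to Illusie's short exact sequence $0\to\Fil^\ell\W_{\ell+1}\Om^r_A\to\W_{\ell+1}\Om^r_A\xrightarrow{\Restr}\W_\ell\Om^r_A\to0$ and compares with the analogous sequence for $B$; the quotient terms agree by the inductive hypothesis (using $\W_{\ell+1}(B)\otimes_{\W_{\ell+1}(A)}\W_\ell(A)=\W_\ell(B)$, which follows from Step~1 and the exact sequence $0\to\W_1(A)\xrightarrow{\V^\ell}\W_{\ell+1}(A)\xrightarrow{\Restr}\W_\ell(A)\to0$), and the $\Fil^\ell$-terms agree by the description $\Fil^\ell\W_{\ell+1}\Om^r_A=\V^\ell\Om^r_{A/\F_p}+d\V^\ell\Om^{r-1}_{A/\F_p}$ (and likewise over $B$) together with Lemma \ref{lem:DeRhamLocalization}; a snake-lemma argument then upgrades the isomorphism to level $\ell+1$.

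The main obstacle I anticipate is the careful verification in the second step that the localized system $\widetilde M_\bullet$ genuinely satisfies every axiom of a de Rham $\V$-pro-complex --- in particular that the extended Verschiebung is well defined on fractions and remains compatible with axioms (b) and (c) --- and having available the precise universal property of $\W_\bullet\Om^\cdot$; that universal property is the structural theorem of \cite{bib:Il79} that makes this comparison argument work. In the hands-on route the corresponding delicate point is pinning down $\Fil^\ell\W_{\ell+1}\Om^r$ after base change and localization, i.e.\ checking that the comparison map on the $\Fil^\ell$-terms is an isomorphism, and then running the diagram chase.
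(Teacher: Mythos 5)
The paper does not prove this lemma --- it simply cites Illusie's Proposition I.1.11 --- so there is no in-paper argument to compare against. Your two proposed routes (localize the pro-complex and invoke the universal property of $\W_\bullet\Om^\cdot$; or induct on $\ell$ using the filtration exact sequence) are both standard and are essentially the arguments one finds in the literature, and your treatment of $\W_\ell(S^{-1}A)\cong T^{-1}\W_\ell(A)$ in Step 1 is correct (surjectivity by clearing a common denominator $s=\prod_i s_i$, noting $s_i \mid s^{p^i}$; injectivity because $[t]w$ has coordinates $t^{p^i}w_i$ and $tw_i=0$ already forces these to vanish).

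There is, however, a concrete computational error in the formula you give for the extended Verschiebung. You write that $\V$ sends $\omega[s]^{-n}$ to $[s]^{-np}\,\V\bigl(\omega[s]^{n(p-1)}\bigr)$. The correct formula is $[s]^{-n}\,\V\bigl([s]^{n(p-1)}\omega\bigr)$. Indeed, multiplying the unknown $\V\bigl(\omega[s]^{-n}\bigr)$ by $[s]^n\in\W_{\ell+1}$ and using $x\,\V y=\V(\Frob x\cdot y)$ with $\Frob([s]^n)=[s]^{pn}$ gives
\[
 [s]^n\,\V\bigl(\omega[s]^{-n}\bigr)
 = \V\bigl([s]^{pn}\cdot\omega[s]^{-n}\bigr)
 = \V\bigl([s]^{n(p-1)}\omega\bigr),
\]
so the denominator is $[s]^n$, not $[s]^{np}$. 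Your version fails the obvious sanity check: for $\omega=[s]^n\eta$ with $\eta\in\W_\ell\Om^r_A$ it returns $[s]^{-n(p-1)}\V\eta$ instead of $\V\eta$. This is not fatal to the strategy --- once corrected, well-definedness on fractions and compatibility with axioms (b)--(c) can be checked as you indicate --- but as written that step is wrong, and since the whole second paragraph hinges on $\widetilde M_\bullet$ genuinely being a VDR, the formula needs to be fixed before the rest of the argument (in particular, that the comparison map $c$ is a VDR-morphism, and then $c\circ u=\mathrm{id}$, $u\circ c=\mathrm{id}$ by uniqueness in Illusie's Th\'eor\`eme I.1.3) goes through. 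The remainder, including the inductive alternative via the sequences $0\to\Fil^\ell\W_{\ell+1}\Om^r\to\W_{\ell+1}\Om^r\to\W_\ell\Om^r\to 0$ and flatness of $\W_{\ell+1}(B)$ over $\W_{\ell+1}(A)$, is sound in outline, though you correctly flag that pinning down the $\Fil^\ell$-comparison is the delicate point there.
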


\begin{lem}[Separable extension] \label{lem:DeRhamWittFiniteSeparableExtension}
	Let $L/K$ be a finite separable field extension of characteristic $p$.
	Then there is a natural $\W_\ell(L)$-module isomorphism
	$\W_\ell(L) \otimes_{\W_\ell(K)} \W_\ell\Om^r_K \cong \W_\ell\Om^r_L$ for all $\ell \ge 1$ and $r \ge 0$.
\end{lem}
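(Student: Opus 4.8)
The plan is to deduce the separable case from the base-change result Lemma~\ref{lem:DeRhamWittBaseChange} together with the localization result Lemma~\ref{lem:DeRhamWittLocalization}, by realizing a finite separable field extension $L/K$ étale-locally as a localization of a base change along a perfect field extension. First I would reduce to the case where $K$ is \emph{not} perfect (if $K$ is perfect, then so is $L$, since $L/K$ is finite separable, and the claim is exactly Lemma~\ref{lem:DeRhamWittBaseChange} for the extension $K'=L$ of perfect fields — wait, that isn't a field extension in the ring-theoretic sense unless... let me instead argue directly). More robustly: since $L/K$ is finite separable, the primitive element theorem gives $L = K[\theta] = K[x]/(g)$ for some monic separable irreducible $g \in K[x]$. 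Let $k_0$ be the prime field $\F_p$ and write $K$ as a localization of a finitely generated $\F_p$-algebra; the key point is that $\W_\ell$ and $\W_\ell\Om^\cdot$ commute with filtered colimits, so one may assume $K$ (and hence $L$) finitely generated over a \emph{perfect} subfield $k$ over which $L/K$ is already defined and separable. Then I would choose a separating transcendence basis so that $K$ is a localization of $k[\term{t}]$ and $L$ is a localization of $k[\term{t}][x]/(g)$ with $\partial_x g$ invertible after localization.

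The main technical step is then this: the étale $k[\term{t}]$-algebra $C := k[\term{t}][x]/(g)$, with $\partial_x g$ inverted, is (after a further localization) of the form obtained from $k'[\term{t}]$ by base change and localization, where $k'$ is a suitable finite separable — hence, since $k$ is perfect, trivial? No — the cleaner route avoids this. Instead I would invoke the \emph{general} statement from \cite{bib:Il79}: the de Rham--Witt pro-complex satisfies étale base change, i.e.\ for an étale morphism $A \to B$ of $\F_p$-algebras one has $\W_\ell(B) \otimes_{\W_\ell(A)} \W_\ell\Om_A^r \xrightarrow{\sim} \W_\ell\Om_B^r$ (this is Proposition I.1.14 of Illusie, and combines Lemmas~\ref{lem:DeRhamWittBaseChange} and~\ref{lem:DeRhamWittLocalization} as special cases). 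A finite separable extension of fields $L/K$ corresponds, after writing $K$ as a localization of a finitely generated domain and spreading out, to an étale morphism; more precisely $\Spec L \to \Spec K$ is étale, and the isomorphism $\W_\ell(L) \otimes_{\W_\ell(K)} \W_\ell\Om_K^r \cong \W_\ell\Om_L^r$ follows by taking the colimit of the étale base-change isomorphisms over the localizations spreading out the extension. Thus, once étale base change is in hand, the lemma is essentially immediate by a limit argument.

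For the limit argument: write $L = K[x]/(g)$, pick a finitely generated $\F_p$-subalgebra $A_0 \subseteq K$ containing the coefficients of $g$ and over which $\partial_x g$ becomes a unit after inverting one element $s$; set $B_0 := (A_0)_s[x]/(g)$, an étale $(A_0)_s$-algebra. Then $K = \varinjlim A_\lambda$ and $L = \varinjlim B_\lambda$ over a filtered system of such subalgebras, and each $A_\lambda \to B_\lambda$ is étale. Since $\W_\ell(-)$ and $\W_\ell\Om_{(-)}^r$ commute with filtered colimits of rings (the Witt coordinates are given by finitely many polynomials, and Kähler differentials and their quotients likewise commute with colimits), applying étale base change at each stage and passing to the colimit yields the desired natural isomorphism over $\W_\ell(K) = \varinjlim \W_\ell(A_\lambda)$.

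\textbf{The main obstacle} I anticipate is establishing (or citing in the right form) the étale base-change property for $\W_\bullet\Om^\cdot$ when the base is not a field — Lemmas~\ref{lem:DeRhamWittBaseChange} and~\ref{lem:DeRhamWittLocalization} as stated cover only perfect-field base change and localization, and one genuinely needs the étale statement for a finitely generated $\F_p$-algebra as the base. This is available in \cite{bib:Il79} (Proposition I.1.14), so the work is really in the bookkeeping of the spreading-out and colimit argument, and in checking that the resulting isomorphism is the one sending $b \otimes da_1 \wedge \dotsb \wedge da_r$ to $b\,da_1 \wedge \dotsb \wedge da_r$ — that is, that it is compatible with the natural maps at finite level — which follows from the naturality clauses in those two lemmas.
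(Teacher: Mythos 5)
Your proposal ultimately lands on the same key citation as the paper — Proposition I.1.14 of \cite{bib:Il79}, étale base change for the de Rham--Witt pro-complex — but it then wraps an unnecessary spreading-out and filtered-colimit argument around it. The "main obstacle" you flag at the end is in fact not an obstacle: the morphism $K \to L$ is \emph{already} étale on the nose. Since $L/K$ is finite, $L$ is a finite-dimensional $K$-vector space and hence a free, in particular flat, $K$-module; and a finite separable field extension is unramified by definition. So Proposition I.1.14 applies directly to $K \to L$, with no need to reduce to a finitely generated $\F_p$-algebra base, spread out, or pass to a colimit. That is exactly the paper's proof, and it is three lines long.

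Your colimit scaffolding would be a reasonable fallback \emph{if} Illusie's Proposition I.1.14 carried a finiteness hypothesis that excluded arbitrary fields, but the paper does not treat that as an issue, and indeed the usual formulation of étale base change for $\W_\bullet\Om^\cdot$ imposes no such restriction. The detours you start down — primitive element theorem, separating transcendence bases, trying to realize $L$ as a localization of a base change from a perfect subfield — are all dead weight once you notice that "finite separable" is already "étale." So: right citation, right idea, but a good deal of avoidable machinery; strip it down to the one-line observation that $K \to L$ is flat and unramified and you have the paper's argument.
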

\begin{proof}
Proposition I.1.14 of~\cite{bib:Il79} states this for an \'{e}tale morphism $K\rightarrow L$,
which means flat and unramified. A vector space over a field is immediately flat,
and a finite separable field extension is unramified by definition (see e.g.~\cite{bib:Milne80}).
\end{proof}
\begin{rem}
The proofs in~\cite{bib:Il79} show that the isomorphisms of Lemmas~\ref{lem:DeRhamWittBaseChange}
-- \ref{lem:DeRhamWittFiniteSeparableExtension} are in fact isomorphisms of VDR's
with appropriately defined VDR-structures.
\end{rem}

According to~\cite[Th\'{e}or\`{e}me I.2.17]{bib:Il79}, the morphism of projective
systems of rings $\Restr\Frob=\Frob \Restr\colon\W_{\bullet}(A)\to\W_{\bullet-1}(A)$
uniquely extends to a morphism of projective systems of graded algebras
$\Frob\colon \W_\bullet\Om_A^\cdot\to\W_{\bullet-1}\Om_A^\cdot$ such that
$\Frob d[a]_{\le\ell+1}=[a]_{\le\ell}^{p-1} d[a]_{\le\ell}$ for all $a\in A$, and
$\Frob d\V=d$ in $\W_\ell\Om_A^1$ for all $\ell\ge 1$.
Define the {\em canonical filtration} as
$\Fil^\ell\W_m\Om_A^\cdot:=\ker\big(\Restr^{m-\ell}\colon\W_m\Om_A^\cdot\to\W_{\ell}\Om_A^\cdot)$ for $\ell,m\ge 0$.

Now consider a function field $L:=k(\term{x}_n)$ over a perfect field $k$. The following fact, proven in 
\S\ref{app:Preliminaries}, is quite useful for our differential calculations.

\begin{lem}[Frobenius kernel]\label{lem:kerFrob}
We have $\ker\big(\W_{\ell+i}\Om_L^r\stackrel{\Frob^i}{\longrightarrow}\W_{\ell}\Om_L^r\big)\subseteq\Fil^\ell\W_{\ell+i}\Om_L^r$.
\end{lem}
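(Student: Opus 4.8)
I want to show that for a function field $L = k(\term{x}_n)$ over a perfect field $k$, any $\omega \in \W_{\ell+i}\Om_L^r$ killed by $\Frob^i$ lies in $\Fil^\ell\W_{\ell+i}\Om_L^r$, i.e.\ restricts to $0$ in $\W_\ell\Om_L^r$. The natural strategy is to reduce to the one-step case $i=1$ and then induct; the morphism of pro-systems $\Frob\colon\W_\bullet\Om_L^\cdot\to\W_{\bullet-1}\Om_L^\cdot$ together with $\Restr\Frob=\Frob\Restr$ lets one telescope. For $i=1$ the claim is: $\ker(\Frob\colon\W_{\ell+1}\Om_L^r\to\W_\ell\Om_L^r)\subseteq\ker(\Restr\colon\W_{\ell+1}\Om_L^r\to\W_\ell\Om_L^r)$.

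First I would treat the base case $\ell=0$: here $\Fil^0\W_{\ell+i}\Om_L^r$ is the whole module (since $\W_0\Om_L^r=0$, the map $\Restr^{i}$ to it is zero), so there is nothing to prove; this anchors the induction on $\ell$. For the inductive step I would use the exact sequence from the de Rham--Witt formalism relating $\W_{\ell+1}\Om_L^r$, $\W_\ell\Om_L^r$ and the piece added by Verschiebung, namely that $\Fil^\ell\W_{\ell+1}\Om_L^r$ is generated by elements of the form $\V^\ell x$ and $d\V^\ell y$ (for $x\in\W_1\Om_L^r=\Om_L^r$, $y\in\Om_L^{r-1}$) — this is the standard description of the canonical filtration in Illusie's $\S$I.3, and it interacts with $\Frob$ via $\Frob\V = p$ and $\Frob d\V = d$. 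The point is to compare $\ker\Frob$ with $\Fil^\ell$ by a dimension/structure count: over the field $L$, the relevant graded pieces $\W_\bullet\Om_L^\cdot$ are governed by the fact that $L$ is a finitely generated separable (indeed, since $k$ is perfect, all of $L$) extension, so by Lemma~\ref{lem:DeRhamWittFiniteSeparableExtension} and Lemma~\ref{lem:DeRhamWittLocalization} we can base-change the computation to the polynomial ring $k[\term{x}_n]$ and then to $\ol{\F}_p[\term{x}_n]$, where $\W_\ell\Om^\cdot$ has Illusie's explicit description. In that explicit model one can write any element in standard form and directly check that $\Frob$-torsion forces the lower Witt-coordinates (equivalently the $\Restr$-image) to vanish.

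Concretely, after reducing to $L = \ol{\F}_p(\term{x}_n)$ I would invoke the realization (used later in the paper via the subalgebra $\E\subset\Om_{B'}^\cdot$) in which $\W_\ell\Om_L^r$ embeds into differential forms over the perfection $B' = \bigcup_i\W(k)[\term{x}_n^{p^{-i}}]$, with the filtration $\Fil^\ell$ corresponding to $\Fil^\ell\E$ and $\Frob$ acting compatibly. There $\Frob$ is injective on $\E$ (it is induced by an automorphism of $B'$ after inverting $p$), but $\W_\ell\Om_L^r$ is a \emph{quotient} $\E/\Fil^\ell\E$, and the claim becomes the set-theoretic statement $\Frob(\omega)\in\Fil^{\ell}\E^r \Rightarrow \omega\in\Fil^{\ell}\E^r$ — which holds because $\Fil^\bullet\E$ is stable under $\Frob$ and $\Frob$ on the associated graded shifts the filtration degree by exactly one in the right direction (so $\Frob$ restricted to each graded quotient is injective). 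Telescoping $i$ times then yields the full statement.

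\emph{The main obstacle} I anticipate is not the telescoping but making the comparison of $\ker\Frob$ with $\Fil^\ell$ genuinely precise: one must know that $\Frob$ does not accidentally kill elements \emph{not} in the filtration, which fails for a general $\F_p$-algebra (e.g.\ $[a]$ with $a$ not a $p$-th power) and is exactly where perfectness of $k$ and the function-field hypothesis are essential — over a perfect $L$ every element is a $p$-th power, so $\Frob$ on $\W_\bullet(L)$ is bijective and the only $\Frob$-torsion comes from the $d\V$-type generators of the filtration. I would therefore isolate as a lemma the case $r=0$ (where the statement reduces to $\ker(\Frob\colon\W_{\ell+i}(L)\to\W_\ell(L))=\V^\ell\W_{\ell+i}(L)=\Fil^\ell$, immediate from $\Frob$ bijectivity and $\V\Frob=p$), then bootstrap to $r\ge 1$ using the graded-Leibniz interaction $\Frob\,d\V=d$ and $\Frob\,d[a]=[a]^{p-1}d[a]$, reducing an arbitrary form to a $\W_\bullet(L)$-linear combination of $d[x_{j_1}]\wedge\cdots\wedge d[x_{j_r}]$ (using that $dx_1,\dots,dx_n$ form a basis in the separable case) and applying the $r=0$ result coordinatewise.
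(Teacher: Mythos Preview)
The paper's proof is a two-liner that your plan misses entirely: from $\Frob^i\omega=0$ in $\W_\ell\Om_L^r$, apply $\V^i\colon\W_\ell\Om_L^r\to\W_{\ell+i}\Om_L^r$ and use the identity $\V^i\Frob^i=p^i$ (valid on all of $\W_\bullet\Om_L^\cdot$, not only in degree zero) to obtain $p^i\omega=0$ in $\W_{\ell+i}\Om_L^r$. Then \cite[Proposition~I.3.4]{bib:Il79}, a $p$-torsion statement for the de Rham--Witt complex, gives $\omega\in\Fil^\ell\W_{\ell+i}\Om_L^r$ directly. No induction on $\ell$ or $i$, no explicit model, no coordinate computation.

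Your route through the realization $\E$ does contain one correct kernel: in the explicit model for $A=k[\term{x}]$, the implication $\Frob(\omega)\in\Fil^\ell\E^r\Rightarrow\omega\in\Fil^\ell\E^r$ is, after the short computation $\Frob^{-1}\Fil^\ell\E=p^{-1}\Fil^{\ell+1}\E$ (from $\V=p\Frob^{-1}$ and $d\Frob=p\Frob d$), exactly the injectivity of $m_p\colon\E_\ell\to\E_{\ell+1}$, i.e.\ Lemma~\ref{lem:DeRhamWittPTorsion}. But your stated justifications for it are wrong. The remark ``$\Frob$ shifts the filtration by one'' gives $\Frob(\Fil^\ell)\subseteq\Fil^{\ell-1}$, which is the opposite direction from what is needed. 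The ``bootstrap to $r\ge1$'' by writing an arbitrary form as a $\W_\bullet(L)$-linear combination of $d[x_{j_1}]\wedge\cdots\wedge d[x_{j_r}]$ fails, since $\W_\ell\Om_L^r$ is \emph{not} free over $\W_\ell(L)$ on those wedges --- elements such as $d\V^j[a]$ lie outside that span, which is precisely why the de Rham--Witt complex is larger than $\Om_{\W_\ell(L)}^\cdot$. You also claim $L=k(\term{x}_n)$ is perfect; it is not (the $x_j$ are not $p$-th powers), though your $r=0$ case survives because Frobenius is injective on any domain. Finally, even granting the claim for the polynomial ring $A$, you still need it for the field $L$, and the tensor-product isomorphism of Lemma~\ref{lem:DeRhamWittLocalization} does not by itself transport a $p$-torsion statement; this is exactly what \cite[Proposition~I.3.4]{bib:Il79} supplies beyond the polynomial case, and the $\V\Frob=p$ reduction lets the paper invoke it in one stroke.
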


\subsection{The de Rham-Witt complex of a polynomial ring} \label{sec:DeRhamWittComplexPolynomialRing}

Let $k/\F_p$ be an algebraic extension and
consider the polynomial ring $A := k[\term{x}] = k[\term{x}_n]$.
In~\cite[\S I.2]{bib:Il79} there is an explicit description of $\W_\bullet\Om_A^\cdot$ in the
 case $k = \F_p$.
We generalize this construction by invoking Lemma \ref{lem:DeRhamWittBaseChange}
(note that $k$ is perfect).

Denote by $K := Q(\W(k))$ the quotient field of the Witt ring, and consider
the rings $B := \W(k)[\term{x}]$ and 
$C := \bigcup_{i \ge 0} K[\term{x}^{p^{-i}}]$. 
For $r \ge 0$, we write $\Om_B^r := \Om_{B/\W(k)}^r$ and $\Om_C^r := \Om_{C/K}^r$.
Since the universal derivation $d \colon C \rightarrow \Om_C^1$ satisfies
\[
	d\bigl(x_j^{p^{-i}}\bigr) = p^{-i} x_j^{p^{-i}} dx_j/x_j\quad\text{for all}\quad i \ge 0,\ j\in[n],
\]
every differential form $\omega \in \Om_C^r$ can be written uniquely as
\begin{equation}\label{eq:generalForm}
	\omega = \sideset{}{_I}\sum c_I \cdot {\textstyle \bigwedge_{j \in I} d \log x_j},
\end{equation}
where the sum is over all $I \in \tbinom{[n]}{r}$, the $c_I \in C$ are divisible
by $(\prod_{j \in I} x_j)^{p^{-s}}$ for some $s \ge 0$, and $d \log x_j := dx_j/x_j$.
The $c_I$ in~\eqref{eq:generalForm} are called \emph{coordinates} of~$\omega$. 
A form $\omega$ is called {\em integral} if all its coordinates 
have coefficients in $\W(k)$.
We define
\[
	\E^r := \E_A^r := \{ \omega \in \Om_C^r \,\vert\; \text{both $\omega$ and $d\omega$ are integral}\}.
\]
Then, $\E:=\bigoplus_{r\ge 0}\E^r$ is a differential graded subalgebra of $\Om_C$ containing $\Om_B$.

Let $\Frob\colon C \rightarrow C$ be the unique $\Q_p$-algebra automorphism
extending the Frobenius of $\W(k)$ defined by~\eqref{eq:frobWittRing} and
sending $x_j^{p^{-i}}$ to $x_j^{p^{-i+1}}$.
The map $\Frob$ extends to an automorphism $\Frob\colon \Omega_C^r \rightarrow \Omega_C^r$ of dga's
by acting on the coordinates of the differential forms (keeping $d\log x_j$ fixed), and we define 
$\V\colon \Omega_C^r \rightarrow \Omega_C^r$
by $\V := p \Frob^{-1}$. We have $d \Frob = p \Frob d$ and $\V d = p d \V$, in particular,
$\E$ is closed under $\Frob$ and $\V$.

We define a filtration $\E = \Fil^0\E \supset \Fil^1\E \supset \dotsb$ of differential
graded ideals by
\[
	\Fil^\ell \E^r := \V^\ell \E^r + d \V^\ell \E^{r-1}\quad\text{for}\quad\ell,r \ge 0,
\]
and hence obtain a projective system $\E_\bullet$ of dga's
\[
	\E_\ell := \E/\Fil^\ell \E,\quad \Restr\colon\E_{\ell+1}\twoheadrightarrow\E_\ell.
\]

\begin{thm}[Explicit forms] \label{thm:DeRhamWittOfPolynomialRing}
	The system $\E_\bullet$ is a VDR, isomorphic to $\W_\bullet\Om_A^\cdot$.
\end{thm}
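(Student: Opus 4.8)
The plan is to verify that the explicitly constructed system $\E_\bullet$ satisfies all the defining axioms of a de Rham $\V$-pro-complex (VDR), and then to invoke the universal property of the de Rham-Witt pro-complex $\W_\bullet\Om_A^\cdot$: since the latter is, by Illusie's construction, the \emph{initial} VDR with $M_1^0 = A$, there is a canonical morphism $\W_\bullet\Om_A^\cdot \to \E_\bullet$, and we must check it is an isomorphism. First I would establish that $\E_\bullet$ is a projective system of dga's: each $\E_\ell = \E/\Fil^\ell\E$ inherits a dga structure because $\Fil^\ell\E$ is a differential graded ideal (which is already asserted in the setup), and the restriction maps $\E_{\ell+1}\twoheadrightarrow\E_\ell$ are dga morphisms since $\Fil^\ell\E \supseteq \Fil^{\ell+1}\E$. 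The Verschiebung $\V = p\Frob^{-1}$ on $\Om_C$ descends to additive maps $\V\colon\E_\ell^r\to\E_{\ell+1}^r$ because $\V(\Fil^\ell\E) \subseteq \V^{\ell+1}\E + \V\,d\V^\ell\E \subseteq \V^{\ell+1}\E + d\V^{\ell+1}\E = \Fil^{\ell+1}\E$ (using $\V d = p\,d\V$ and $p\Frob^{-1} = \V$), and one checks $\Restr\V=\V\Restr$ directly.

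Next I would verify the three axioms (a)--(c) of Definition~\ref{def:DeRhamWittComplex}. For (a): in degree $0$, $\E^0$ is a subalgebra of $C$, and one identifies $\E^0/\Fil^\ell\E^0$ with $\W_\ell(A)$ via the Teichm\"uller-type expansion; concretely, $\E^0$ is Illusie's ring that is ``almost isomorphic'' to $\W(A)$, and the map $a\mapsto$ (the coordinate-$\{\,\}$ part of a Teichm\"uller representative) realizes $\W(A)\xrightarrow{\sim}\E^0$, carrying the Witt-vector $\Restr$ and $\V$ to the maps just defined (this uses $\V\Frob=\Frob\V=p$ on $\E^0$, mirroring the Witt-ring identity). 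For (b): the identity $\V(\omega\,d\eta) = (\V\omega)\,d(\V\eta)$ follows from $\V = p\Frob^{-1}$ together with $\Frob^{-1}d = p^{-1}d\Frob^{-1}$, since then $(\V\omega)d(\V\eta) = p\Frob^{-1}\omega \cdot d(p\Frob^{-1}\eta) = p\Frob^{-1}\omega\cdot p\,d\Frob^{-1}\eta = p\Frob^{-1}\omega\cdot\Frob^{-1}(d\eta) = p\Frob^{-1}(\omega\,d\eta) = \V(\omega\,d\eta)$, using multiplicativity of $\Frob^{-1}$ on the dga $\Om_C$. For (c): $(\V w)\,d[a] = \V([a]^{p-1}w)\,d\V[a]$ reduces, after applying $\Frob$ and using $\Frob\V=p$, $\Frob d[a]_{\le\ell+1} = [a]^{p-1}d[a]_{\le\ell}$, and the multiplicativity of $\Frob$, to an identity already present in Illusie's $\F_p$ case; the key relation is $\Frob\,d\log x_j = d\log x_j$, which makes the Teichm\"uller lift $[x_j]$ compatible with $d\log x_j = dx_j/x_j$.

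The heart of the argument is then surjectivity and injectivity of the canonical map $\varphi\colon\W_\bullet\Om_A^\cdot\to\E_\bullet$. For $k=\F_p$ this is exactly Illusie's \cite[\S I.2]{bib:Il79}, so the remaining work is the base change from $\F_p$ to a general algebraic extension $k/\F_p$. Here I would apply Lemma~\ref{lem:DeRhamWittBaseChange}: writing $A = k\otimes_{\F_p}\F_p[\term{x}]$, we get $\W_\ell\Om_A^r \cong \W_\ell(k)\otimes_{\W_\ell(\F_p)}\W_\ell\Om_{\F_p[\term{x}]}^r$, and one checks that the right-hand side matches $\E_\ell^r$ computed with $B = \W(k)[\term{x}]$, $C = \bigcup K[\term{x}^{p^{-i}}]$ --- that is, the explicit $\E$-construction commutes with the base change $\W(\F_p)\to\W(k)$, because the coordinates in~\eqref{eq:generalForm} and the integrality condition are defined coefficient-wise over $\W(k)$. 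For an \emph{infinite} algebraic extension $k$ one passes to the limit over finite subextensions, using that $\W_\ell(-)$, $\Om^r$, and the $\E$-construction all commute with filtered colimits of rings. The main obstacle, and the step requiring the most care, is precisely this compatibility of the explicit $\E$-model with base change: one must confirm that ``$\omega$ and $d\omega$ integral'' is the correct condition cutting out $\W(k)\otimes_{\W(\F_p)}\E_{\F_p}$ inside $\Om_C$, i.e.\ that no spurious integral forms appear and none are lost when enlarging the residue field --- this is where Illusie's careful analysis of the coordinates and of the filtration $\Fil^\ell\E$ is invoked, now tensored up along the flat (indeed free) extension $\W(\F_p)\to\W(k)$.
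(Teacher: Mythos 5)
Your proposal is correct and follows essentially the same route as the paper: the case $k=\F_p$ is Illusie's Th\'eor\`eme I.2.5, and the general algebraic extension $k/\F_p$ is handled by base change via Lemma~\ref{lem:DeRhamWittBaseChange}, which transports the VDR structure (in particular the Verschiebung) to $\E_\bullet$; you simply spell out the VDR-axiom verification and the compatibility of the explicit model with base change that the paper leaves implicit. (One small slip: the auxiliary identity should read $\Frob^{-1}d = p\,d\Frob^{-1}$, i.e.\ $d\Frob^{-1}=p^{-1}\Frob^{-1}d$, not $\Frob^{-1}d = p^{-1}d\Frob^{-1}$ --- your subsequent computation in fact uses the correct version.)
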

\begin{proof}
The case $k=\F_p$ follows from \cite[Th\'{e}or\`{e}me I.2.5]{bib:Il79}. Lemma~\ref{lem:DeRhamWittBaseChange}
yields $\W_\bullet\Om_A^\cdot \cong \W_\bullet(k) \otimes_{\W(\F_p)} \W_\bullet\Om_{\F_p[\term{x}]}^\cdot$
as VDR's. In particular, the Verschiebung restricts to the
Verschiebung of $\W_\bullet(A)$, so it coincides with the map $\V$ defined above.
\end{proof}

\begin{lem}[{\cite[Corollaire I.2.13]{bib:Il79}}] \label{lem:DeRhamWittPTorsion}
	Multiplication with $p$ in $E$ induces for all $\ell \ge 0$ a well-defined injective map
        $m_p\colon \E_\ell \rightarrow \E_{\ell+1}$ with $m_p\circ\Restr=p$.
\end{lem}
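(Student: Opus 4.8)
The plan is to exploit the structure of the filtration $\Fil^\ell\E^r = \V^\ell\E^r + d\V^\ell\E^{r-1}$ together with the identities $\V d = p\,d\V$ and $\V = p\Frob^{-1}$, reducing everything to the fact that $\Frob$ is an automorphism of $\Om_C^\cdot$ and that $C$ is a torsion-free $\W(k)$-module (indeed an integral domain of characteristic zero). First I would verify that multiplication by $p$ on $\E = \bigoplus_r \E^r$ carries $\Fil^\ell\E$ into $\Fil^{\ell+1}\E$, so that $m_p$ descends to a map $\E_\ell = \E/\Fil^\ell\E \to \E/\Fil^{\ell+1}\E = \E_{\ell+1}$. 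For the degree-$r$ part this amounts to checking $p\V^\ell\E^r \subseteq \Fil^{\ell+1}\E^r$ and $p\,d\V^\ell\E^{r-1} \subseteq \Fil^{\ell+1}\E^r$. The first is immediate since $p\V^\ell\omega = \V^{\ell+1}\Frob\omega \in \V^{\ell+1}\E^r$ (using $p\V = \V^2\Frob$, equivalently $p = \V\Frob$, and that $\E$ is $\Frob$-stable). The second follows similarly: $p\,d\V^\ell\eta = d(p\V^\ell\eta) = d\V^{\ell+1}\Frob\eta \in d\V^{\ell+1}\E^{r-1}$. The relation $m_p\circ\Restr = p$ as maps $\E_{\ell+1}\to\E_{\ell+1}$ is then a tautology, since $\Restr$ is the reduction map and $m_p$ is induced by honest multiplication by $p$ on $\E$, which of course commutes with passing to quotients.

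The substance of the lemma is injectivity of $m_p\colon\E_\ell\to\E_{\ell+1}$. Unwinding the definitions, this says: if $\omega\in\E^r$ satisfies $p\omega\in\Fil^{\ell+1}\E^r = \V^{\ell+1}\E^r + d\V^{\ell+1}\E^{r-1}$, then $\omega\in\Fil^\ell\E^r = \V^\ell\E^r + d\V^\ell\E^{r-1}$. So write $p\omega = \V^{\ell+1}\alpha + d\V^{\ell+1}\beta$ with $\alpha\in\E^r$, $\beta\in\E^{r-1}$. Using $p = \V\Frob$ and $\V d = p\,d\V = \V\Frob d\V$, I would massage the right-hand side into the form $\V$ applied to something: $\V^{\ell+1}\alpha = \V(\V^\ell\alpha)$ and $d\V^{\ell+1}\beta$ — here one uses $d\V^{\ell+1}\beta = \frac{1}{p}\V d\V^{\ell}\beta$? — more carefully, from $\V d = p\,d\V$ one gets $d\V = \frac1p\V d = \Frob^{-1}d$ on the relevant forms, hence $d\V^{\ell+1}\beta = \Frob^{-1}d\V^\ell\beta$, and iterating, $d\V^{\ell+1}\beta = \V\big(\Frob\,d\V^{\ell+1}\beta\big)/p\cdot p = \V\big(d\V^\ell\beta'\big)$ for a suitable $\beta'$ obtained by adjusting with powers of $\Frob$ (legitimate since $\Frob$ is invertible on $\Om_C$). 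Thus $p\omega = \V\gamma$ for some $\gamma\in\Fil^\ell\E^r$; but $p\omega = \V\Frob\omega$, so $\V(\Frob\omega - \gamma) = 0$. Since $\V = p\Frob^{-1}$ is injective on $\Om_C^r$ (as $C$ is torsion-free over $\W(k)$ and $\Frob$ is an automorphism), we conclude $\Frob\omega = \gamma\in\Fil^\ell\E^r$, and applying $\Frob^{-1}$ (which preserves the filtration, being compatible with $\V$) gives $\omega\in\Fil^\ell\E^r$, as desired.

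The step I expect to be the main obstacle is the bookkeeping in that last paragraph: correctly commuting $d$ past $\V$ using $\V d = p\,d\V$ and absorbing the resulting powers of $p$ and $\Frob^{\pm1}$ without losing integrality — i.e.\ making sure that the elements one produces ($\beta'$, $\gamma$, etc.) genuinely lie in $\E^\bullet$ and not merely in $\Om_C^\bullet$. This is exactly the point where one must use that $\E$ is a \emph{differential graded subalgebra} of $\Om_C$ closed under both $\Frob$ and $\V$, so that all the intermediate forms stay integral with integral differentials. Once that is granted, the argument is just linear algebra over the torsion-free domain $C$ driven by the single relation $p = \V\Frob$ and the injectivity of $\Frob$; I would cite \cite[Corollaire I.2.13]{bib:Il79} for the analogous statement over $\F_p$ and note that the base change of Lemma~\ref{lem:DeRhamWittBaseChange} transports it to general algebraic $k/\F_p$, exactly as in the proof of Theorem~\ref{thm:DeRhamWittOfPolynomialRing}.
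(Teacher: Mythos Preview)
The paper gives no proof of this lemma at all; it is stated with a bare citation to \cite[Corollaire I.2.13]{bib:Il79}. Your fallback plan in the last sentence---cite Illusie for $k=\F_p$ and transport via Lemma~\ref{lem:DeRhamWittBaseChange} as in Theorem~\ref{thm:DeRhamWittOfPolynomialRing}---is therefore exactly what the paper (implicitly) does, and is a complete solution.

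Your direct argument for well-definedness and for $m_p\circ\Restr=p$ is correct. The direct injectivity argument, however, has the gap you yourself anticipate. Writing $p\omega=\V^{\ell+1}\alpha+d\V^{\ell+1}\beta$ and trying to peel off a $\V$ to land in $\Fil^\ell\E$ runs into exactly the integrality problem you flag: from $d\V^{\ell+1}\beta=\Frob^{-1}d\V^\ell\beta$ you would want $\Frob^{-1}d\V^\ell\beta$ to lie in $\V(\Fil^\ell\E^r)$ or in $p\cdot\Fil^\ell\E^r$, and the natural candidate $\beta'=\Frob^{-1}\beta$ need not satisfy $d\beta'\in\E$ (only $p\,d\beta'$ is guaranteed integral). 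The decomposition $\Fil^\ell\E=\V^\ell\E+d\V^\ell\E$ is not direct, so a naive choice of $(\alpha',\beta')$ from $(\alpha,\beta)$ can fail to stay in $\E$.

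Illusie's actual argument sidesteps this entirely by using the graded description of the filtration, recorded in this paper as Lemma~\ref{lem:GradedComponentsFiltration}: one has $(\Fil^{\ell}\E)_\beta=p^{\nu(\ell,\beta)}(\E)_\beta$ for each weight $\beta$, and a one-line case check shows $\nu(\ell+1,\beta)\le\nu(\ell,\beta)+1$ with equality unless $\nu(\ell+1,\beta)=0$. Since $(\E)_\beta\subset\Om_C^r$ is torsion-free over $\W(k)$, $p(\omega)_\beta\in p^{\nu(\ell+1,\beta)}(\E)_\beta$ immediately gives $(\omega)_\beta\in p^{\nu(\ell,\beta)}(\E)_\beta$. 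If you want to include a proof rather than a citation, this graded route is both short and gap-free.
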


\section{The classical Jacobian criterion} \label{sec:SeparabilityClassicalJacobianCriterion}

Consider a polynomial ring $k[\term{x}] = k[\term{x}_n]$. In this section we
characterize the zeroness of the Jacobian differential which, combined with Lemma 
\ref{lem:RelationshipJacobianDifferentialJacobianMatrix}, gives a criterion on the Jacobian
matrix. The proofs for this section can be found in \S\ref{app:ProofsClassicalJacobianCriterion}.

\begin{thm}[Jacobian criterion -- abstract] \label{thm:JacobianCriterion}
	Let $\term{f}_r \in k[\term{x}]$ be polynomials. Assume that $k(\term{x})$ is 
	a separable extension of $k(\term{f}_r)$.
	Then, $\term{f}_r$ are algebraically independent over $k$ if and only if
	$\J_{k[\term{x}]/k}(\term{f}_r) \neq 0$.
\end{thm}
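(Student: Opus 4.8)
The plan is to prove both directions by relating the Jacobian differential $\J_{k[\term{x}]/k}(\term{f}_r)$ to the rank of the Jacobian matrix over the function field, using Lemma~\ref{lem:RelationshipJacobianDifferentialJacobianMatrix}, and then passing to a suitable (separable) extension where the classical theory of differentials of field extensions applies cleanly. Write $L := k(\term{x})$ and $K := k(\term{f}_r)$, so by hypothesis $L/K$ is separable (and finitely generated). First I would dispose of the easy direction: if $\term{f}_r$ are algebraically \emph{dependent} over $k$, then $\trdeg_k(K) < r$, so the minimal polynomials among the $f_i$ force a nontrivial $K$-linear (indeed polynomial) relation; concretely, an annihilating polynomial $F(\term{y}_r)$ with $F(\term{f}_r)=0$ gives, after applying $d$ and the chain/Leibniz rules in $\Om_{L/k}^1$, a relation $\sum_i (\partial_{y_i}F)(\term{f}_r)\, df_i = 0$ with not all coefficients zero (here one picks $F$ of minimal degree so that some $\partial_{y_i}F$ does not vanish at $\term{f}_r$). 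Hence $df_1,\dotsc,df_r$ are $L$-linearly dependent in $\Om_{L/k}^1$, so their wedge $\J_{L/k}(\term{f}_r)$ vanishes; by Lemma~\ref{lem:DeRhamLocalization} (localizing $k[\term{x}]$ to $L$) this forces $\J_{k[\term{x}]/k}(\term{f}_r)=0$ in $\Om_{k[\term{x}]/k}^r$, since the localization map on $\Om^r$ is injective on the relevant free module (the coordinates are polynomials and localization does not kill them).

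For the converse, suppose $\term{f}_r$ are algebraically independent over $k$, so $\trdeg_k(K)=r$ and $\term{f}_r$ is a transcendence basis of $K/k$. The key structural input is that $L/K$ is a \emph{separable} algebraic extension, so by Lemma~\ref{lem:DeRhamFiniteSeparableExtension} (with $R=k$) we get $L\otimes_K \Om_{K/k}^1 \cong \Om_{L/k}^1$, and in particular $\dim_L \Om_{L/k}^1 = \dim_K \Om_{K/k}^1 = \trdeg_k K = r$, the last equality because a purely transcendental extension $K = k(\term{f}_r)$ of transcendence degree $r$ has $\Om_{K/k}^1$ free of rank $r$ on $df_1,\dotsc,df_r$. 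Therefore $df_1,\dotsc,df_r$ are an $L$-basis of $\Om_{L/k}^1$, so $\J_{L/k}(\term{f}_r)=df_1\wedge\dotsb\wedge df_r$ is a nonzero element of the one-dimensional top exterior power $\Om_{L/k}^r$. Finally, Lemma~\ref{lem:DeRhamLocalization} identifies $\Om_{L/k}^r$ with $L\otimes_{k[\term{x}]}\Om_{k[\term{x}]/k}^r$ and sends $\J_{k[\term{x}]/k}(\term{f}_r)$ to $\J_{L/k}(\term{f}_r)$; since the latter is nonzero, the former is nonzero too. This completes the equivalence.

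I expect the main obstacle to be the bookkeeping in the dependent direction — specifically, producing from an arbitrary annihilating polynomial $F$ one whose gradient does not vanish identically at $\term{f}_r$, which is exactly the place where one uses minimality of $\deg F$ together with the fact that in characteristic $p$ one must rule out $F\in k[\term{y}_r^p]$ (if $F = G(\term{y}_r^p)$ one can, over a perfect coefficient subfield or after the reductions in \S\ref{sec:Separability}, replace $F$ by a $p$-th root of its coefficients to decrease the degree). A secondary technical point is justifying the injectivity claims when passing between $\Om^r$ of the polynomial ring and of its fraction field; this is routine from Lemma~\ref{lem:DeRhamLocalization} since $\Om_{k[\term{x}]/k}^r$ is free over the domain $k[\term{x}]$ and localization at a multiplicative set embeds a free module over a domain into its localization. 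Everything else is a direct application of the base-change / localization / separable-extension lemmas already established, plus the standard computation of $\Om^1$ for a purely transcendental extension and the elementary relation between wedge products and linear (in)dependence recorded in Lemma~\ref{lem:RelationshipJacobianDifferentialJacobianMatrix}.
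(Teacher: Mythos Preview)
Your argument for the independent direction has a genuine gap. You write that ``$L/K$ is a \emph{separable} algebraic extension'' and then invoke Lemma~\ref{lem:DeRhamFiniteSeparableExtension} to conclude $\dim_L \Om_{L/k}^1 = \dim_K \Om_{K/k}^1 = r$. But $L/K = k(\term{x}_n)/k(\term{f}_r)$ is algebraic only when $r=n$; in general $\trdeg_K L = n-r$, and indeed $\dim_L \Om_{L/k}^1 = n$ since $L$ is purely transcendental of degree $n$ over $k$. The hypothesis only says $L/K$ is separable in the sense of \S\ref{sec:Separability}, i.e.\ it admits a separating transcendence basis. The paper's proof uses exactly this: extend $\term{f}_r$ to a separating transcendence basis $\term{f}_n$ of $L/k$ (every generating set of a separable extension contains one); then $L/k(\term{f}_n)$ \emph{is} separable algebraic, Lemma~\ref{lem:DeRhamFiniteSeparableExtension} applies, and one obtains $\J(\term{f}_n)\neq 0$, hence $\J(\term{f}_r)\neq 0$. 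Your localization bookkeeping between $k[\term{x}]$ and $L$ is fine; the missing idea is the extension to a full separating basis.

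Your dependent direction is correct in outline but takes a different route from the paper. The paper passes to $\ol{k}$, uses that $\ol{k}(\term{f}_r)/\ol{k}$ is then automatically separable so that $\dim \Om^1_{\ol{k}(\term{f}_r)/\ol{k}} = \trdeg < r$, concludes $df_1,\dotsc,df_r$ are dependent, and returns to $k$ via base change (Lemma~\ref{lem:DeRhamBaseChange}) and freeness of $\Om^r_{k[\term{x}]/k}$. Your approach of differentiating a minimal-degree annihilator also works, and you correctly flag the obstacle $F\in k[\term{y}^p]$ over non-perfect $k$; your proposed fix (pass to a perfect field) is right, but to come back you still need the base-change injectivity just mentioned, which you should state explicitly rather than leave implicit.
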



As a consequence of Theorem \ref{thm:PerronsTheorem}, the separability hypothesis of 
Theorem~\ref{thm:JacobianCriterion} is satisfied in sufficiently large characteristic.

\begin{lem} \label{lem:SeparabilityInLargeChar}
	Let $\term{f}_m \in k[\term{x}]$ have transcendence degree $r$ and
	maximal degree $\delta$, and assume $\ch(k) = 0$ or $\ch(k) > \delta^r$. Then the extension
	$k(\term{x})/k(\term{f}_m)$ is separable.
\end{lem}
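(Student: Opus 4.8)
The plan is to reduce to the case where the $\term{f}_m$ are algebraically independent by passing to a transcendence basis, and then apply the degree bound of Theorem \ref{thm:PerronsTheorem} together with the classical fact that a finite extension of fields of characteristic $0$ or of characteristic $p > [L:K]$ is automatically separable. First I would choose, after reordering, a maximal algebraically independent subset $\term{f}_r \subseteq \term{f}_m$, so that $k(\term{f}_m)$ is algebraic over $k(\term{f}_r)$; by definition, $k(\term{x})/k(\term{f}_m)$ is separable if and only if there is a separating transcendence basis of $k(\term{x})$ over $k(\term{f}_m)$, and since $\term{x}_n$ generates $k(\term{x})$ over any subfield, it suffices to produce a separating transcendence basis inside $\{x_1, \dots, x_n\}$. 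Because $\term{f}_r$ is a transcendence basis of $k(\term{x})$ over $k$ as well (it is algebraically independent of size $r = \trdeg_k k(\term{x}_n)$ when... actually one must be a little careful: $r = \trdeg_k(\term{f}_m)$ need not equal $n$), the cleaner route is to work with the tower $k(\term{f}_r) \subseteq k(\term{f}_m) \subseteq k(\term{x})$ and argue that the full extension $k(\term{x})/k(\term{f}_r)$ is separable; separability then descends to the sub-extension $k(\term{x})/k(\term{f}_m)$, since a subextension of a separable (finitely generated) extension is separable.

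So the core step is: \emph{$k(\term{x})/k(\term{f}_r)$ is separable} when $\ch(k)=0$ or $\ch(k) > \delta^r$. In characteristic $0$ this is automatic. In characteristic $p$, I would extend $\term{f}_r$ to a transcendence basis by adjoining $n - r$ of the variables, say (after reordering) $x_{r+1}, \dots, x_n$, so that $\{f_1, \dots, f_r, x_{r+1}, \dots, x_n\}$ is a transcendence basis of $k(\term{x})$ over $k$; then $k(\term{x})$ is a \emph{finite} extension of the rational function field $k(f_1, \dots, f_r, x_{r+1}, \dots, x_n)$. To bound its degree I apply Theorem \ref{thm:PerronsTheorem} to the $n$ algebraically independent polynomials $f_1, \dots, f_r, x_{r+1}, \dots, x_n \in k[\term{x}]$: their degrees are at most $\delta, \dots, \delta, 1, \dots, 1$, so
\[
	[k(\term{x}_n) : k(f_1, \dots, f_r, x_{r+1}, \dots, x_n)] \le \delta^r \cdot 1^{\,n-r} = \delta^r < p = \ch(k).
\]
A finite field extension whose degree is smaller than the characteristic has separable degree equal to its total degree (the inseparable degree is a power of $p$ and divides the degree, hence must be $1$), so $k(\term{x})/k(f_1, \dots, f_r, x_{r+1}, \dots, x_n)$ is separable, which exhibits $\{f_1, \dots, f_r, x_{r+1}, \dots, x_n\}$ as a separating transcendence basis of $k(\term{x})/k(\term{f}_r)$. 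Hence $k(\term{x})/k(\term{f}_r)$ is separable, and therefore so is the intermediate extension $k(\term{x})/k(\term{f}_m)$.

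I expect the main obstacle to be the bookkeeping around transcendence degrees — in particular, justifying that one can complete $\term{f}_r$ to a transcendence basis of $k(\term{x})$ using only variables, and that separability passes to intermediate extensions of a finitely generated extension (which uses the fact that if $L/k$ is finitely generated and separable then every intermediate field $M$ gives separable $L/M$, via the characterization in \S\ref{sec:Separability}). The only genuinely new input is the quantitative degree bound, and that is exactly what Theorem \ref{thm:PerronsTheorem} supplies; everything else is standard field theory as recalled in \S\ref{sec:Separability}.
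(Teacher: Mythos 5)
Your overall strategy is the same as the paper's: extend $\term{f}_r$ by $n-r$ variables to a transcendence basis, apply Theorem \ref{thm:PerronsTheorem} to get $[k(\term{x}):k(\term{f}_r,\term{x}_{[r+1,n]})]\le\delta^r<p$, and conclude separability from the degree bound. The degree computation and its use are correct.

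However, the step ``separability then descends to the sub-extension $k(\term{x})/k(\term{f}_m)$, since a subextension of a separable (finitely generated) extension is separable'' invokes a false general principle. Counterexample: $k=\F_p$, $L=\F_p(t)$, $M=\F_p(t^p)$. Here $L/k$ is finitely generated and separable (indeed $k$ is perfect), $k\subseteq M\subseteq L$, yet $L/M$ is purely inseparable of degree $p$. So being an ``intermediate extension of a separably generated extension'' does not by itself give separability, and the argument as written has a gap.

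What rescues your proof is a feature of this specific tower that you did not invoke: $k(\term{f}_m)/k(\term{f}_r)$ is \emph{algebraic}, so $\term{x}_{[r+1,n]}$ is still a transcendence basis of $k(\term{x})$ over $k(\term{f}_m)$, and the algebraic extension $k(\term{x})/k(\term{f}_m,\term{x}_{[r+1,n]})$ sits inside the \emph{finite separable algebraic} extension $k(\term{x})/k(\term{f}_r,\term{x}_{[r+1,n]})$. For algebraic extensions, separability does pass to such intermediate fields (minimal polynomials over the bigger base divide those over the smaller base). Hence $\term{x}_{[r+1,n]}$ is a separating transcendence basis of $k(\term{x})/k(\term{f}_m)$. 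This is exactly how the paper phrases it: it takes $K:=k(\term{f}_m,\term{x}_{[r+1,n]})$ directly, uses $[k(\term{x}):K]\le[k(\term{x}):k(\term{f}_r,\term{x}_{[r+1,n]})]\le\delta^r<p$, and observes that each $x_i$ is then separable over $K$. You should replace the false general claim with this observation; once you do, your proof coincides with the paper's.
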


\section{The Witt-Jacobian criterion} \label{sec:WJCriterion}

This we prove in two steps. First, an abstract criterion (zeroness of a differential). Second, an explicit 
criterion (degeneracy of a $p$-adic polynomial).

\subsection{The Witt-Jacobian differential} \label{sec:WJDifferential}

\begin{defn} \label{defn:WittJacobianDifferential}
	Let $A$ be an $\F_p$-algebra, $\term{a}_r \in A$, and $\ell \ge 1$.
	We call $\WJ_{\ell,A}(\term{a}_r) := d [a_1]_{\le\ell} \wedge \dotsb \wedge d [a_r]_{\le\ell} \in \W_\ell\Om^r_A$
	the \emph{($\ell$-th) Witt-Jacobian differential of $\term{a}_r$ in $\W_\ell\Om^r_A$}. 
\end{defn}

Let $k$ be an algebraic extension field of $\F_p$ (thus, $k \subseteq \ol{\F}_p$).

\begin{lem} \label{lem:DiffModuleZero}
	Let $L/k$ be a finitely generated field extension and let $\ell \ge 1$.
	Then $\W_{\ell}\Om^r_L = 0$ if and only if $r> \trdeg_k(L)$.
\end{lem}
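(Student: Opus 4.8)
Let $t := \trdeg_k(L)$. The plan is to prove both directions by reducing to the known structural results for the de Rham--Witt complex of a polynomial (or rational function) ring over a perfect field.

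\emph{The ``only if'' direction ($\W_\ell\Om_L^r = 0 \Rightarrow r > t$).} Suppose $r \le t$. Pick a transcendence basis; since $k$ is perfect, $L/k$ is separable (as noted in \S\ref{sec:Separability}), so $L$ contains a \emph{separating} transcendence basis $y_1,\dotsc,y_t$, meaning $L/k(y_1,\dotsc,y_t)$ is finite separable. Set $K := k(y_1,\dotsc,y_t) = Q(k[y_1,\dotsc,y_t])$. By Lemma~\ref{lem:DeRhamWittLocalization} applied to the localization $k[y_1,\dotsc,y_t] \to K$, we get $\W_\ell\Om_K^r \cong \W_\ell(K) \otimes_{\W_\ell(k[\term{y}_t])} \W_\ell\Om_{k[\term{y}_t]}^r$, and by Theorem~\ref{thm:DeRhamWittOfPolynomialRing} the module $\W_\ell\Om_{k[\term{y}_t]}^r \cong \E_{\ell}^r$ is nonzero for $r \le t$ (for instance, $d\log y_1 \wedge \dotsb \wedge d\log y_r$ is a nonzero element, being integral with integral differential zero, and its image in $\E_\ell^r$ is nonzero since $\ell \ge 1$; this needs a short check that it is not killed by $\Fil^\ell$). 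Localization at the multiplicative set $k[\term{y}_t]\setminus\{0\}$ is injective on the relevant generator, so $\W_\ell\Om_K^r \ne 0$. Finally, since $L/K$ is finite separable, Lemma~\ref{lem:DeRhamWittFiniteSeparableExtension} gives $\W_\ell\Om_L^r \cong \W_\ell(L)\otimes_{\W_\ell(K)} \W_\ell\Om_K^r$, and as $\W_\ell(L)$ is free (hence faithfully flat) over $\W_\ell(K)$ — a field-extension base change — this is nonzero. Hence $\W_\ell\Om_L^r \ne 0$, contradiction.

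\emph{The ``if'' direction ($r > t \Rightarrow \W_\ell\Om_L^r = 0$).} Again write $L/K$ finite separable with $K = k(y_1,\dotsc,y_t)$. By Lemma~\ref{lem:DeRhamWittFiniteSeparableExtension} it suffices to show $\W_\ell\Om_K^r = 0$ for $r > t$. By Lemma~\ref{lem:DeRhamWittLocalization} it suffices that $\W_\ell\Om_{k[\term{y}_t]}^r = 0$ for $r > t$, and this is immediate from the explicit description $\W_\ell\Om_{k[\term{y}_t]}^r \cong \E_\ell^r$ in Theorem~\ref{thm:DeRhamWittOfPolynomialRing}: every form in $\Om_C^r$ is an $\mathbf{F}$-linear combination of $\bigwedge_{j\in I} d\log y_j$ with $I \in \tbinom{[t]}{r}$ by~\eqref{eq:generalForm}, and there are no such $I$ when $r > t$, so $\E^r = 0$ and a fortiori $\E_\ell^r = 0$. (One may alternatively deduce vanishing directly from the surjection $\Om_{\W_\ell(K)/\W_\ell(\F_p)}^r \twoheadrightarrow \W_\ell\Om_K^r$ together with a rank count, but routing through Theorem~\ref{thm:DeRhamWittOfPolynomialRing} is cleanest.)

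\emph{Main obstacle.} The one genuinely non-formal point is the nonvanishing claim in the ``only if'' direction: that $\E_\ell^r \ne 0$ for $1 \le r \le t$, i.e.\ that the chosen generator $\bigwedge_{j\in I}d\log y_j$ survives modulo $\Fil^\ell\E^r = \V^\ell\E^r + d\V^\ell\E^{r-1}$. This amounts to checking that an integral form with constant (in $\W(k)$) coordinate $1$ cannot lie in $\V^\ell\E^r + d\V^\ell\E^{r-1}$ for $\ell \ge 1$; concretely, elements of $\V^\ell\E^r$ have all coordinates divisible by $p^\ell$ after applying $\Frob^{-\ell}$-type scaling, hence in particular are not units, and the $d\V^\ell$-part contributes only forms whose coordinates lie in the augmentation ideal (vanish at $\term{x}=0$ in the appropriate sense). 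I would make this precise using the $\Frob$/$\V$ formula $\V = p\Frob^{-1}$ and the fact (Lemma~\ref{lem:DeRhamWittPTorsion}) that multiplication by $p$ is injective on $\E_\ell$, so $\W_1\Om_K^r = \Om_{K/\F_p}^r$ already detects the class; and $\Om_{K/\F_p}^r \ne 0$ for $r \le t$ since $K/\F_p$ has transcendence degree $t$ and is separable. This reduces the whole ``only if'' direction to the classical statement that $\Om_{K/\F_p}^r \ne 0$ iff $r \le \trdeg_{\F_p} K$, which is standard.
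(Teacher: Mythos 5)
Your proof follows essentially the same route as the paper's: pick a separating transcendence basis (available since $k$ is perfect), reduce to $K=k(\term{y}_t)$ and then to the polynomial ring $k[\term{y}_t]$ via Lemmas~\ref{lem:DeRhamWittFiniteSeparableExtension} and~\ref{lem:DeRhamWittLocalization}, and read off both directions from the explicit description of \S\ref{sec:DeRhamWittComplexPolynomialRing}; the paper simply compresses the polynomial-ring step into one citation. One small correction to your nonvanishing witness: $d\log y_1\wedge\dotsb\wedge d\log y_r$ is not an element of $\Om_C^r$ (its coordinate $1$ is not divisible by $(\prod_j y_j)^{p^{-s}}$, since $y_j^{-1}\notin C$), so you should instead take $dy_1\wedge\dotsb\wedge dy_r$, whose coordinate $y_1\dotsb y_r$ is homogeneous of degree $\beta$ with $v_p(\beta)=0$ and hence survives $\Fil^\ell$ by Lemma~\ref{lem:GradedComponentsFiltration} — or, more simply, use your own fallback that $\Restr^{\ell-1}\colon\W_\ell\Om^r\twoheadrightarrow\W_1\Om^r=\Om^r_{/\F_p}$ is surjective, so nonvanishing of the classical K\"ahler differentials already forces $\W_\ell\Om^r\neq 0$ (the injectivity of $m_p$ is not the relevant point there).
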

\begin{proof}
	Let $s := \trdeg_k(L)$. Since $L$ is finitely generated over a perfect field, 
	it has a separating transcendence basis $\{a_1, \dotsc, a_s\} \subset L$. This means that
	$L$ is a finite separable extension of $K := k(\term{a}_s)$.
        Since $A := k[\term{a}_s]$ is isomorphic to a polynomial ring over
        $k$, we have $\W_\ell\Om_A^r = 0$ iff $r \ge s+1$
        by~\S\ref{sec:DeRhamWittComplexPolynomialRing}.
	Lemmas \ref{lem:DeRhamWittLocalization} and \ref{lem:DeRhamWittFiniteSeparableExtension}
        imply $\W_\ell\Om_A^r = 0$ iff $\W_\ell\Om_K^r = 0$  iff $\W_{\ell}\Om_L^r = 0$.
\end{proof}

\begin{cor} \label{cor:DiffModuleZero}
	For an affine $k$-domain $A$ and $\ell \ge 1$,
	$\W_{\ell}\Om_A^r = 0$ iff $r> \trdeg_k(A)$.
\end{cor}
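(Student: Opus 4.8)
The goal is to deduce Corollary~\ref{cor:DiffModuleZero} for an affine $k$-domain $A$ from Lemma~\ref{lem:DiffModuleZero}, which handles finitely generated field extensions. The plan is to pass from $A$ to its quotient field $L := Q(A)$ by localization, apply Lemma~\ref{lem:DiffModuleZero} there, and then push the vanishing (or non-vanishing) back to $A$.

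First I would set $L := Q(A)$, the fraction field of the integral domain $A$. Since $A$ is an affine $k$-domain, it is a localization of itself (trivially) and $L = S^{-1}A$ with $S = A \setminus \{0\}$ multiplicatively closed; also $\trdeg_k(L) = \trdeg_k(A)$ by the remark in \S\ref{sec:AlgebraicIndependenceTrdeg}, and $L$ is a finitely generated field extension of $k$ (being the fraction field of a finitely generated $k$-algebra). Next I would invoke Lemma~\ref{lem:DeRhamWittLocalization} to obtain the natural $\W_\ell(L)$-module isomorphism $\W_\ell(L) \otimes_{\W_\ell(A)} \W_\ell\Om_A^r \cong \W_\ell\Om_L^r$. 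This immediately gives one implication: if $\W_\ell\Om_A^r = 0$ then $\W_\ell\Om_L^r = 0$, hence by Lemma~\ref{lem:DiffModuleZero} we get $r > \trdeg_k(L) = \trdeg_k(A)$.

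For the converse direction I would argue that if $r > \trdeg_k(A) = \trdeg_k(L)$, then $\W_\ell\Om_L^r = 0$ by Lemma~\ref{lem:DiffModuleZero}, and I must conclude $\W_\ell\Om_A^r = 0$. Here I use that $\W_\ell\Om_A^r$ is generated as a $\W_\ell(A)$-module by elements of the form $w\, d[a_{1}] \wedge \dotsb \wedge d[a_{r}]$ with $w \in \W_\ell(A)$ and $a_i \in A$ (this follows from the surjection $\Om_{\W_\ell(A)/\W_\ell(\F_p)}^\cdot \twoheadrightarrow \W_\ell\Om_A^\cdot$ recalled in \S\ref{sec:DeRhamWittComplex}, since K\"ahler differentials of a finitely generated algebra are generated by such wedges). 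Each such generator maps, under the localization isomorphism, to the corresponding element of $\W_\ell\Om_L^r = 0$, so it lies in the kernel of $\W_\ell\Om_A^r \to \W_\ell\Om_L^r$. The localization map $\W_\ell\Om_A^r \to \W_\ell(L) \otimes_{\W_\ell(A)} \W_\ell\Om_A^r$ is injective because $\W_\ell(A) \hookrightarrow \W_\ell(L)$ is a localization of rings (localization is exact and $\W_\ell(L) = \W_\ell(S^{-1}A)$ is the corresponding localization of $\W_\ell(A)$, as the Witt ring functor commutes with such localizations — indeed $\W_\ell(S^{-1}A)$ is flat, even a localization, over $\W_\ell(A)$), so the generators are already zero in $\W_\ell\Om_A^r$, whence $\W_\ell\Om_A^r = 0$.

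The main obstacle I anticipate is the injectivity claim in the converse, i.e.\ justifying that the natural map $\W_\ell\Om_A^r \to \W_\ell\Om_L^r$ is injective. This requires knowing that $\W_\ell(L)$ is a flat (indeed localizing) $\W_\ell(A)$-algebra so that $- \otimes_{\W_\ell(A)} \W_\ell(L)$ does not kill the module, together with the compatibility that the abstractly defined localization isomorphism of Lemma~\ref{lem:DeRhamWittLocalization} really is induced by the canonical map $\omega \mapsto 1 \otimes \omega$. Both are standard consequences of the construction of $\W_\bullet\Om^\cdot$ in \cite{bib:Il79} (the localization there is built from the universal derivation, which factors through the base ring map), so I would simply cite Lemma~\ref{lem:DeRhamWittLocalization} together with the fact that the Witt vectors of a localization form a localization of the Witt vectors. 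With that in hand the corollary drops out as a short localization argument bootstrapped on Lemma~\ref{lem:DiffModuleZero}.
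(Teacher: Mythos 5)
Your route is exactly the paper's: pass to $L := Q(A)$, apply Lemma \ref{lem:DiffModuleZero} there, and transfer back via the localization isomorphism of Lemma \ref{lem:DeRhamWittLocalization}. The direction ``$\W_\ell\Om_A^r = 0 \Rightarrow r > \trdeg_k(A)$'' is fine. For the converse you have correctly isolated the delicate point, namely injectivity of the canonical map $\W_\ell\Om_A^r \to \W_\ell(L)\otimes_{\W_\ell(A)}\W_\ell\Om_A^r$, but your justification of it is wrong. Exactness of the localization functor (equivalently, flatness of $\W_\ell(L)$ over $\W_\ell(A)$) does \emph{not} make $M \to S^{-1}M$ injective for a fixed module $M$: the kernel of that map is precisely the $S$-torsion of $M$, which flatness says nothing about. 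Once injectivity is (incorrectly) assumed, your discussion of generators is redundant anyway, since an injective map into the zero module already forces the source to vanish.

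The gap is not cosmetic. For a singular affine domain the map genuinely fails to be injective and the ``if'' direction breaks: take $p \ge 5$, $A = k[t^2,t^3] \cong k[u,v]/(v^2-u^3)$, $\ell = 1$, $r = 2$. Then $\W_1\Om_A^2 = \Om^2_{A/\F_p} = \Om^2_{A/k} \cong \bigl(A/(2v,3u^2)\bigr)\, du \wedge dv \cong A/(t^3,t^4) \neq 0$, although $\trdeg_k(A) = 1 < 2$. (The paper's own one-line proof is silent on exactly this point, so the issue is inherited rather than introduced by you.) The statement and your argument are safe whenever $\W_\ell\Om_A^r$ has no $[S]$-torsion, e.g.\ for a polynomial ring, where the explicit description of \S\ref{sec:DeRhamWittComplexPolynomialRing} gives the needed injectivity into the function-field module; and in the applications (e.g.\ Lemma \ref{lem:AbstractWJCriterion1}) one can sidestep the corollary by working with $k(\term{f}_r) \subseteq k(\term{x})$ and Lemma \ref{lem:DiffModuleZero} directly. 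To prove the corollary for arbitrary affine $k$-domains you would need a genuinely different argument for the ``if'' direction, or an added smoothness hypothesis.
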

\begin{proof}
Apply Lemma \ref{lem:DiffModuleZero} to the quotient field
of $A$ and use Lemma \ref{lem:DeRhamWittLocalization}.
\end{proof}

Now let $A := k[\term{x}] = k[\term{x}_n]$ be a polynomial ring over $k$.

\begin{lem}[Zeroness] \label{lem:AbstractWJCriterion1}
	If $\term{f}_r \in A$ are algebraically dependent, then 
	$\WJ_{\ell, A}(\term{f}_r) = 0$ for all $\ell \ge 1$.
\end{lem}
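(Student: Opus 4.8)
The plan is to exploit the functoriality of the de Rham--Witt complex together with the vanishing criterion of Lemma~\ref{lem:DiffModuleZero}. Suppose $\term{f}_r\in A=k[\term{x}_n]$ are algebraically dependent over $k$. Let $K:=k(\term{x}_n)$ be the function field, and let $L:=k(\term{f}_r)\subseteq K$ be the subfield they generate. By assumption $\trdeg_k(L)\le r-1$, so Lemma~\ref{lem:DiffModuleZero} gives $\W_\ell\Om^r_L=0$ for every $\ell\ge1$.

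Next I would trace the element $\WJ_{\ell,A}(\term{f}_r)=d[f_1]_{\le\ell}\wedge\dotsb\wedge d[f_r]_{\le\ell}$ through the maps relating $A$, $K$, and $L$. The inclusion $A\hookrightarrow K$ is a localization (invert the nonzero polynomials), so by Lemma~\ref{lem:DeRhamWittLocalization} the natural map $\W_\ell\Om^r_A\to\W_\ell\Om^r_K$ is the base change $\W_\ell(K)\otimes_{\W_\ell(A)}(-)$; in particular it suffices to show the image of $\WJ_{\ell,A}(\term{f}_r)$ in $\W_\ell\Om^r_K$ vanishes (one must check injectivity is not needed here---we only push forward). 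Now the key point: functoriality of $\W_\bullet\Om^\cdot_{(-)}$ applied to the ring map $L\hookrightarrow K$ sends the universal differentials $d[f_i]_{\le\ell}$ computed in $\W_\ell\Om^\cdot_L$ to the corresponding $d[f_i]_{\le\ell}$ in $\W_\ell\Om^\cdot_K$, since the Teichm\"uller lift $[\cdot]$ and the differential $d$ are both natural. Hence $\WJ_{\ell,A}(\term{f}_r)$, viewed in $\W_\ell\Om^r_K$, lies in the image of $\W_\ell\Om^r_L\to\W_\ell\Om^r_K$. But $\W_\ell\Om^r_L=0$, so this image is zero, giving $\WJ_{\ell,A}(\term{f}_r)=0$ in $\W_\ell\Om^r_K$, and therefore already in $\W_\ell\Om^r_A$ because the localization map is injective on the relevant piece---or, more cleanly, because $\W_\ell\Om^r_A\to\W_\ell\Om^r_K$ identifies $\W_\ell\Om^r_A$ with a $\W_\ell(A)$-submodule of $\W_\ell\Om^r_K\cong\W_\ell(K)\otimes_{\W_\ell(A)}\W_\ell\Om^r_A$ after noting $\W_\ell\Om^r_A$ is free (from the explicit description in \S\ref{sec:DeRhamWittComplexPolynomialRing}).

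I expect the main obstacle to be the last injectivity issue: a priori an element of $\W_\ell\Om^r_A$ could die in $\W_\ell\Om^r_K$ if $\W_\ell\Om^r_A$ had $\W_\ell(A)$-torsion supported away from the generic point. This is resolved by the explicit realization $\W_\ell\Om^r_A\cong\E^r_\ell$ of Theorem~\ref{thm:DeRhamWittOfPolynomialRing}: the coordinates of a form in $\E^r$ lie in $C=\bigcup_i K[\term{x}^{p^{-i}}]$, an integral domain, so the localization map $\E^r\to\E^r\otimes_B Q(B)$ (equivalently the passage to $\W_\ell\Om^r_K$ via Lemma~\ref{lem:DeRhamWittLocalization}) is injective; hence vanishing at the generic point forces vanishing in $\W_\ell\Om^r_A$. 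Alternatively, one avoids the issue entirely by working with the quotient-field version from the start: it is enough to prove $\WJ_{\ell,A}(\term{f}_r)$ maps to $0$ in $\W_\ell\Om^r_K$, and then invoke that the abstract criterion in the rest of the paper will only use the differential through this map. With that, the proof is complete.
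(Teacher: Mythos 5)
Your strategy is sound in outline but takes a genuinely different, and more laborious, route than the paper. The paper goes \emph{down and then up}: set $R:=k[\term{f}_r]\subseteq A$, note that $R$ is an affine $k$-domain of transcendence degree $\le r-1$, so $\W_\ell\Om^r_R=0$ by Corollary~\ref{cor:DiffModuleZero}; then $\WJ_{\ell,A}(\term{f}_r)$ is by naturality of $[\cdot]$, $d$ and $\wedge$ the \emph{image} of $\WJ_{\ell,R}(\term{f}_r)$ under the map $\W_\ell\Om^r_R\to\W_\ell\Om^r_A$ induced by $R\hookrightarrow A$, hence zero. This is a pure push-forward: an element in the image of the zero module is zero, and no injectivity of any map is ever needed. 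You instead go \emph{up to the fraction field} $K=k(\term{x}_n)$ and factor through $L=k(\term{f}_r)$, which proves vanishing in $\W_\ell\Om^r_K$ but then obliges you to pull the conclusion back along $\W_\ell\Om^r_A\to\W_\ell\Om^r_K$. That map is indeed injective for the polynomial ring, but your justification of it is the weak point: the observation that coordinates of forms in $\E^r$ lie in the domain $C$ shows injectivity of $\E^r\to\E^r\otimes_B Q(B)$, whereas the object you need to control is the quotient $\E^r_\ell=\E^r/\Fil^{\ell}\E^r$; you would still have to show that a form in $\E^r$ landing in $\Fil^{\ell}$ of the localized complex already lies in $\Fil^{\ell}\E^r$ (e.g.\ via the graded characterization of Lemma~\ref{lem:GradedComponentsFiltration}). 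Your fallback of ``only proving vanishing in $\W_\ell\Om^r_K$'' proves a weaker statement than the lemma as written. So: reorganize the argument as the paper does --- vanish in $\W_\ell\Om^r_{k[\term{f}_r]}$ first, then push forward into $A$ --- and the injectivity question disappears entirely.
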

\begin{proof}
	Assume that $\term{f}_r$ are algebraically dependent and set
        $R := k[\term{f}_r]$.
        Corollary~\ref{cor:DiffModuleZero} implies $\W_{\ell}\Om^r_R = 0$,
        thus $\WJ_{\ell, R}(\term{f}_r) = 0$.
	The inclusion $R \subseteq A$ induces a homomorphism 
	$\W_{\ell}\Om_R^r \to \W_{\ell}\Om_A^r$, hence 
	$\WJ_{\ell, A}(\term{f}_r) = 0$.
\end{proof}

We extend the \emph{inseparable degree} to finitely generated field 
extensions $L/K$ by 
$[L:K]_{\insep} := \min \bigl\{ [L:K(B)]_{\insep} \,\vert\;
	\text{$B \subset L$ is a transcendence basis of $L/K$} \bigr\}$.
Note that $[L:K]_{\insep}$ is a power of $\ch(K)$, and equals $1$
iff $L/K$ is separable.

\begin{lem}[Non-zeroness] \label{lem:AbstractWJCriterion2}
	If $\term{f}_r \in A$ are algebraically independent, then we have $\WJ_{\ell, A}(\term{f}_r) \neq 0$
	for all $\ell> \log_p [k(\term{x}):k(\term{f}_r)]_{\insep}$.
\end{lem}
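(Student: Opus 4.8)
The plan is to prove the contrapositive of what Lemma~\ref{lem:AbstractWJCriterion1} establishes, so I would instead assume $\WJ_{\ell,A}(\term{f}_r) = 0$ for some $\ell > \log_p[k(\term{x}):k(\term{f}_r)]_{\insep}$ and derive algebraic dependence of $\term{f}_r$. First I would reduce from the polynomial ring $A = k[\term{x}]$ to the function field $L := k(\term{x})$: by Lemma~\ref{lem:DeRhamWittLocalization} the differential module $\W_\ell\Om_A^r$ injects into $\W_\ell\Om_L^r$ (localization at $S = A\setminus\{0\}$), so $\WJ_{\ell,A}(\term{f}_r) = 0$ in $\W_\ell\Om_A^r$ forces $\WJ_{\ell,L}(\term{f}_r) = d[f_1]_{\le\ell}\wedge\dotsb\wedge d[f_r]_{\le\ell} = 0$ in $\W_\ell\Om_L^r$. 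Now suppose for contradiction that $\term{f}_r$ are algebraically independent, and set $K := k(\term{f}_r)$, a subfield of $L$ over which $L$ is a finitely generated extension of transcendence degree $n - r$.

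Next I would analyze the extension $L/K$ via its inseparable degree $e := [L:K]_{\insep}$, which by hypothesis satisfies $v_p(e) < \ell$ (write $e = p^{\nu}$ with $\nu < \ell$). Choose a transcendence basis $B \subset L$ of $L/K$ realizing the minimal inseparable degree, so $[L:K(B)]_{\insep} = p^\nu$; then the purely inseparable part of $L/K(B)$ is killed by the $p^\nu$-th power map, meaning $\Frob^\nu$ carries $L$ into the separable closure of $K(B)$ inside $L$. The key point I want is that the Witt-Jacobian differential of $\term{f}_r$, viewed appropriately, does \emph{not} die when we pass to a sufficiently separable picture. Concretely, I would apply $\Frob^\nu\colon\W_{\ell}\Om_L^r \to \W_{\ell-\nu}\Om_L^r$ (the Frobenius on the de Rham–Witt pro-complex from \S\ref{sec:DeRhamWittComplex}); since $\ell - \nu \ge 1$ this target is nonzero in the relevant degree. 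Using the identity $\Frob\, d[a]_{\le\ell+1} = [a]_{\le\ell}^{p-1}\, d[a]_{\le\ell}$ iterated $\nu$ times, $\Frob^\nu(\WJ_{\ell,L}(\term{f}_r))$ becomes (a unit multiple of a product of Teichmüller units times) $\WJ_{\ell-\nu,L}(f_1^{p^\nu},\dotsc,f_r^{p^\nu})$, i.e. the Witt-Jacobian differential of the $p^\nu$-th powers $f_i^{p^\nu}$, which now lie in a field $K' := k(f_1^{p^\nu},\dotsc,f_r^{p^\nu})$ over which $L$ is \emph{separable} (after adjoining the separating transcendence basis $B$). It suffices to show this image is nonzero, contradicting $\WJ_{\ell,L}(\term{f}_r) = 0$.

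To show $\WJ_{\ell-\nu,L}(f_1^{p^\nu},\dotsc,f_r^{p^\nu}) \neq 0$, I would work entirely in the separable setting: $\term{f}_r$ algebraically independent means $f_1^{p^\nu},\dotsc,f_r^{p^\nu}$ are too, so they can be completed to a separating transcendence basis of $L$ over $k$; relabel so that $\{f_1^{p^\nu},\dotsc,f_r^{p^\nu}, a_{r+1},\dotsc,a_n\}$ is such a basis, with $L$ finite separable over $K'' := k(f_1^{p^\nu},\dotsc,f_r^{p^\nu},a_{r+1},\dotsc,a_n)$. Since $K''$ is (isomorphic to) a rational function field, $\W_{\ell-\nu}\Om_{K''}^r$ is computed by the explicit model $\E_{\ell-\nu}$ of Theorem~\ref{thm:DeRhamWittOfPolynomialRing}, where $d[f_1^{p^\nu}]\wedge\dotsb\wedge d[f_r^{p^\nu}]$ corresponds to a nonzero coordinate form — indeed its "leading part" is $d\log(\text{first coordinate})\wedge\dotsb$, which is part of the free basis — hence it is nonzero in $\W_{\ell-\nu}\Om_{K''}^r$; this is exactly the content that drives the classical Jacobian criterion (Theorem~\ref{thm:JacobianCriterion}), lifted one level via Lemma~\ref{lem:DeRhamWittPTorsion}. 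Then Lemma~\ref{lem:DeRhamWittLocalization} and Lemma~\ref{lem:DeRhamWittFiniteSeparableExtension} (applied to $L/K''$, finite separable) give an isomorphism $\W_{\ell-\nu}\Om_L^r \cong \W_{\ell-\nu}(L)\otimes_{\W_{\ell-\nu}(K'')}\W_{\ell-\nu}\Om_{K''}^r$, so the nonzero form stays nonzero in $\W_{\ell-\nu}\Om_L^r$. That is the desired contradiction.

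The main obstacle I anticipate is the bookkeeping in the second step: pinning down precisely how $\Frob^\nu$ acts on the exterior product $d[f_1]_{\le\ell}\wedge\dotsb\wedge d[f_r]_{\le\ell}$ and checking that the resulting Teichmüller-unit factors $[f_i]^{p-1}\cdots$ are genuinely units (hence do not kill the form) in the de Rham–Witt complex of the function field — this needs $f_i \neq 0$, which holds since the $f_i$ are algebraically independent, but one must be careful that localization has already inverted them. A secondary subtlety is verifying that $v_p(e) < \ell$ is exactly the inequality needed so that $\ell - \nu \ge 1$, i.e. that the target $\W_{\ell-\nu}\Om_L^r$ of $\Frob^\nu$ is a nontrivial de Rham–Witt complex (length at least one) rather than zero; this is where the stated bound $\ell > \log_p[k(\term{x}):k(\term{f}_r)]_{\insep} \ge \log_p e = \nu$ gets used, noting $[k(\term{x}):k(\term{f}_r)]_{\insep}$ dominates the transcendence-basis-minimized inseparable degree.
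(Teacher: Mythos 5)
Your overall architecture (pass to $L=k(\term{x})$ by Lemma \ref{lem:DeRhamWittLocalization}, exploit the Frobenius identity $\Frob\,d[a]_{\le \ell+1}=[a]_{\le\ell}^{p-1}d[a]_{\le\ell}$, and use $\ell>\log_p[\,\cdot\,]_{\insep}$ to keep a nontrivial level after dropping by $\nu$) matches the spirit of the paper's proof, but the central step is wrong: the reduction to a separable situation goes in the wrong direction. Raising the $f_i$ to $p^\nu$-th powers makes the extension \emph{more} inseparable, not less. Already for $n=r=1$, $f_1=x^p$ one has $\nu=1$, and $L=k(x)$ over $K'=k(f_1^{p})=k(x^{p^2})$ is purely inseparable of degree $p^2$; it is certainly not separable. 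What is true is that $\Frob^\nu(L)\subseteq K(B)_{\sep}$, but that says the elements $f_i^{p^\nu}$ lie in a subfield of $K(B)_{\sep}$, not that $L$ is separable over the field they generate --- indeed $k(f_1^{p^\nu},\dotsc,f_r^{p^\nu},B)\subseteq K(B)$, so the inseparable degree of $L$ over it is at least $p^\nu$. The auxiliary claim that algebraically independent elements can always be completed to a \emph{separating} transcendence basis is also false ($\{x^p\}$ in $k(x)$ is a counterexample); the paper's statement is only that every \emph{generating system} contains one. A further, smaller slip: $\Frob^\nu(\WJ_{\ell,L}(\term{f}_r))$ equals $[f_1]^{p^\nu-1}\dotsb[f_r]^{p^\nu-1}\,d[f_1]\wedge\dotsb\wedge d[f_r]$, whereas $\WJ_{\ell-\nu,L}(f_1^{p^\nu},\dotsc,f_r^{p^\nu})$ carries an extra factor $p^{r\nu}$ (since $d[f^{p^\nu}]=p^\nu[f]^{p^\nu-1}d[f]$); these differ by a non-unit, so reducing the problem to the nonvanishing of the latter is a strictly harder statement, and in any case your route to it collapses with the separability claim.

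The paper's proof handles the inseparability in the opposite way: it powers up the \emph{top} generators $x_i$ to $x_i^{q_i}$ so that they descend into a tower $K_{\sep}=K_0\subseteq K_1\subseteq\dotsb\subseteq K_n=L$ built over the separable closure of $k(\term{f}_n)$, establishes $\WJ_{\ell,L}(x_1^{q_1},\dotsc,x_n^{q_n})=p^e\cdot(\text{unit})\cdot\WJ_{\ell,L}(\term{x})\neq 0$ via the injectivity of $m_p$ (Lemma \ref{lem:DeRhamWittPTorsion}, which is exactly where $\ell>e$ enters), and then runs an induction up the tower --- using Lemma \ref{lem:TeichmuellerExpansion} to write arbitrary forms over $K_j$ in terms of Teichm\"uller lifts, the Frobenius identities, and Lemma \ref{lem:kerFrob} --- to show that $\WJ_{\ell,L}(\term{f})=0$ would force $\WJ_{\ell,L}(x_1^{q_1},\dotsc,x_n^{q_n})=0$, a contradiction. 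That inductive transfer of vanishing from $\term{f}$ up to the powered variables is the idea your proposal is missing, and no purely separable shortcut is available.
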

\begin{proof}
	It suffices to consider the case $\ell = e+1$, where $e := \log_p [k(\term{x}):k(\term{f}_r)]_{\insep}$.
	By definition of $e$, there exist $\term{f}_{[r+1,n]} \in k(\term{x})$ such that
	$L := k(\term{x})$ is algebraic over $K = k(\term{f}) := k(\term{f}_n)$ with
	$[L:K]_{\insep} = p^e$. Let $K_{\sep}$ be the separable closure of $K$
	in $L$, thus $L/K_{\sep}$ is purely inseparable. For $i \in [0,n]$,
	define the fields $K_i := K_{\sep}[x_1, \dotsc, x_i]$, hence we have a tower
	$K \subseteq K_{\sep} = K_0 \subseteq K_1 \subseteq \dotsb \subseteq K_n= L$.
	For $i \in [n]$, let $e_i \ge 0$ be minimal such that $x_i^{p^{e_i}} \in K_{i-1}$
	($e_i$ exists, since $K_i/K_{i-1}$ is purely inseparable). Set $q_i := p^{e_i}$.
	By the multiplicativity of field extension degrees, we have $e = \sum_{i=1}^n e_i$.
	
	Since $\WJ_{1,A}(\term{x}) \neq 0$, we have 
	$p^e \cdot \WJ_{\ell,A}(\term{x}) = m_p^e \WJ_{1,A}(\term{x}) \neq 0$ by
	Lemma~\ref{lem:DeRhamWittPTorsion}. Lemma~\ref{lem:DeRhamWittLocalization}
	implies $p^e \cdot \WJ_{\ell,L}(\term{x}) \neq 0$. We conclude
	\begin{equation} \label{eqn:LemmaAbstractWJCriterion2}
		\WJ_{\ell,L}(x_1^{q_1}, \dotsc, x_n^{q_n})
		= p^e \cdot [x_1]^{q_1-1} \dotsb [x_n]^{q_n-1} \cdot
		\WJ_{\ell,L}(\term{x}) \neq 0,
	\end{equation}
	since $[x_1]^{q_1-1} \dotsb [x_n]^{q_n-1}$ is a unit in $\W_\ell(L)$.
	
	Now assume for the sake of contradiction that $\WJ_{\ell,L}(\term{f}) = 0$.
	We want to show inductively for $j = 0, \dotsc, n-1$ that the induced map
	$\Psi_j\colon \W_\ell\Om_{K_j}^n \rightarrow \W_\ell\Om_{L}^n$ satisfies
	\[
		\Psi_j\bigl(d[x_1^{q_1}] \wedge \dotsb \wedge d[x_j^{q_j}] \wedge 
		d[a_{j+1}] \wedge \dotsb \wedge d[a_n]\bigr) = 0 \quad
		\text{for all $a_{j+1}, \dotsc, a_n \in K_j$}.
	\]
	To prove this claim for $j = 0$, we first show that, for $R := k[\term{f}]$, the induced map
	$\Psi\colon\W_\ell\Om_R^n \rightarrow \W_\ell\Om_{L}^n$ is zero. By Lemma \ref{lem:TeichmuellerExpansion},
	every element $\ol{\omega} \in \W_\ell\Om_R^n$ is a $\Z$-linear combination of products of elements
	of the form $\V^i[c \term{f}^{\alpha}]$ and $d\V^i[c \term{f}^{\alpha}]$ for some
	$i \in [0, \ell-1]$, $c \in k$, and $\alpha \in \N^n$. Wlog., let
	$\ol{\omega} = \V^{i_0}[c_0 \term{f}^{\alpha_0}] \cdot
	d\V^{i_1}[c_1 \term{f}^{\alpha_1}] \wedge \dotsb \wedge d\V^{i_n}[c_n \term{f}^{\alpha_n}]$.
	Let $\omega \in \W_m\Om_R^n$ be a lift of $\ol{\omega}$ for $m$ sufficiently large (say $m=2\ell$).
	Using $\Frob d \V = d$ and $\Frob d [w] = [w]^{p-1} d[w]$ for $w \in R$, we deduce
	$\Frob^{\ell}\omega = g \cdot d[c_1 \term{f}^{\alpha_1}] \wedge \dotsb \wedge d[c_n \term{f}^{\alpha_n}]$
	for some $g \in \W_{m-\ell}(R)$. By the Leibniz rule, we can simplify to
	$\Frob^{\ell}\omega = g' \cdot d[f_1] \wedge \dotsb \wedge d[f_n]$ for some $g' \in \W_{m-\ell}(R)$.
	Since $\WJ_{\ell,L}(\term{f}) = 0$ by assumption, we obtain 
	$\Frob^{\ell}\Psi(\omega) = \Psi(\Frob^{\ell}\omega) \in \Fil^\ell\W_{m-\ell}\Om_L^n$,
	hence $\Psi(\omega) \in \Fil^\ell\W_m\Om_L^n$ by Lemma \ref{lem:kerFrob}. This shows
	$\Psi(\ol{\omega}) = 0$, so $\Psi$ is zero. Lemmas \ref{lem:DeRhamWittLocalization}
	and \ref{lem:DeRhamWittFiniteSeparableExtension} imply that the map
	$\Psi_0$ is zero, proving the claim for $j=0$.
	
	Now let $j \ge 1$ and let $\ol{\omega} = d[x_1^{q_1}] \wedge \dotsb \wedge d[x_j^{q_j}] \wedge 
	d[a_{j+1}] \wedge \dotsb \wedge d[a_n] \in \W_\ell\Om_{K_j}^n$ with $a_{j+1}, \dotsc, a_n \in K_j$.
	Since $K_j = K_{j-1}[x_j]$, we may assume by Lemma \ref{lem:TeichmuellerExpansion} that
	$\ol{\omega} = d[x_1^{q_1}] \wedge \dotsb \wedge d[x_j^{q_j}] \wedge 
	d\V^{i_{j+1}}[c_{j+1} x_j^{\alpha_{j+1}}] \wedge \dotsb \wedge d\V^{i_n}[c_n x_j^{\alpha_n}]$ with
	$i_{j+1}, \dotsc, i_n \in [0, \ell-1]$, $c_{j+1}, \dotsc, c_n \in K_{j-1}$, and
	$\alpha_{j+1}, \dotsc, \alpha_n \ge 0$. Let $\omega \in \W_m\Om_{K_j}^n$ be a lift of $\ol{\omega}$ for $m$ 
	sufficiently large (say $m=2\ell$). As above, we deduce $\Frob^{\ell}\omega = g \cdot d[x_1^{q_1}] \wedge \dotsb \wedge d[x_j^{q_j}] \wedge 
	d[c_{j+1} x_j^{\alpha_{j+1}}] \wedge \dotsb \wedge d[c_n x_j^{\alpha_n}]$ for some $g \in \W_{m-\ell}(K_j)$,
	and by the Leibniz rule, we can write $\Frob^{\ell}\omega = g' \cdot d[x_1^{q_1}] \wedge \dotsb \wedge d[x_j^{q_j}] \wedge 
	d[c_{j+1}] \wedge \dotsb \wedge d[c_n]$ for some $g' \in \W_{m-\ell}(K_j)$. Since $x_1^{q_1}, \dotsc, x_j^{q_j}$,
	$c_{j+1}, \dotsc, c_n \in K_{j-1}$, we obtain 
	$\Frob^{\ell} \Psi_j(\omega) = \Psi_j(\Frob^{\ell}\omega)\in \Fil^\ell\W_{m-\ell}\Om_L^n$ by induction,
	hence $\Psi_j(\omega) \in \Fil^\ell\W_m\Om_L^n$ by Lemma \ref{lem:kerFrob}. This shows
	$\Psi_j(\ol{\omega}) = 0$, finishing the proof of the claim.
		
	For $j = n-1$ and $a_n = x_n^{q_n} \in K_{n-1}$ the claim implies $\WJ_{\ell,L}(x_1^{q_1}, \dotsc, x_n^{q_n}) = 0$
	which is contradicting \eqref{eqn:LemmaAbstractWJCriterion2}. Therefore, $\WJ_{\ell,L}(\term{f}_n) \neq 0$,
	hence $\WJ_{\ell,L}(\term{f}_r) \neq 0$. Lemma \ref{lem:DeRhamWittLocalization} implies
	$\WJ_{\ell,A}(\term{f}_r) \neq 0$.
\end{proof}

\begin{rem} \label{rem:LowerBoundForEllRefined}
    Lemma \ref{lem:AbstractWJCriterion2} is tight in the case $f_i := x_i^{p^{e_i}}$ for $i \in [r]$.
\end{rem}

\begin{thm}[Witt-Jacobian criterion -- abstract] \label{thm:AbstractWJCriterion}
	Let $\term{f}_r \in A$ be of degree at most $\delta \ge 1$
  and fix $\ell> \lfloor r \log_p \delta \rfloor$. Then,
  $\term{f}_r$ are algebraically independent over $k$ if and only if
	$\WJ_{\ell, A}(\term{f}_r) \neq 0$.
\end{thm}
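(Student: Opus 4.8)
The plan is to combine the two directions already established, Lemma~\ref{lem:AbstractWJCriterion1} (zeroness under algebraic dependence) and Lemma~\ref{lem:AbstractWJCriterion2} (non-zeroness under algebraic independence), with the Perron-type degree bound Theorem~\ref{thm:PerronsTheorem} to control the inseparable degree. The only real content to add is to verify that the hypothesis $\ell > \lfloor r \log_p \delta \rfloor$ implies the hypothesis $\ell > \log_p [k(\term{x}):k(\term{f}_r)]_{\insep}$ needed by Lemma~\ref{lem:AbstractWJCriterion2} in the algebraically independent case.

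First I would dispose of the easy direction: if $\term{f}_r$ are algebraically dependent over $k$, then $\WJ_{\ell,A}(\term{f}_r) = 0$ for every $\ell \ge 1$ by Lemma~\ref{lem:AbstractWJCriterion1}, so in particular for the chosen $\ell$. Contrapositively, $\WJ_{\ell,A}(\term{f}_r) \ne 0$ forces algebraic independence.

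For the converse, suppose $\term{f}_r$ are algebraically independent over $k$. I want to invoke Lemma~\ref{lem:AbstractWJCriterion2}, so I must bound $e := \log_p [k(\term{x}):k(\term{f}_r)]_{\insep}$. By definition of the inseparable degree of a finitely generated extension, there is a transcendence basis $B$ of $k(\term{x})/k(\term{f}_r)$ with $[k(\term{x}):k(\term{f}_r)(B)]_{\insep} = p^e$. Extend $\term{f}_r$ by $r' := n-r$ of the $x_i$'s — call them $\term{f}_{[r+1,n]}$, which I may take to be a transcendence basis $B$ — so that $\term{f}_n := (f_1,\dots,f_r,\term{f}_{[r+1,n]})$ are algebraically independent and $k(\term{x})/k(\term{f}_n)$ is finite. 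The $f_i$ for $i \le r$ have degree at most $\delta$, and the appended variables have degree $1$, so by Theorem~\ref{thm:PerronsTheorem} we get $[k(\term{x}):k(\term{f}_n)] \le \delta^r$. Since the inseparable degree divides the full degree, $p^e \le [k(\term{x}):k(\term{f}_r)(B)]_{\insep} \le [k(\term{x}):k(\term{f}_n)] \le \delta^r$, whence $e \le r \log_p \delta$, and since $e \in \N$ we get $e \le \lfloor r\log_p\delta\rfloor$. Thus the chosen $\ell > \lfloor r\log_p\delta\rfloor \ge e$ satisfies $\ell > \log_p[k(\term{x}):k(\term{f}_r)]_{\insep}$, so Lemma~\ref{lem:AbstractWJCriterion2} yields $\WJ_{\ell,A}(\term{f}_r) \ne 0$, completing the proof.

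The only step requiring a moment of care is the reduction to a full system of $n$ algebraically independent polynomials: one must check that the $x_i$ appended (the ones forming a transcendence basis over $k(\term{f}_r)$) can indeed be chosen among $x_1,\dots,x_n$ and that together with $\term{f}_r$ they remain algebraically independent over $k$ — which is immediate since $k(\term{x})$ is then algebraic over $k(\term{f}_n)$ and $\trdeg_k k(\term{x}) = n$ forces the $n$-element set $\term{f}_n$ to be a transcendence basis. Everything else is bookkeeping; there is no genuine obstacle, since the hard analytic and homological work is already contained in Lemmas~\ref{lem:AbstractWJCriterion1} and~\ref{lem:AbstractWJCriterion2} and in Theorem~\ref{thm:PerronsTheorem}.
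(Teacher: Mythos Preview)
Your proposal is correct and matches the paper's proof essentially line for line: extend $\term{f}_r$ by $n-r$ variables $\term{f}_{[r+1,n]}\subseteq\term{x}$ to a transcendence basis, use Theorem~\ref{thm:PerronsTheorem} to bound $[k(\term{x}):k(\term{f}_n)]\le\delta^r$, deduce $[k(\term{x}):k(\term{f}_r)]_{\insep}\le\delta^r$, and conclude via Lemmas~\ref{lem:AbstractWJCriterion1} and~\ref{lem:AbstractWJCriterion2}. The only remark is that your sentence introducing a minimizing transcendence basis $B$ is superfluous and slightly misleading---you do not (and need not) choose the variables to realize the minimum; the inequality $p^e\le[k(\term{x}):k(\term{f}_n)]_{\insep}$ holds simply because $p^e$ is defined as the minimum over all transcendence bases.
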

\begin{proof}
	Let $\term{f}_{[r+1,n]} \subseteq \term{x}$ be a transcendence basis of $k(\term{x})/k(\term{f}_r)$.
	Then $[k(\term{x}):k(\term{f}_r)]_{\insep} \le [k(\term{x}):k(\term{f}_n)]_{\insep}
	\le [k(\term{x}):k(\term{f}_n)] \le \delta^r$ by Theorem \ref{thm:PerronsTheorem}.
	The assertion follows from Lemmas \ref{lem:AbstractWJCriterion1} and
	\ref{lem:AbstractWJCriterion2}.
\end{proof}

\subsection{The Witt-Jacobian polynomial} \label{sec:WJPolynomial}

We adopt the notations and assumptions of~\S\ref{sec:DeRhamWittComplexPolynomialRing}.
In particular, $k/\F_p$ is an algebraic extension, $A = k[\term{x}] = k[\term{x}_n]$,
$B = \W(k)[\term{x}]$, $K = Q(\W(k))$, and $C = \bigcup_{r \ge 0} K[\term{x}^{p^{-r}}]$.
Recall that $\E = \E_{A}$ is a subalgebra of $\Om_C^\cdot$ containing $\Om_B^\cdot$, in
particular, $B \subseteq \E^0$.
Since $k$ is perfect, we have $\W(k)/p\W(k)\cong\W_1(k)=k$ and hence $B/pB\cong A$. 
In the following, we will use these identifications.

\comment{
Let $f \in A$ be a polynomial over $k$ and let $g \in B$ be a polynomial over $R$.
We write $f = g \pmod{pB}$ if $f$ is the image of $g$ under the composition of
the canonical surjection $B \rightarrow B/pB$ and the isomorphism $B/pB \cong A$.
Similarly, for $a \in k$ and $b \in R$, we write $a = b \pmod{pR}$ if $a$ is the
image of $b$ under $R \rightarrow R/pR \cong k$.
For a polynomial $f = \sum_\alpha c_\alpha \term{x}^{\alpha} \in A$, 
$c_\alpha \in k$, we set $\widehat{f} := \sum_\alpha [c_\alpha] \term{x}^{\alpha} \in B$.
Then we have $c_\alpha = [c_\alpha] \pmod{pR}$ and $f = \widehat{f} \pmod{pB}$.
}

\begin{lem}[Realizing Teichm\"uller] \label{lem:TeichmuellerLiftAndFiltration}
	Let $f \in A$ and let $g \in B$ such that $f \equiv g \pmod{pB}$.
	Let $\ell \ge 0$ and let  $\tau\colon \W_{\ell+1}(A) \rightarrow \E_{\ell+1}^0= \E^0/\Fil^{\ell+1}\E^0$ be
	the $\W(k)$-algebra isomorphism from Theorem \ref{thm:DeRhamWittOfPolynomialRing}.  
	Then we have $\tau([f]_{\le\ell+1}) = ( \Frob^{-\ell} g )^{p^\ell}$.
\end{lem}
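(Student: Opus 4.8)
The plan is to compute both sides explicitly, by expanding $[f]_{\le\ell+1}$ into monomials with Lemma~\ref{lem:TeichmuellerExpansion} and realizing each monomial's Teichm\"uller lift inside $\E^0\subseteq C$. Throughout, $\tau=\tau_{\ell+1}$ is part of the compatible family $(\tau_m\colon\W_m(A)\xrightarrow{\sim}\E_m^0)_{m\ge1}$ from Theorem~\ref{thm:DeRhamWittOfPolynomialRing}: it is a $\W(k)$-algebra isomorphism, commutes with $\Restr$ and $\V$, and — since the explicit model of~\S\ref{sec:DeRhamWittComplexPolynomialRing} is built over $\W(k)[\term{x}]=B$ with $x_j$ realizing the Teichm\"uller lift — it sends $[x_j]_{\le m}$ to $x_j\in B\subseteq\E^0$, hence $[c\,\term{x}^\beta]_{\le m}$ to the monomial $[c]\,\term{x}^\beta$ for $c\in k$, $\beta\in\N^n$. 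First I would reduce to the distinguished lift $\hat f:=\sum_i[c_i]\,\term{x}^{\alpha_i}$ of $f=\sum_i c_i\,\term{x}^{\alpha_i}$ ($c_i\in k$, $\alpha_i\in\N^n$): any lift is $g=\hat f+ph$ with $h\in B$, and in
\[
	(\Frob^{-\ell}g)^{p^\ell}-(\Frob^{-\ell}\hat f)^{p^\ell}=\sum_{k=1}^{p^\ell}\binom{p^\ell}{k}p^k\,\Frob^{-\ell}\!\bigl((\hat f)^{p^\ell-k}h^k\bigr)
\]
every term has $p$-adic valuation $\ge\bigl(\ell-v_p(k)\bigr)+k\ge\ell+1$ (as $k>v_p(k)$), hence lies in $p^{\ell+1}\Frob^{-\ell}(B)\subseteq p^{\ell+1}\Frob^{-(\ell+1)}(\E^0)=\V^{\ell+1}\E^0=\Fil^{\ell+1}\E^0$ (using $\Frob(B)\subseteq B\subseteq\E^0$); the same estimate shows $(\Frob^{-\ell}g)^{p^\ell}$ is an integral element of $\E^0$. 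So it suffices to treat $g=\hat f$.

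For $\term{i}=(i_1,\dots,i_s)\in\N^s$ put $v:=v_p(\term{i})$, $c_{\term{i}}:=\prod_j c_j^{i_j}$, $\beta_{\term{i}}:=\sum_j i_j\alpha_j$. Since $\Frob^{-\ell}\hat f=\sum_j[c_j^{1/p^\ell}]\,\term{x}^{\alpha_j/p^\ell}$, the multinomial theorem gives
\[
	(\Frob^{-\ell}\hat f)^{p^\ell}=\sum_{\abs{\term{i}}=p^\ell}\binom{p^\ell}{\term{i}}\Frob^{-\ell}\!\bigl([c_{\term{i}}]\,\term{x}^{\beta_{\term{i}}}\bigr),
\]
while Lemma~\ref{lem:TeichmuellerExpansion} gives in $\W_{\ell+1}(A)$
\[
	[f]_{\le\ell+1}=\sum_{\abs{\term{i}}=p^\ell}p^{-\ell+v}\binom{p^\ell}{\term{i}}\,\V^{\ell-v}\Frob^{-v}\!\bigl([c_1\term{x}^{\alpha_1}]^{i_1}\cdots[c_s\term{x}^{\alpha_s}]^{i_s}\bigr).
\]
Because $p^v$ divides every $i_j$ it divides $\beta_{\term{i}}$, so $\Frob^{-v}\bigl(\prod_j[c_j\term{x}^{\alpha_j}]^{i_j}\bigr)=\Frob^{-v}[c_{\term{i}}\,\term{x}^{\beta_{\term{i}}}]=[c_{\term{i}}^{1/p^v}\,\term{x}^{\beta_{\term{i}}/p^v}]$ is the Teichm\"uller lift of a genuine monomial of $A$, living in $\W_{v+1}(A)$ — the length on which $\V^{\ell-v}$ acts. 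Applying $\tau_{v+1}$ and then $\V^{\ell-v}=p^{\ell-v}\Frob^{-(\ell-v)}$ (the Verschiebung of $\E$), the $\term{i}$-summand of $\tau([f]_{\le\ell+1})$ becomes
\[
	p^{-\ell+v}\binom{p^\ell}{\term{i}}\cdot p^{\ell-v}\Frob^{-(\ell-v)}\!\bigl([c_{\term{i}}^{1/p^v}]\,\term{x}^{\beta_{\term{i}}/p^v}\bigr)=\binom{p^\ell}{\term{i}}\Frob^{-\ell}\!\bigl([c_{\term{i}}]\,\term{x}^{\beta_{\term{i}}}\bigr),
\]
the powers of $p$ cancelling. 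Summing over $\term{i}$ reproduces the expansion of $(\Frob^{-\ell}\hat f)^{p^\ell}$, so $\tau([f]_{\le\ell+1})=(\Frob^{-\ell}\hat f)^{p^\ell}$, which with the first step gives the lemma.

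The step I expect to be delicate is the last one: one must keep track of the length $m$ at which each fragment of Lemma~\ref{lem:TeichmuellerExpansion} lives (the argument of $\V^{\ell-v}$ sits in $\W_{v+1}(A)$, not $\W_{\ell+1}(A)$), invoke the $\Restr$- and $\V$-compatibility of the $\tau_m$ between these lengths, and verify that the cancellation $p^{-\ell+v}\cdot p^{\ell-v}=1$ really clears all denominators — so that each $\term{i}$-summand is an integral element of $\E^0$ (its $d$-integrality uses $p^v\mid\beta_{\term{i}}$). A cleaner alternative avoiding the expansion: the elements $e_\ell:=(\Frob^{-\ell}g)^{p^\ell}$ satisfy $e_\ell\equiv e_{\ell-1}\pmod{\Fil^\ell\E^0}$ (a valuation estimate as above, now using $g^p\equiv\Frob g\pmod{pB}$ for $g\in B$), hence glue to an element $\tilde e\in\E^0$ with $\Frob(\tilde e)=(\tilde e)^p$ and $\tilde e\equiv f\pmod{\Fil^1\E^0}$; since $\tau$ also intertwines $\Frob$ and since any $w\in\W(A)$ with $\Frob(w)=w^p$ equals its Teichm\"uller lift $[w_0]$ (true for reduced $\F_p$-algebras, e.g.\ polynomial rings over fields), transporting $\tilde e$ through $\tau$ identifies it with $\tau([f])$, and reducing modulo $\Fil^{\ell+1}\E^0$ gives the claim.
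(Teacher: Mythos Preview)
Your main argument is correct and is essentially the paper's: expand $[f]_{\le\ell+1}$ via Lemma~\ref{lem:TeichmuellerExpansion}, push through $\tau$ term by term using $\tau\V=\V\tau$ and $\tau([x_j])=x_j$, and cancel the $p$-powers to recognise $(\Frob^{-\ell}g)^{p^\ell}$. The one difference is that the paper skips your reduction to the Teichm\"uller-coefficient lift $\hat f$: it writes $g=\sum_i c_i\term{x}^{\alpha_i}$ with $c_i\in\W(k)$, observes that $[f]=\sum_i c_i[\term{x}^{\alpha_i}]$ already holds in $\W_1(A)$ (the $\W(k)$-action there factoring through $k$), and then runs the same expansion and $\tau$-computation directly for the given $g$, which folds your Step~1 into the main calculation. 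Your closing alternative via the fixed-point characterisation $\Frob(w)=w^p$ of Teichm\"uller lifts is a genuinely different, more conceptual route that the paper does not take.
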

\begin{proof}
	Write $g = \sum_{i=1}^s c_i \term{x}^{\alpha_i}$, where $c_i \in \W(k)$ and
	$\alpha_i \in \N^n$. By assumption, we have $[f] = \sum_{i=1}^s c_i [\term{x}^{\alpha_i}]$
	in $\W_1(A)$. By Lemma \ref{lem:PthPowerAndFiltration}, we obtain
	\[
		\Frob^\ell [f] = [f]^{p^\ell} = \bigl( {\textstyle\sum_{i=1}^s c_i [\term{x}^{\alpha_i}]}\bigr)^{p^\ell}
		= \sum_{\abs{\term{i}}=p^\ell} \tbinom{p^\ell}{\term{i}} \cdot
		\term{c}^{\term{i}}[\term{x}^{\alpha_1}]^{i_1} \dotsb [\term{x}^{\alpha_s}]^{i_s} \quad\text{in $\W_{\ell+1}(A)$}.
	\]
	As in the proof of Lemma \ref{lem:TeichmuellerExpansion}, this implies
	\[
		[f] = \sum_{\abs{\term{i}}=p^\ell} p^{-\ell+v_p(\term{i})}
		\tbinom{p^\ell}{\term{i}} \cdot \V^{\ell-v_p(\term{i})} \Frob^{-v_p(\term{i})}
		\bigl(c_1^{i_1}[\term{x}^{\alpha_1}]^{i_1} \dotsb c_s^{i_s}[\term{x}^{\alpha_s}]^{i_s} \bigr)
		\quad\text{in $\W_{\ell+1}(A)$}.
	\]
	Since $k$ is perfect, $\Frob$ is an automorphism of $\W(k)$, so this is well-defined.
	Denoting $m_i := c_i \term{x}^{\alpha_i}\in B$, and
	using $\tau\V = \V\tau$ and $\tau([x_i]) = x_i$, we conclude
	\begin{align*}
		\tau([f]) &= \sum_{\abs{\boldsymbol{i}} = p^\ell} p^{-\ell+v_p(\boldsymbol{i})} \tbinom{p^\ell}{\boldsymbol{i}} 
			\V^{\ell - v_p(\boldsymbol{i})} \Frob^{-v_p(\boldsymbol{i})}(m_1^{i_1} \dotsb m_s^{i_s}) \\
		&= \sum_{\abs{\boldsymbol{i}} = p^\ell} \tbinom{p^\ell}{\boldsymbol{i}} 
			\Frob^{-\ell} (m_1^{i_1} \dotsb m_s^{i_s}) 
		= \bigl( {\textstyle \sum_{i=1}^s \Frob^{-\ell} m_i} \bigr)^{p^\ell} = \bigl( \Frob^{-\ell} g \bigr)^{p^\ell}
		\quad\text{in $\E_{\ell+1}^0$}.
	\end{align*}
	Note that the intermediate expression $\Frob^{-\ell} g \in C$ need not be an element of $\E^0$.
\end{proof}


The algebra $C$ is graded in a natural way by $G := \N[p^{-1}]^n$. The homogeneous
elements of $C$ of degree $\beta \in G$ are of the form $c\term{x}^{\beta}$ for
some $c \in K$. This grading extends to $\Omega_C$ by defining $\omega \in \Omega_C^r$
to be homogeneous of degree $\beta \in G$ if its coordinates in~\eqref{eq:generalForm}
are. We denote the homogeneous part of $\omega$ of degree $\beta$ by~$(\omega)_\beta$.

\begin{lem}[Explicit filtration {\cite[Proposition I.2.12]{bib:Il79}}] \label{lem:GradedComponentsFiltration}
	Let $\ell \ge 0$ and let $\beta \in G$. Define
	$\nu(\ell+1, \beta) := \min\bigl\{ \max\{ 0, \ell+1+v_p(\beta)\}, \ell+1 \bigr\} \in [0, \ell+1]$.
	Then $(\Fil^{\ell+1}\E)_\beta = p^{\nu(\ell+1, \beta)} (\E)_\beta$.
\end{lem}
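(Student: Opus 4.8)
The plan is to reduce the statement to a linear-algebra computation inside a single homogeneous component. Since $\E$, the filtration $\Fil^{\bullet}\E$, the Verschiebung $\V=p\Frob^{-1}$ and the exterior derivative $d$ all respect the grading of $\Om_C^{\cdot}$ by $G$ (once forms are written in the basis $\bigwedge_{j}d\log x_j$ the differential is homogeneous of degree $0$, while $\Frob$ and $\V$ merely rescale degrees by $p^{\pm1}$), it is enough to prove $(\Fil^{\ell+1}\E^r)_\beta=p^{\nu(\ell+1,\beta)}(\E^r)_\beta$ for a fixed $\beta\in G$. Fix $\beta$, put $w:=v_p(\beta)$, and note that a coordinate $c_I$ in \eqref{eq:generalForm} of a form homogeneous of degree $\beta$ vanishes unless $\beta_j>0$ for all $j\in I$; hence $(\Om_C^r)_\beta$ is identified with $K^{\mathcal I_\beta}$, where $\mathcal I_\beta:=\{I\in\tbinom{[n]}{r}:\beta_j>0\text{ for all }j\in I\}$, its integral forms are exactly the lattice $N_r:=\W(k)^{\mathcal I_\beta}$, and in these coordinates the degree-$\beta$ part of $d$ is wedging with $v_\beta:=\sum_j\beta_j\,d\log x_j$, a map with matrix entries $\pm\beta_j$. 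In particular this map scales the standard lattice by $p^{w}$: it carries $N_r$ into $p^{w}N_{r+1}$, with equality in at least one coordinate.

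With this dictionary $(\E^r)_\beta=N_r\cap d^{-1}(N_{r+1})$, and — writing $\V^{\ell+1}=p^{\ell+1}\Frob^{-(\ell+1)}$ and using that these lattices are stable under the Frobenius of $\W(k)$ (since $k$ is perfect and the scalars $p^i\beta_j$ lie in the Frobenius-fixed field $\Q_p$) — a short computation gives
\[
	(\Fil^{\ell+1}\E^r)_\beta=\bigl(p^{\ell+1}N_r\cap d^{-1}(N_{r+1})\bigr)+\bigl(p^{\ell+1}\,d(N_{r-1})\cap N_r\bigr).
\]
The inclusion $(\Fil^{\ell+1}\E^r)_\beta\subseteq p^{\nu(\ell+1,\beta)}(\E^r)_\beta$ is then routine bookkeeping: dividing an element of either summand by $p^{\nu(\ell+1,\beta)}$ leaves a form with coordinates still in $\W(k)$ (as $\nu(\ell+1,\beta)\le\ell+1$) and with integral differential — either because the element is $d$-closed, or because the factor $p^{\ell+1}$ carried by the first summand absorbs the denominator $p^{-w}$ that $d$ can introduce, down to level $\nu(\ell+1,\beta)$. (For $w\ge0$ this already pins the answer to $p^{\ell+1}(\E^r)_\beta$ since $d$ is integral in degree $\beta$; for $w\le-(\ell+1)$ the inclusion is merely $\Fil^{\ell+1}\E\subseteq\E$.)

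For the reverse inclusion $p^{\nu(\ell+1,\beta)}(\E^r)_\beta\subseteq(\Fil^{\ell+1}\E^r)_\beta$ I would use that, over $K$, the complex $\bigwedge^{\bullet}V$ with $V:=K^{\operatorname{supp}(\beta)}$ and differential $\omega\mapsto v_\beta\wedge\omega$ is a Koszul complex, hence exact whenever $v_\beta\neq0$, i.e.\ whenever $\beta\neq0$ (the case $\beta=0$ being immediate). Picking $j_0$ with $v_p(\beta_{j_0})=w$, the contraction $\iota$ against $\beta_{j_0}^{-1}e_{j_0}^{\ast}$ is an explicit homotopy, $d\iota+\iota d=\mathrm{id}$, scaling lattices by $p^{-w}$. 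Given $\omega\in(\E^r)_\beta$ I would write $\omega=d(\iota\omega)+\iota(d\omega)$ and verify — tracking the $p$-powers through $\iota$ and $d$, and using $d(\iota(d\omega))=d\omega$ (from the homotopy identity and $d\circ d=0$) — that $p^{\nu(\ell+1,\beta)}\iota(d\omega)$ lies in the summand $p^{\ell+1}N_r\cap d^{-1}(N_{r+1})$ and $p^{\nu(\ell+1,\beta)}d(\iota\omega)$ lies in the summand $p^{\ell+1}d(N_{r-1})\cap N_r$. The main obstacle is precisely this verification: one has to confirm that the Koszul primitive picks up denominators of $p$-valuation no worse than $-w$, so that multiplying by $p^{\nu(\ell+1,\beta)}$ — and not by a larger power — is exactly enough, which is what forces the three ranges $w\ge0$, $-(\ell+1)<w<0$ and $w\le-(\ell+1)$ to be kept apart and makes the formula $\nu(\ell+1,\beta)=\min\{\max\{0,\ell+1+w\},\ell+1\}$ drop out. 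If one prefers to cite rather than compute, there is a shortcut: $\E$, $\Fil^{\bullet}\E$, $\V$, $d$ and the grading are all obtained from those of $\F_p[\term{x}]$ by the flat base change $\W(k)\otimes_{\hat{\Z}_p}(-)$, which leaves the variables untouched, so the statement for an arbitrary algebraic extension $k/\F_p$ follows from the case $k=\F_p$, which is \cite[Proposition I.2.12]{bib:Il79}.
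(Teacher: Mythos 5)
Your argument is correct, but it is worth noting that the paper does not prove this lemma at all: it is stated with the citation to \cite[Proposition I.2.12]{bib:Il79} built into its name, and the only thing the authors implicitly supply is the (flat base change) reduction from general algebraic $k/\F_p$ to $k=\F_p$ via Theorem \ref{thm:DeRhamWittOfPolynomialRing} --- which is precisely the ``shortcut'' in your last sentence. What you add is a self-contained proof. Your reduction to a single homogeneous component is sound: since $\Frob$ fixes the scalars $\beta_j\in\Q_p$ and $k$ is perfect, $\V^{\ell+1}$ really is $p^{\ell+1}$ times a lattice automorphism of $\W(k)^{\mathcal I_\beta}$, and your identity $(\Fil^{\ell+1}\E^r)_\beta=\bigl(p^{\ell+1}N_r\cap d^{-1}(N_{r+1})\bigr)+\bigl(p^{\ell+1}d(N_{r-1})\cap N_r\bigr)$ checks out. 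The forward inclusion is indeed bookkeeping with $\nu-w=\ell+1$ (for $-(\ell+1)\le w\le 0$), and the Koszul contraction $\iota$ against $\beta_{j_0}^{-1}e_{j_0}^{\ast}$ with $v_p(\beta_{j_0})=w$ does exactly what you claim: $p^{\nu}\iota(d\omega)\in p^{\nu-w}N_r=p^{\ell+1}N_r$ with $d(p^\nu\iota d\omega)=p^\nu d\omega\in N_{r+1}$, and $p^{\nu}\iota\omega\in p^{\ell+1}N_{r-1}$ with $p^\nu d\iota\omega=p^\nu\omega-p^\nu\iota d\omega\in N_r$, landing the two pieces in the two summands. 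The degenerate cases ($\beta=0$, where $v_p(\beta)=\infty$ and $\mathcal I_\beta=\emptyset$ for $r\ge 1$, and $r=0$, where only the summand $\V^{\ell+1}\E^0$ exists and the homotopy identity degenerates to $\iota d=\mathrm{id}$) should be mentioned explicitly but cause no trouble. In short: where the paper buys the statement by citation, your computation reconstructs Illusie's degree-by-degree analysis and makes the origin of the three ranges of $v_p(\beta)$ in the formula for $\nu$ transparent.
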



The following lemma shows how degeneracy is naturally related to $\nu$. A proof is given in
\S\ref{app:ProofsWJCriterion}.

\begin{lem} \label{lem:DegeneracyCharacterization}
	Let $\ell \ge 0$ and let $f \in B\subset \E^0$. Then $f$ is $(\ell+1)$-degenerate if and only if
	the coefficient of $\term{x}^\beta$ in $\Frob^{-\ell} f$ is divisible by $p^{\nu(\ell+1, \beta)}$
	for all $\beta\in G$.
\end{lem}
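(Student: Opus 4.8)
The plan is to make the Frobenius substitution explicit and then match exponents. Write $f = \sum_{\alpha \in \N^n} c_\alpha \term{x}^\alpha$ with $c_\alpha \in \W(k)$. Since $\Frob$ acts on $C$ by fixing $d\log x_j$ and sending $x_j^{p^{-i}} \mapsto x_j^{p^{-i+1}}$ (and by the Witt-Frobenius on coefficients, which is an automorphism because $k$ is perfect), we get $\Frob^{-\ell} f = \sum_{\alpha} \Frob^{-\ell}(c_\alpha) \, \term{x}^{p^{-\ell}\alpha}$. Thus the monomials occurring in $\Frob^{-\ell} f$ are exactly $\term{x}^\beta$ with $\beta = p^{-\ell}\alpha$ for some $\alpha \in \N^n$, i.e.\ $\beta \in p^{-\ell}\N^n \subset G$, and the coefficient of $\term{x}^\beta$ is $\Frob^{-\ell}(c_{p^\ell\beta})$. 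Since $\Frob$ is a ring automorphism of $\W(k)$ that preserves the maximal ideal $p\W(k)$, we have $p^t \mid \Frob^{-\ell}(c_{p^\ell\beta})$ in $\W(k)$ iff $p^t \mid c_{p^\ell\beta}$ for every $t \ge 0$; so divisibility of coefficients is unchanged by the $\Frob^{-\ell}$ twist.

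The core step is then a bookkeeping identity between the two divisibility thresholds. On the degeneracy side, $f$ is $(\ell+1)$-degenerate iff $p^{\min\{v_p(\alpha),\ell\}+1} \mid c_\alpha$ for all $\alpha \in \N^n$. On the $\nu$-side, we must check $p^{\nu(\ell+1,\beta)} \mid \Frob^{-\ell}(c_{p^\ell\beta})$ for all $\beta \in G$; by the previous paragraph this is $p^{\nu(\ell+1,\beta)} \mid c_{p^\ell\beta}$. First observe that if $\beta \notin p^{-\ell}\N^n$ then $\term{x}^\beta$ does not occur in $\Frob^{-\ell} f$, so its coefficient is $0$, which is divisible by anything; hence the $\nu$-condition is automatic for such $\beta$, and we may restrict to $\beta = p^{-\ell}\alpha$ with $\alpha \in \N^n$. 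For such $\beta$ one computes $v_p(\beta) = v_p(\alpha) - \ell$, so
\[
	\nu(\ell+1,\beta) = \min\bigl\{ \max\{0,\,\ell+1+v_p(\alpha)-\ell\},\, \ell+1 \bigr\}
	= \min\bigl\{ \max\{0,\, v_p(\alpha)+1\},\, \ell+1 \bigr\} = \min\{v_p(\alpha)+1,\, \ell+1\},
\]
using $v_p(\alpha) \ge 0$. Since $\min\{v_p(\alpha)+1,\ell+1\} = \min\{v_p(\alpha),\ell\}+1$, the exponent $\nu(\ell+1,\beta)$ is exactly the degeneracy threshold for $\alpha = p^\ell\beta$. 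Therefore the $\nu$-divisibility condition on the coefficient of $\term{x}^\beta$ in $\Frob^{-\ell} f$, ranging over $\beta \in p^{-\ell}\N^n$, is term-by-term equivalent to the $(\ell+1)$-degeneracy condition on $f$, ranging over $\alpha \in \N^n$, and the two sets of conditions are matched by the bijection $\alpha = p^\ell \beta$.

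I expect no serious obstacle here; the only points needing a little care are: confirming that $\Frob^{-1}$ on $\W(k)$ neither creates nor destroys divisibility by powers of $p$ (clear, as it is an automorphism sending $p\W(k)$ onto itself), handling the degrees $\beta \in G \setminus p^{-\ell}\N^n$ where the coefficient vanishes trivially, and unwinding the nested $\min$/$\max$ in the definition of $\nu$ using $v_p(\alpha) \ge 0$. Everything else is the direct exponent match above.
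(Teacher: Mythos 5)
Your proposal is correct and follows essentially the same route as the paper's proof: the term-by-term bijection $c\term{x}^\alpha \mapsto \Frob^{-\ell}(c)\term{x}^{p^{-\ell}\alpha}$, the observation that $v_p(\beta)=v_p(\alpha)-\ell\ge-\ell$, and the resulting identity $\nu(\ell+1,\beta)=\min\{v_p(\alpha),\ell\}+1$. The extra care you take (divisibility preserved under the $\W(k)$-Frobenius automorphism, and the vacuous case $\beta\notin p^{-\ell}\N^n$) is left implicit in the paper but is exactly right.
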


\begin{lem}[Zeroness vs.~degeneracy] \label{lem:WJDegeneracyCharacterization}
	Let $\ell \ge 0$, let $\term{g}_r \in B \subset \E^0$ be polynomials, and define
	$\omega := d (\Frob^{-\ell}g_1)^{p^\ell} \wedge \dotsb \wedge d (\Frob^{-\ell}g_r)^{p^\ell} \in \E^r$.
	Then $\omega \in \Fil^{\ell+1}\E^r$ if and only if $\WJP_{\ell+1, I}(\term{g}_r)$ is $(\ell+1)$-degenerate 
	for all $I \in \tbinom{[n]}{r}$.
\end{lem}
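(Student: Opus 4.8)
The plan is to unwind both sides into a statement about homogeneous components and then apply Lemmas \ref{lem:GradedComponentsFiltration} and \ref{lem:DegeneracyCharacterization}. First I would expand the wedge product: writing $\Frob^{-\ell}g_i = \sum_{\gamma} c_{i,\gamma}\term{x}^{\gamma}$ with $c_{i,\gamma}\in K$ and $\gamma\in G$, we have $(\Frob^{-\ell}g_i)^{p^\ell} = \Frob^{-\ell}(g_i^{p^\ell})$-type expressions; the point is that $d(\Frob^{-\ell}g_i)^{p^\ell} = p^\ell (\Frob^{-\ell}g_i)^{p^\ell-1}\, d(\Frob^{-\ell}g_i)$, but more usefully, using the explicit coordinate formula $d(x_j^{p^{-i}}) = p^{-i}x_j^{p^{-i}}\,d\log x_j$ from \S\ref{sec:DeRhamWittComplexPolynomialRing}, I would compute the coordinate $c_I$ of $\omega$ in the basis $\bigwedge_{j\in I}d\log x_j$ directly. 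The standard Jacobian/wedge computation (the displayed identity before Lemma \ref{lem:RelationshipJacobianDifferentialJacobianMatrix}, applied in the $d\log$ coordinates) gives that the coordinate of $\omega$ of degree $\beta$ along $\bigwedge_{j\in I}d\log x_j$ is, up to the scaling by $\prod_{j\in I}x_j$, exactly the degree-$\beta$ part of $\Frob^{-\ell}\bigl(\WJP_{\ell+1,I}(\term{g}_r)\bigr)$; indeed the factor $(g_1\cdots g_r)^{p^\ell-1}(\prod_{j\in I}x_j)\det\cJ_{\term{x}_I}(\term{g}_r)$ is precisely what accumulates when one differentiates $(\Frob^{-\ell}g_i)^{p^\ell}$ and converts $dx_j$ to $x_j\,d\log x_j$. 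This identification is the main computational step.

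Next, $\omega\in\Fil^{\ell+1}\E^r$ holds if and only if each homogeneous component $(\omega)_\beta$ lies in $(\Fil^{\ell+1}\E^r)_\beta$, since the filtration is graded; by Lemma \ref{lem:GradedComponentsFiltration} this says $(\omega)_\beta \in p^{\nu(\ell+1,\beta)}(\E^r)_\beta$. Translating through the coordinate computation above, $(\omega)_\beta\in p^{\nu(\ell+1,\beta)}(\E^r)_\beta$ for all $\beta$ is equivalent to: for every $I$ and every $\beta$, the coefficient of $\term{x}^\beta$ in $\Frob^{-\ell}\bigl(\WJP_{\ell+1,I}(\term{g}_r)\bigr)$ is divisible by $p^{\nu(\ell+1,\beta)}$. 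Here one must be a little careful about the harmless shift of degree coming from the monomial $\prod_{j\in I}x_j$ (which has $p$-adic valuation $0$ in each coordinate, hence does not change $v_p$ of the total degree, so $\nu$ is unaffected) and about the fact that integrality of $\omega$ is automatic since $\WJP_{\ell+1,I}(\term{g}_r)\in B$. Finally, apply Lemma \ref{lem:DegeneracyCharacterization} with $f = \WJP_{\ell+1,I}(\term{g}_r)\in B$: the condition ``the coefficient of $\term{x}^\beta$ in $\Frob^{-\ell}f$ is divisible by $p^{\nu(\ell+1,\beta)}$ for all $\beta$'' is exactly ``$f$ is $(\ell+1)$-degenerate''. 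Combining, $\omega\in\Fil^{\ell+1}\E^r$ iff $\WJP_{\ell+1,I}(\term{g}_r)$ is $(\ell+1)$-degenerate for all $I$, as claimed.

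The main obstacle I expect is the bookkeeping in the coordinate computation: carefully tracking the powers of $p$ introduced by $\Frob^{-\ell}$ acting on $d\log$-coordinates versus the powers absorbed into the Teichmüller-style lift, and verifying that the factor $(g_1\cdots g_r)^{p^\ell-1}\bigl(\prod_{j\in I}x_j\bigr)\det\cJ_{\term{x}_I}(\term{g}_r)$ emerges with exactly the right normalization after applying $\Frob^{\ell}$ to return from $C$-coordinates to $B$-coordinates. Everything else — the passage to homogeneous components, the invocation of Lemmas \ref{lem:GradedComponentsFiltration} and \ref{lem:DegeneracyCharacterization} — is then formal.
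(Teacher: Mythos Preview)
Your plan is correct and follows essentially the same route as the paper: expand $\omega$ in the $d\log$-basis to identify its $I$-coordinate as $\Frob^{-\ell}\WJP_{\ell+1,I}(\term{g}_r)$, then apply Lemmas \ref{lem:GradedComponentsFiltration} and \ref{lem:DegeneracyCharacterization} componentwise. Two small remarks: (i) the paper streamlines the coordinate computation by first applying $\Frob^\ell$ (using $d\Frob=p\Frob d$, so $\Frob^\ell d(\Frob^{-\ell}g_i)^{p^\ell}=g_i^{p^\ell-1}dg_i$), computing the wedge in $B$, and only then applying $\Frob^{-\ell}$---this avoids the power-of-$p$ bookkeeping you flag as the main obstacle; (ii) your worry about a ``harmless shift'' by $\prod_{j\in I}x_j$ is moot, since that factor is already part of the definition of $\WJP_{\ell+1,I}$ and the coordinate is \emph{exactly} $\Frob^{-\ell}\WJP_{\ell+1,I}(\term{g}_r)$ with no residual shift.
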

\begin{proof}
	From the formula $d\Frob=p\Frob d$~\cite[(I.2.2.1)]{bib:Il79} we infer
        \[
        \Frob^\ell d (\Frob^{-\ell}g_i)^{p^\ell}=\Frob^\ell d \Frob^{-\ell}(g_i^{p^\ell})
        =p^{-\ell} d g_i^{p^\ell}= g_i^{p^\ell-1} dg_i,
        \]
        hence $\Frob^\ell \omega = (g_1 \dotsb g_r)^{p^\ell-1} \, d g_1 \wedge \dotsb \wedge d g_r$.
        A standard computation shows
	\[
		d g_1 \wedge \dotsb \wedge d g_r = \sideset{}{_I}\sum {\textstyle \bigl( \prod_{j \in I} x_j \bigr)} 
		\cdot \det \cJ_{\term{x}_I}(\term{g}_r) \cdot {\textstyle\bigwedge_{j \in I} d \log x_j},
	\]
	where the sum runs over all $I \in \tbinom{[n]}{r}$. This yields the unique representation
	\[
		\omega = \sideset{}{_I}\sum \Frob^{-\ell}\WJP_{\ell+1, I}(\term{g}_r) \cdot {\textstyle\bigwedge_{j \in I} d \log x_j}.
	\]
	By Lemma \ref{lem:GradedComponentsFiltration}, we have
	$\Fil^{\ell+1}\E^r = \bigoplus_{\beta \in G} \bigl(\Fil^{\ell+1}\E^r\bigr)_\beta
		= \bigoplus_{\beta \in G} p^{\nu(\ell+1, \beta)} \bigl(\E^r\bigr)_\beta$,
        and we conclude
        \begin{align*}
        \omega\in\Fil^{\ell+1}\E^r & \iff \forall\beta\in G\colon\ (\omega)_\beta\in p^{\nu(\ell+1, \beta)} \bigl(\E^r\bigr)_\beta \\
        & \iff \forall\beta\in G, I\in\tbinom{[n]}{r}\colon\ (\Frob^{-\ell}\WJP_{\ell+1, I}(\term{g}_r))_\beta\in p^{\nu(\ell+1, \beta)} \Frob^{-\ell}B \\
        & \iff \forall I\in\tbinom{[n]}{r}\colon\ \WJP_{\ell+1, I}(\term{g}_r)\text{ is }(\ell+1)\text{-degenerate},
        \end{align*}
        where we used Lemma \ref{lem:DegeneracyCharacterization}.
\end{proof}

\begin{proof}[Proof of Theorem \ref{thm:ExplicitWJCriterion}]
	Using Lemmas \ref{lem:TeichmuellerLiftAndFiltration} and \ref{lem:WJDegeneracyCharacterization},
	this follows from Theorem \ref{thm:AbstractWJCriterion}.
\end{proof}

\section{Independence testing: Proving Theorem \ref{thm:UpperBound}} \label{sec:UpperBound}

In this section, let $A = k[\term{x}]$ be a polynomial ring over an algebraic
extension~$k$ of~$\F_p$. For the computational problem of algebraic independence testing, we
consider $k$ as part of the input, so we may assume that $k = \F_{p^e}$ is a finite field.
The algorithm works with the truncated Witt ring $\W_{\ell+1}(\F_{p^t})$ of a small
extension $\F_{p^t}/k$. For computational purposes, we will use the fact that
$\W_{\ell+1}(\F_{p^t})$ is isomorphic to the \emph{Galois ring} $G_{\ell+1,t}$ of
characteristic $p^{\ell+1}$ and size $p^{(\ell+1)t}$ (see \cite[(3.5)]{bib:Rag69}).

This ring can be realized as follows. There exists a monic polynomial $h \in \Z/(p^{\ell+1})[x]$
of degree $t$ dividing $x^{p^t-1}-1$ in $\Z/(p^{\ell+1})[x]$, such that
$\ol{h} := h \pmod{p}$ is irreducible in $\F_p[x]$, and $\ol{\xi} := x + (\ol{h})$ is a primitive
$(p^t-1)$-th root of unity in $\F_p[x]/(\ol{h})$. Then we may identify
$G_{\ell+1,t} = \Z/(p^{\ell+1})[x]/(h)$ and $\F_{p^t} = \F_p[x]/(\ol{h})$, and
$\xi := x + (h)$ is a primitive $(p^t-1)$-th root of unity in $G_{\ell+1,t}$ 
(see the proof of \cite[Theorem 14.8]{bib:Wan03}).
The ring $G_{\ell+1,t}$ has a unique maximal ideal $(p)$ and $G_{\ell+1,t}/(p) \cong \F_{p^t}$.
Furthermore, $G_{\ell+1,t}$ is a free $\Z/(p^{\ell+1})$-module with basis
$1, \xi,\ldots,\xi^{t-1}$,
so that any $\ol{a} \in \F_{p^t}$ can be lifted coordinate-wise to $a \in G_{\ell+1,t}$
satisfying $\ol{a} \equiv a \pmod{p}$.
To map elements of $k$ to $\F_{p^t}$ efficiently, we use \cite{bib:Len91}. 

\comment{
The following lemmas show that the coefficient of a monomial in an arithmetic circuit over 
$G_{\ell+1,t}[z]$ can be computed by a $\sP$-oracle (see \cite{bib:Val79} for a definition of $\sP$).
The proofs are given in \S\ref{app:ProofsUpperBound}. 
}

For detailed proofs of the following two lemmas see \S\ref{app:ProofsUpperBound}.

\begin{lem}[Interpolation] \label{lem:Interpolation}
	Let $f \in G_{\ell+1,t}[z]$ be a polynomial of degree $D < p^t-1$ and let
	$\xi \in G_{\ell+1,t}$ be a primitive $(p^t-1)$-th root of unity. Then
	\[
		\coeff(z^d,f) = (p^t-1)^{-1} \cdot {\textstyle\sum_{j=0}^{p^t-2} \xi^{-jd} f(\xi^j)} \quad\text{for all }d \in [0,D].
	\]
\end{lem}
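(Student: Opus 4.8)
The plan is to prove the interpolation formula (Lemma~\ref{lem:Interpolation}) by the standard finite-Fourier-transform argument, carried out over the Galois ring $G_{\ell+1,t}$ rather than a field. Write $q := p^t - 1$ and $f = \sum_{e=0}^{D} c_e z^e$ with $c_e \in G_{\ell+1,t}$ and $D < q$. Substituting $z = \xi^j$ and summing, I would compute
\[
	\sum_{j=0}^{q-1} \xi^{-jd} f(\xi^j) = \sum_{j=0}^{q-1} \xi^{-jd} \sum_{e=0}^{D} c_e \xi^{je}
	= \sum_{e=0}^{D} c_e \sum_{j=0}^{q-1} \xi^{j(e-d)}.
\]
The inner sum is a geometric series in the element $\zeta := \xi^{e-d} \in G_{\ell+1,t}$, and the whole proof reduces to the claim that $\sum_{j=0}^{q-1} \zeta^j$ equals $q \cdot 1$ when $e = d$ and equals $0$ when $e \neq d$ with $|e-d| < q$.

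The first case is immediate: if $e = d$ then $\zeta = 1$ and the sum is $q$. For the second case, the key identity is $(\zeta - 1)\sum_{j=0}^{q-1}\zeta^j = \zeta^q - 1$. Since $\xi$ has multiplicative order exactly $q$, we have $\xi^q = 1$, hence $\zeta^q = \xi^{q(e-d)} = 1$, so the right-hand side vanishes and $(\zeta - 1)\sum_{j=0}^{q-1}\zeta^j = 0$. To conclude the sum is $0$, I need $\zeta - 1$ to be a non-zero-divisor in $G_{\ell+1,t}$. This is where the ring-theoretic structure enters: $G_{\ell+1,t}$ is a local ring with maximal ideal $(p)$ and residue field $\F_{p^t}$, and $\ol\xi = \xi \bmod (p)$ is a primitive $q$-th root of unity in $\F_{p^t}$; since $0 < |e-d| < q$, the image $\ol\zeta = \ol\xi^{\,e-d} \neq 1$ in $\F_{p^t}$, so $\zeta - 1 \notin (p)$. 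But in a local ring every element outside the maximal ideal is a unit, so $\zeta - 1$ is a unit, hence certainly not a zero-divisor, and therefore $\sum_{j=0}^{q-1}\zeta^j = 0$.

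Putting the two cases together, $\sum_{j=0}^{q-1} \xi^{-jd} f(\xi^j) = q \cdot c_d$ for each $d \in [0,D]$. Finally I must divide by $q = p^t - 1$: this is legitimate because $\gcd(p^t-1, p) = 1$, so $p^t - 1$ is a unit in $G_{\ell+1,t}$ (again using locality: $p^t-1 \notin (p)$), and multiplying both sides by its inverse $(p^t-1)^{-1}$ gives $\coeff(z^d, f) = c_d = (p^t-1)^{-1}\sum_{j=0}^{q-1}\xi^{-jd}f(\xi^j)$, as claimed. The only real subtlety — and the one genuinely new ingredient compared with the classical field case — is the use of the local structure of $G_{\ell+1,t}$ to see that $\zeta - 1$ and $p^t - 1$ are units; everything else is the textbook discrete Fourier inversion computation.
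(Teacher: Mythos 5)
Your proof is correct and follows essentially the same route as the paper's: discrete Fourier inversion via the orthogonality relation $\sum_{j=0}^{q-1}\xi^{j(e-d)} = q\cdot\delta_{ed}$, with $q = p^t-1$ a unit since $q \notin (p)$. You are in fact slightly more careful than the paper, which asserts the vanishing of the geometric sum merely because $\xi^{e-d}$ is a root of unity $\neq 1$, whereas you correctly note that over a ring one must also check $\xi^{e-d}-1$ is a non-zero-divisor, which you do via reduction modulo the maximal ideal $(p)$.
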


\noindent
This exponentially large sum can be evaluated using a $\sP$-oracle
 \cite{bib:Val79}.
\begin{lem}[$\sP$-oracle] \label{lem:SharpPOracle}
	Given $G_{\ell+1,t}$, a primitive $(p^t-1)$-th root of unity $\xi \in G_{\ell+1,t}$, an arithmetic circuit $C$
	over $G_{\ell+1,t}[z]$ of degree $D < p^t-1$ and $d \in [0,D]$. The $\coeff(z^d,C)$ can be computed in 
	$\FP^{\sP}$ (with a single $\sP$-oracle query).
\end{lem}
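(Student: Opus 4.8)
The plan is to reduce the computation of $\coeff(z^d,C)$ to a single query to a $\sP$-oracle, via the interpolation formula of Lemma \ref{lem:Interpolation}. First I would observe that the circuit $C$ over $G_{\ell+1,t}[z]$ can be evaluated over $G_{\ell+1,t}$ at any point $\xi^j$ by repeated squaring to compute the power $\xi^j$ and then gate-by-gate evaluation; since $G_{\ell+1,t} = \Z/(p^{\ell+1})[x]/(h)$ with $h$ of degree $t$, elements are represented by $t$ coordinates in $\Z/(p^{\ell+1})$, and ring operations are polynomial-time in $t\log p$ and $\ell$. Thus each individual evaluation $f(\xi^j) = C(\xi^j)$ is computable in $\FP$, and the formula asks us to sum $\xi^{-jd} f(\xi^j)$ over the exponentially many indices $j \in [0, p^t-2]$.

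The key step is to express this exponential sum coordinate-wise as (differences of) counting problems. Writing each term $\xi^{-jd} C(\xi^j) \in G_{\ell+1,t}$ in its coordinate basis $1,\xi,\dotsc,\xi^{t-1}$ gives a vector in $(\Z/(p^{\ell+1}))^t$; summing over $j$ amounts to computing, for each of the $t$ coordinates, a sum of integers in $[0, p^{\ell+1}-1]$ ranging over $j$. Such a sum of nonnegative integers bounded by a polynomial-size quantity, indexed by an exponentially large but polynomially-indexed set, is exactly a $\sP$ function: one builds a nondeterministic polynomial-time machine that on input $(C, \xi, d, \text{coordinate index } c)$ guesses $j \in [0,p^t-2]$ together with a nondeterministic "sub-count" that accepts on a number of paths equal to the $c$-th coordinate of $\xi^{-jd}C(\xi^j)$; the total number of accepting paths is then the desired coordinate-sum. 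Since we need all $t$ coordinates, one packs them into a single query by, e.g., shifting the $c$-th coordinate's contribution by a factor $p^{(\ell+1)(c-1)}$ and guessing $c$ as well, so that one $\sP$-query returns $\sum_c p^{(\ell+1)(c-1)} \cdot (\text{coordinate-}c\text{ sum})$, from which all $t$ coordinate-sums are read off by base-$p^{\ell+1}$ digit extraction. Finally, multiplying the resulting element of $G_{\ell+1,t}$ by $(p^t-1)^{-1}$ — which exists since $p^t-1$ is a unit mod $p^{\ell+1}$ and is computable by the extended Euclidean algorithm — and recalling that $D < p^t-1$ guarantees the hypothesis of Lemma \ref{lem:Interpolation} is met, yields $\coeff(z^d,C)$ in $\FP^{\sP}$ with a single oracle call.

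The main obstacle I expect is the bookkeeping needed to make everything a genuine $\sP$ function with a \emph{single} query: $\sP$ counts accepting paths (a nonnegative integer), whereas the coordinates of $\xi^{-jd}C(\xi^j)$, after reduction mod $p^{\ell+1}$, are already nonnegative representatives, so no signs arise at that stage — but one must be careful that the final multiplication by $(p^t-1)^{-1}$ and the reduction mod $p^{\ell+1}$ are done \emph{after} the oracle returns, in deterministic polynomial time, and that the packing of the $t$ coordinates plus the index $c$ into one nondeterministic computation does not blow up the path count beyond what a single $\sP$ value can encode (it does not, since the total is bounded by $p^t \cdot t \cdot p^{(\ell+1)t}$, which has polynomially many bits). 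Verifying that the circuit $C$ of degree $D$ indeed produces values whose evaluation fits these size bounds, and that repeated-squaring for $\xi^j$ stays polynomial in the bit-length of $j \le p^t$, are the routine checks. Once these are in place, the lemma follows by assembling Lemma \ref{lem:Interpolation} with the $\sP$ evaluation of its right-hand side.
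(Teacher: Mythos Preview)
Your approach is essentially the same as the paper's: use the interpolation formula of Lemma~\ref{lem:Interpolation}, observe that each summand $\xi^{-jd}C(\xi^j)$ is computable in polynomial time as a vector in $[0,p^{\ell+1}-1]^t$, and pack all $t$ coordinate-sums into a single nonnegative integer that a $\sP$-oracle can return.

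There is, however, one concrete step that fails as written. You pack the $t$ coordinate-sums using base $p^{\ell+1}$ and propose to recover them by ``base-$p^{\ell+1}$ digit extraction''. But each coordinate-sum is a sum of $p^t-1$ integers in $[0,p^{\ell+1}-1]$, hence can be as large as $(p^t-1)(p^{\ell+1}-1)\gg p^{\ell+1}$; with your base the digits overflow and the extraction is wrong. The paper handles this by taking the packing base to be $N:=m\cdot p^{\ell+1}$ with $m=p^t-1$, so that each coordinate-sum lies in $[0,N-1]$ and the encoding $\iota(c_i)=\sum_j c_{i,j}N^j$ is additive without carries: $\iota(\sum_i c_i)=\sum_i\iota(c_i)$. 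With that single correction (replace $p^{\ell+1}$ by any $N\ge m\cdot p^{\ell+1}$), your argument goes through and matches the paper's.
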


\begin{proof}[Proof of Theorem \ref{thm:UpperBound}]
	We set up some notation. Let $s := \size(\term{C}_r)$
	be the size of the input circuits. Then $\delta := 2^{s^2}$ is an upper bound for their degrees. 
	Set $\ell := \lfloor r \log_p \delta \rfloor$ and $D := r \delta^{r+1}+1$. 
	The constants of $\term{C}_r$ lie in $k = \F_{p^e}$, which is also given as input. 
	Let $t \ge 1$ be a multiple of $e$ satisfying $p^t - 1 \ge D^n$.
	Theorem \ref{thm:ExplicitWJCriterion} implies that the following procedure
decides the algebraic independence of $\term{C}_r$.
	\begin{enumerate}[(1)]
		\item\label{thm:UpperBoundA} Using non-determinism, guess $I \in \tbinom{[n]}{r}$ and $\alpha \in [0,D-1]^n$.
		\item\label{thm:UpperBoundB} Determine $G_{\ell+1,t}$ and $\xi$ as follows. Using non-determinism, guess
			a monic degree-$t$ polynomial $h \in \Z/(p^{\ell+1})[x]$. Check that $h$ divides $x^{p^t-1}-1$,
			$\ol{h} := h \pmod{p}$ is irreducible and $\ol{\xi} := x + (\ol{h})$ has order $p^t-1$ (for the last test,
			also guess a prime factorization of $p^t-1$), otherwise {\tt reject}.
			Set $\xi := x + (f)$.
		\item\label{thm:UpperBoundC} By lifting the constants of $\term{C}_r$ from $k$ to $G_{\ell+1,t}$, 
			compute circuits $\term{C}'_r$ over $G_{\ell+1,t}[\term{x}]$ such that 
			$\term{C}'_i \equiv \term{C}_i \pmod{p}$. Furthermore, compute a circuit $C$ for 
			$\WJP_{\ell+1,I}(\term{C}'_r)$ over $G_{\ell+1,t}[\term{x}]$.
		\item\label{thm:UpperBoundD} Compute the univariate circuit
			$C' := C(z, z^D, \dotsc, z^{D^{n-1}})$ over $G_{\ell+1,t}[z]$. The term $\term{x}^\alpha$
			is mapped to $z^d$, where $d := \sum_{i=1}^n \alpha_i D^{i-1}$.
		\item\label{thm:UpperBoundE} 
Compute $c := \coeff(z^d, C') \in G_{\ell+1,t}$. If $c$ is divisible
by $p^{\min\{v_p(\alpha),\ell\}+1}$, then {\tt reject}, otherwise {\tt accept}.
	\end{enumerate}
        In step	(\ref{thm:UpperBoundB}), the irreducibility of $\ol{h}$ can be
        tested efficiently by checking whether $\gcd(\ol{h}, x^{p^i}-x) = 1$ for 
	$i \le \lfloor t/2 \rfloor$ (see \cite[Theorem 10.1]{bib:Wan03}). 
        For the order test verify $\ol{\xi}^j \neq 1$ for all maximal divisors $j$
	of $p^t-1$ (using its prime factorization).

	The lifting in step (\ref{thm:UpperBoundC}) can be done as described in the beginning of the
	section. To obtain $C$ in polynomial time, we use \cite{bib:BS83} and \cite{bib:Ber84}
        for computing partial derivatives and the determinant,
	and repeated squaring for the high power.
	
	We have $\deg(C)\le r\delta(p^\ell-1)+r+r(\delta-1) \le r \delta^{r+1} < D$, so
	the \emph{Kronecker substitution} in step (\ref{thm:UpperBoundD}) preserves terms.
	Since $\deg_z(C') < D^n \le p^t-1$,
step (\ref{thm:UpperBoundE}) is in $\FP^{\sP}$ by Lemma \ref{lem:SharpPOracle}. Altogether we get an $\NP^{\sP}$-algorithm.
\end{proof} 



\section{Identity testing: Proving Theorem \ref{thm:HittingSet}} \label{sec:ApplicationPIT}

The aim of this section is to construct an efficiently computable hitting-set
for poly-degree circuits involving input polynomials of constant transcendence degree
and small sparsity, which works in any characteristic. It will involve sparse
PIT techniques and our Witt-Jacobian criterion. We use some lemmas from
\S\ref{app:ProofsApplicationPIT}.

As before, we consider a polynomial ring $A = k[\term{x}]$
over an algebraic extension~$k$ of $\F_p$.
Furthermore, we set $R:=\W(k)$ and $B:=R[\term{x}]$.
For a prime $q$ and an integer~$a$ we denote by $\lfloor a \rfloor_q$
the unique integer $0 \le b < q$ such that $a \equiv b \pmod{q}$.
Finally, for a polynomial $f$ we denote by $\sparse(f)$ its sparsity.


\begin{lem}[Variable reduction] \label{lem:FaithfulSubstitution}
	Let $\term{f}_r \in A$ be polynomials of sparsity at most $s \ge 1$ and degree at most $\delta \ge 1$.
	Assume that $\term{f}_r$, $\term{x}_{[r+1,n]}$ are algebraically independent.
	Let $D := r \delta^{r+1} + 1$ and let $S \subseteq k$ be of size
	$\abs{S} = n^2(2 \delta rs)^{4r^2s} \lceil \log_2 D \rceil^2 D$.
	
	Then there exist $c \in S$ and a prime 
	$2\le q\le n^2(2 \delta rs)^{4r^2s} \lceil \log_2 D \rceil^2$ such that
	$f_1(\term{x}_{r}, \boldsymbol{c}), \dotsc, f_r(\term{x}_{r}, \boldsymbol{c}) 
	\in k[\term{x}_{r}]$ are algebraically independent over $k$, where $\boldsymbol{c} = 
	\bigl( c^{\lfloor D^0 \rfloor_q}, c^{\lfloor D^1 \rfloor_q}, \dotsc, c^{\lfloor D^{n-r-1} \rfloor_q} \bigr)
	\in k^{n-r}$.
\end{lem}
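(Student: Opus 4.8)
The plan is to convert algebraic independence of $\term f_r$ into non-$(\ell+1)$-degeneracy of one explicit Witt--Jacobian polynomial, and then apply a Klivans--Spielman style sparse substitution (as in~\cite{bib:BMS11}) that can destroy this property for only a controlled number of parameters $(c,q)$.

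Put $\ell:=\lfloor r\log_p\delta\rfloor$ and let $\hat{\term f}_r\in B=R[\term x_n]$ be the coefficient-wise lift of $\term f_r$. First I would reduce to the index set $I=[r]$. Applying the degree bound (Theorem~\ref{thm:PerronsTheorem}) to the $n$ algebraically independent polynomials $f_1,\dotsc,f_r,x_{r+1},\dotsc,x_n$, whose degrees are $\delta_1,\dotsc,\delta_r,1,\dotsc,1$, gives $[k(\term x):k(\term f_r,\term x_{[r+1,n]})]_{\insep}\le\delta^r$, so the abstract Witt--Jacobian criterion (Lemma~\ref{lem:AbstractWJCriterion2}) yields $\WJ_{\ell+1,A}(\term f_r,\term x_{[r+1,n]})\ne0$ with this same $\ell$; realizing this differential in $\E^n_{\ell+1}$ (Lemmas~\ref{lem:TeichmuellerLiftAndFiltration},~\ref{lem:WJDegeneracyCharacterization}, and $\tbinom{[n]}{n}=\{[n]\}$) translates it into: $\WJP_{\ell+1,[n]}(\hat f_1,\dotsc,\hat f_r,x_{r+1},\dotsc,x_n)$ is not $(\ell+1)$-degenerate. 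Expanding the $n\times n$ Jacobian determinant along its last $n-r$ rows, which are standard basis vectors, one gets $\WJP_{\ell+1,[n]}(\hat f_1,\dotsc,\hat f_r,x_{r+1},\dotsc,x_n)=\pm\bigl(\prod_{j>r}x_j\bigr)^{p^\ell}\WJP_{\ell+1,[r]}(\hat{\term f}_r)$; since adding $p^\ell$ to an exponent does not change $\min\{v_p(\cdot),\ell\}$, multiplication by $\bigl(\prod_{j>r}x_j\bigr)^{p^\ell}$ preserves $(\ell+1)$-degeneracy, so $P:=\WJP_{\ell+1,[r]}(\hat{\term f}_r)$ is itself not $(\ell+1)$-degenerate.

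Next, for $c\in k$ with lift $\tilde c\in R$ and a prime $q$, the substitution $\sigma\colon x_j\mapsto\tilde c^{\,\lfloor D^{j-r-1}\rfloor_q}$ ($j\in[r+1,n]$) commutes with $\WJP_{\ell+1,[r]}(\cdot)$ — the latter uses only $\partial_{x_1},\dotsc,\partial_{x_r}$, products and powers — so $\sigma(P)=\WJP_{\ell+1,[r]}(\widehat{\sigma\term f_r})$; hence by Theorem~\ref{thm:ExplicitWJCriterion} in $k[\term x_r]$ (where $\tbinom{[r]}{r}=\{[r]\}$ and degrees stay $\le\delta$) it suffices to find $c\in S$ and a prime $q$ in the stated range with $\sigma(P)$ not $(\ell+1)$-degenerate. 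One checks $\deg P\le r\delta(p^\ell-1)+r+r(\delta-1)=r\delta p^\ell\le r\delta^{r+1}<D$ and, using $p^\ell\le\delta^r$ (so $\hat f_i^{\,p^\ell-1}$ has at most $(p^\ell)^{s}$ monomials and the Jacobian minor at most $r!\,s^r$), $\sparse(P)\le(p^\ell)^{rs}\,r!\,s^r\le(2\delta rs)^{4r^2s}$. Then, fixing a monomial $\term x^{\alpha^*}$ witnessing non-$(\ell+1)$-degeneracy of $P$ at precision $p^{\min\{v_p(\alpha^*),\ell\}+1}$, the relevant residue of the coefficient of $x_1^{\alpha^*_1}\cdots x_r^{\alpha^*_r}$ in $\sigma(P)$ equals, up to a unit, $Q(\term c)$ for an explicit nonzero $Q\in k[\term x_{[r+1,n]}]$ with $\sparse(Q)\le\sparse(P)$ and $\deg Q<D$, and one invokes the Klivans--Spielman analysis (as in~\cite{bib:BMS11}): all but at most $\tbinom{\sparse(Q)}{2}(n-r)\lceil\log_2 D\rceil$ primes $q\le n^2(2\delta rs)^{4r^2s}\lceil\log_2 D\rceil^2$ make $x_j\mapsto y^{\lfloor D^{j-r-1}\rfloor_q}$ separate the monomials of $Q$, so the resulting univariate is nonzero of degree $<Dq\le\abs{S}$ and vanishes on fewer than $\abs{S}$ arguments; since $\abs{S}=n^2(2\delta rs)^{4r^2s}\lceil\log_2 D\rceil^2D$ exceeds the number of excluded pairs $(q,c)$, a good pair exists.

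I expect the main obstacle to be this preservation step. Unlike plain non-vanishing, $(\ell+1)$-degeneracy ties the $p$-adic valuation of each coefficient of $P$ to that of its exponent, so one must select the witness $\term x^{\alpha^*}$ and the working precision so that the residue surviving the monomial collisions that $\sigma$ creates is exactly, up to a unit, the value of a single auxiliary polynomial $Q$ of sparsity still at most $\sparse(P)$ — and it is this, together with the sparsity estimate, that pins down the precise ranges for $q$ and $\abs{S}$. The reduction to $I=[r]$ and the determinant identity are comparatively routine. These two delicate points are the lemmas quoted from~\S\ref{app:ProofsApplicationPIT}.
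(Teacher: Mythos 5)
Your proposal follows the paper's proof step for step: apply the explicit criterion to the extended system $(\term{f}_r,\term{x}_{[r+1,n]})$ with $I=[n]$, peel off the factor $(x_{r+1}\dotsb x_n)^{p^\ell}$ using that multiplication by $\term{x}^\alpha$ with $v_p(\alpha)\ge\ell$ preserves $(\ell+1)$-degeneracy (Lemma~\ref{lem:cancelDegen}), bound the degree and sparsity of $\WJP_{\ell+1,[r]}(\term{g}_r)$, and reduce the preservation of non-degeneracy under the substitution to a sparse PIT instance over $k$ for the $p$-free part of the witnessing coefficient polynomial (Lemma~\ref{lem:PreservingNonDegeneracy}). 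The commutation of the substitution with $\WJP_{\ell+1,[r]}$ and the choice of witness monomial are handled exactly as in the paper.

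The one place you deviate is the prime-counting at the end, and as written it does not close within the stated bounds. You invoke the Klivans--Spielman pairwise-separation count, excluding up to $\tbinom{\sparse(Q)}{2}(n-r)\lceil\log_2 D\rceil$ primes. With $\sparse(Q)\le(2\delta rs)^{2r^2s}=:\sigma_0$ this is roughly $\sigma_0^2\, n\lceil\log_2 D\rceil$ bad primes, whereas the interval $[\,n^2\sigma_0^2\lceil\log_2 D\rceil^2\,]=[N^2]$ with $N=n\sigma_0\lceil\log_2 D\rceil$ is only guaranteed (by the estimate the paper uses) to contain $N$ primes --- a factor $\sigma_0$ too few. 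The paper instead uses \cite[Lemma~13]{bib:BHLV09}, which requires only that the mod-$q$ reduction of the Kronecker image stay nonzero mod $p$ (not that all monomials separate) and excludes fewer than $n\,\sparse(Q)\log_2 D$ primes; this linear-in-sparsity count is exactly what makes the stated ranges for $q$ and $\abs{S}$ tight. Relatedly, your final sparsity bound $(2\delta rs)^{4r^2s}$ is looser than needed (your intermediate estimate $(p^\ell)^{rs}r!\,s^r$ already gives $(2\delta rs)^{2r^2s}$), and feeding the looser bound into the counting would overshoot the stated constants even with the linear count. Replace the pairwise-separation step by the BHLV-type count and tighten the sparsity estimate, and your argument matches the paper's.
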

\begin{proof}
	Let $g_i\in B$ be obtained from $f_i$ by lifting each coefficient, so
        that $g_i$ is $s$-sparse and $f_i \equiv g_i \pmod{pB}$.
	Theorem \ref{thm:ExplicitWJCriterion} implies that with $\ell := \lfloor r \log_p \delta \rfloor$
	the polynomial $g := \WJP_{\ell+1,[n]}(\term{g}_r, \term{x}_{[r+1,n]}) \in B$ is not
	$(\ell+1)$-degenerate.
	We have
	\begin{align*}
		g &= (g_1 \dotsb g_r \cdot x_{r+1} \dotsb x_n)^{p^\ell -1} (x_1 \dotsb x_n) \cdot 
		\det \cJ_{\term{x}}(\term{g}_r, \term{x}_{[r+1,n]}) \\
		&= (x_{r+1} \dotsb x_n)^{p^\ell} \cdot (g_1 \dotsb g_r)^{p^\ell -1}(x_1 \dotsb x_r) \cdot
			\det \cJ_{\term{x}_{r}}(\term{g}_r),
	\end{align*}
	since the Jacobian matrix $\cJ_{\term{x}}(\term{g}_r, \term{x}_{[r+1,n]})$ is
	block-triangular with the lower right block being the $(n-r) \times (n-r)$ identity matrix. 
	Define 
	\[
		g' := (g_1 \dotsb g_r)^{p^\ell -1}(x_1 \dotsb x_r) \cdot
		\det \cJ_{\term{x}_{r}}(\term{g}_r) \in B.
	\]
	Then $g = (x_{r+1} \dotsb x_n)^{p^\ell} g'$, and
        $g'$ is not $(\ell+1)$-degenerate by Lemma~\ref{lem:cancelDegen}. 
	Furthermore, we have $\deg(g') \le r\delta(p^\ell-1)+r+r(\delta-1) \le
	r \delta^{r+1} < D$ and
	\[
		\sparse(g') \le \binom{s + (p^\ell-1) - 1}{s-1}^r \cdot r! s^r 
		\le \bigl(s+\delta^{r}\bigr)^{rs}\cdot (rs)^r
		\le (2 \delta rs)^{2r^2s}.
	\]
	By Lemma \ref{lem:PreservingNonDegeneracy}, there exist $c \in S$ and a
        prime $q \le n^2(2 \delta rs)^{4r^2s} \lceil \log_2 D \rceil^2$ such that
	$h := g'(\term{x}_{r}, \boldsymbol{c}') \in R[\term{x}_{r}]$ is not
	$(\ell+1)$-degenerate, where 
        \[
        \boldsymbol{c} := \bigl( c^{\lfloor D^0 \rfloor_q}, c^{\lfloor D^1 \rfloor_q},
        \dotsc, c^{\lfloor D^{n-r-1} \rfloor_q} \bigr)\in k^{n-r},
        \]
        and $\boldsymbol{c}'\in R^{n-r}$ is the componentwise lift of $\boldsymbol{c}$ to $R$.
	Since
    $h = \WJP_{\ell+1,[r]}\bigl(g_1(\term{x}_{r}, \boldsymbol{c}')$, $\dotsc, g_r(\term{x}_{r}, \boldsymbol{c}')\bigr)$
	and $f_i(\term{x}_{r}, \boldsymbol{c}) \equiv g_i(\term{x}_{r}, \boldsymbol{c}') \pmod{pB}$ for all $i \in [r]$,
	Theorem \ref{thm:ExplicitWJCriterion} implies that
        $f_1(\term{x}_{r}, \boldsymbol{c})$, $\dotsc, f_r(\term{x}_{r}, \boldsymbol{c})$
        are algebraically independent over $k$.
\end{proof}

For an index set $I = \{i_1 < \dotsb < i_r\} \in \tbinom{[n]}{r}$ denote its complement by
$[n] \setminus I = \{ i_{r+1} < \dotsb < i_n \}$. Define the map
$\pi_I \colon k^n \rightarrow k^n$, $(a_1, \dotsc, a_n) \mapsto (a_{i_1}, \dotsc, a_{i_n})$.
We now restate, in more detail, and prove Theorem \ref{thm:HittingSet}.

\begin{thm}[Hitting-set]
	Let $\term{f}_m \in A$ be $s$-sparse, of degree at most
        $\delta$, having transcendence degree at most $r$, and assume $s,\delta,r\ge 1$.
	Let $C \in k[\term{y}_m]$ such that the degree of $C(\term{f}_m)$ is bounded by $d$.
	Define the subset
	\[
		\mathcal{H} := \Bigl\{ \pi_I\bigl( \boldsymbol{b}, 
		c^{\lfloor D^0 \rfloor_q}, c^{\lfloor D^1 \rfloor_q}, \dotsc, c^{\lfloor D^{n-r-1} \rfloor_q} \bigr) \,\bigl\vert\; 
		\text{$I \in \tbinom{[n]}{r}$, $\boldsymbol{b} \in S_1^r$, $c \in S_2$, $q \in [N]$} \Bigr\}
	\]
	of $k^n$, where $S_1, S_2 \subseteq k$ are arbitrary subsets of size $d+1$ and $n^2 (2 \delta rs)^{9 r^2s}$
	respectively, $D := r \delta^{r+1} + 1$, and $N := n^2 (2 \delta rs)^{7 r^2s}$.
	
	If $C(\term{f}_m)\ne0$ then there exists
	$\boldsymbol{a} \in \mathcal{H}$ such that
        $\bigl(C(\term{f}_m)\bigr)(\boldsymbol{a}) \neq 0$. The set
        $\mathcal{H}$ can be constructed in
        $\poly\bigl( (nd)^r, (\delta rs)^{r^2 s}\bigr)$-time.
\end{thm}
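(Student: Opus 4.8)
The plan is to prove Theorem~\ref{thm:HittingSet} by combining three ingredients: the variable reduction of Lemma~\ref{lem:FaithfulSubstitution}, the principle from~\cite{bib:BMS11} (see also \S\ref{app:ProofsApplicationPIT}) that a polynomial substitution preserving the algebraic independence of a transcendence basis of $\term{f}_m$ also preserves the (non)zeroness of $C(\term{f}_m)$, and finally a Schwartz--Zippel argument~\cite{bib:Sch80} on the resulting $r$-variate polynomial.

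First I would reduce to the case $\trdeg_k\{\term{f}_m\}=r$ with $f_1,\dots,f_r$ a transcendence basis; the general case follows by padding the transcendence basis with $r-r'$ of the variables $x_j$, which keeps all the polynomials $s$-sparse of degree $\le\delta$ and does not change the field they generate (so faithfulness is unaffected). Now $f_1,\dots,f_r$ are algebraically independent, and since $\{x_1,\dots,x_n\}$ generates $k(\term{x})$ over $k(\term{f}_r)$ it contains a transcendence basis of this extension; choose $I\in\tbinom{[n]}{r}$ so that $[n]\setminus I$ indexes such a basis, i.e.\ $f_1,\dots,f_r,\term{x}_{[n]\setminus I}$ are algebraically independent (the padding variables necessarily land in $I$). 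Applying the coordinate permutation $\pi_I$ puts us exactly in the hypothesis of Lemma~\ref{lem:FaithfulSubstitution}, with $\term{x}_{[n]\setminus I}$ in the role of $\term{x}_{[r+1,n]}$.

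Then Lemma~\ref{lem:FaithfulSubstitution} (with the same $D=r\delta^{r+1}+1$) yields $c$ in a set of size $n^2(2\delta rs)^{4r^2s}\lceil\log_2 D\rceil^2 D$ and a prime $q\le n^2(2\delta rs)^{4r^2s}\lceil\log_2 D\rceil^2$ such that the substitution $\Phi$ sending the variables indexed by $[n]\setminus I$ to $c^{\lfloor D^0\rfloor_q},\dots,c^{\lfloor D^{n-r-1}\rfloor_q}$ (and fixing $\term{x}_I$) keeps $\Phi(f_1),\dots,\Phi(f_r)$ algebraically independent. Hence $\trdeg_k\Phi(\term{f}_m)\ge r$, and as a substitution cannot raise transcendence degree we get equality, i.e.\ $\Phi$ is faithful to $\term{f}_m$; by~\cite{bib:BMS11} this forces $C(\term{f}_m)\neq 0\Rightarrow C(\Phi(\term{f}_m))\neq 0$. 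Now $C(\Phi(\term{f}_m))$ is a nonzero polynomial in the $r$ variables $\term{x}_I$ of degree $\le d$, so by~\cite{bib:Sch80} it is nonzero at some point of $S_1^r$ whenever $\abs{S_1}\ge d+1$. The corresponding full point is $\pi_I(\term{b},c^{\lfloor D^0\rfloor_q},\dots,c^{\lfloor D^{n-r-1}\rfloor_q})\in\mathcal{H}$, provided $S_2$ contains the $c$-set and $[N]$ the $q$-range above; since $\log_2 D=O(r\log(\delta r))$, the stated sizes $\abs{S_2}=n^2(2\delta rs)^{9r^2s}$ and $N=n^2(2\delta rs)^{7r^2s}$ comfortably absorb the extra $\lceil\log_2 D\rceil^2$ and $D$ factors.

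For the complexity bound, $\abs{\mathcal{H}}\le\tbinom{n}{r}(d+1)^r\cdot\abs{S_2}\cdot N=\poly\bigl((nd)^r,(\delta rs)^{r^2s}\bigr)$, and constructing $\mathcal{H}$ amounts to enumerating $I$, $\term{b}\in S_1^r$, $c\in S_2$, $q\in[N]$ and computing the exponents $\lfloor D^i\rfloor_q$; the only extra point is that $S_1,S_2$ must be chosen inside $\ol{\F}_p$, for which one passes to a finite extension $\F_{p^t}$ of size $\ge\max\{d+1,\abs{S_2}\}$, computable in time $\poly(t,\log p)$ by standard algorithms (e.g.\ \cite{bib:Len91}). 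I expect the one genuinely delicate point to be the faithfulness transfer --- verifying that preserving algebraic independence of a transcendence basis of $\{\term{f}_m\}$ suffices to preserve nonzeroness of $C(\term{f}_m)$; given Lemma~\ref{lem:FaithfulSubstitution}, everything else is routine parameter bookkeeping.
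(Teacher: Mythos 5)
Your proposal is correct and follows essentially the same route as the paper's proof: reduce to a transcendence basis, choose $I$ so that $\term{f}_r,\term{x}_{[n]\setminus I}$ are independent, apply Lemma~\ref{lem:FaithfulSubstitution}, transfer nonzeroness via the faithfulness lemma of \cite{bib:BMS11} (Lemma~\ref{lem:FaithfulIsInjectiveOnSubalgebra}), and finish with \cite{bib:Sch80}. Your explicit padding argument for the case $\trdeg_k(\term{f}_m)<r$ and the parameter bookkeeping are welcome elaborations of steps the paper compresses into ``we may assume.''
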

\begin{proof}
	We may assume that $\term{f}_r$ are algebraically independent over
        $k$
	There exists $I = \{ i_1 < \dotsb < i_r \}\subseteq[n]$ with complement
	$[n] \setminus I = \{ i_{r+1} < \dotsb < i_n \}$ such that $\term{f}_r$, $\term{x}_{[n] \setminus I}$
	are algebraically independent. By the definition of~$\mathcal{H}$, we may assume that
	$I = [r]$. By Lemma \ref{lem:FaithfulSubstitution}, there exist $c \in S_2$ and a prime $q \in [N]$
	such that $f_1(\term{x}_{r}, \boldsymbol{c}), \dotsc, f_r(\term{x}_{r}, \boldsymbol{c}) 
	\in k[\term{x}_{r}]$ are algebraically independent, where $\boldsymbol{c} = 
	\bigl( c^{\lfloor D^0 \rfloor_q}, c^{\lfloor D^1 \rfloor_q}, \dotsc, c^{\lfloor D^{n-r-1} \rfloor_q} \bigr)
	\in k^{n-r}$. 
	If $C(\term{f}_m) \neq 0$, then Lemma \ref{lem:FaithfulIsInjectiveOnSubalgebra} implies that 
	$\bigl(C(\term{f}_m)\bigr)(\term{x}_{r}, \boldsymbol{c}) \neq 0$.
	From Lemma \ref{lem:CombinatorialNullstellensatz} we obtain
	$\boldsymbol{b} \in S_1$ such that $\bigl(C(\term{f}_m)\bigr)(\boldsymbol{b}, \boldsymbol{c}) \neq 0$. Thus,
	$\boldsymbol{a} := (\boldsymbol{b}, \boldsymbol{c}) \in \mathcal{H}$
        satisfies the first assertion.
	The last one is clear by construction.
\end{proof}

\section{Discussion}

In this paper we generalized the Jacobian criterion for algebraic independence to any characteristic. 
The new criterion raises several questions. 
The most important one from the computational point of view:
Can the degeneracy condition in Theorem \ref{thm:ExplicitWJCriterion} be efficiently
tested? 
The hardness result for the general degeneracy problem shows that an affirmative
answer to that question must exploit the special structure of $\WJP$.
Anyhow, for constant or logarithmic $p$ an efficient algorithm for this problem
is conceivable.

In \S \ref{sec:ApplicationPIT}, we used the explicit Witt-Jacobian criterion to construct
faithful homomorphisms which are useful for testing polynomial identities. 
However, the complexity of this method is exponential in the sparsity of the given polynomials.
Can we exploit the special form of the $\WJP$ to improve the complexity bound?
Or, can we prove a criterion involving only the Jacobian polynomial (which in this case is sparse)?
(See an attempt in Theorem \ref{thm:Necessity}.)


\subsection*{Acknowledgements} We are grateful to the Hausdorff Center for Mathematics, Bonn, for its
kind support. J.M.~would like to thank the Bonn International Graduate School in 
Mathematics for research funding. N.S.~thanks Chandan Saha for explaining his results on finding coefficients
of monomials in a circuit \cite{bib:KS11}. We also thank Stefan Mengel for pointing
out the hardness of the degeneracy-problem.

\bibliographystyle{amsalpha}
\bibliography{refs}

\appendix

\section{Missing theorems, lemmas and proofs} \label{app:Proofs}

In this appendix we present statements and proofs that did not fit in the main part
due to space constraints.

\subsection{Degeneracy of the \texorpdfstring{$p$}{p}-adic Jacobian}

\begin{thm}[Necessity] \label{thm:Necessity}
	Let $\term{f}_r \in A$ and $\term{g}_r \in B$ such that $\forall i\in[r], f_i \equiv g_i \pmod{pB}$.
	If $\term{f}_r$  are algebraically dependent, then for any $r$ variables $\term{x}_I$, 
	$I \in \tbinom{[n]}{r}$,  the $p$-adic polynomial 
	$\hat{J}_{\term{x}_I}(\term{g}_r):=\bigl( {\textstyle\prod_{j \in I} x_j} \bigr) \cdot 
	\det\cJ_{\term{x}_I}(\term{g}_r)$ is degenerate.
	The converse does not hold.
\end{thm}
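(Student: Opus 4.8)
The plan is to treat the two assertions separately.

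\emph{Degeneracy of $\hat J$.} I would begin with a divisibility valid for \emph{arbitrary} $\term{g}_r\in B$: for all $I\in\tbinom{[n]}{r}$ and $\alpha\in\N^n$, the coefficient of $\term{x}^\alpha$ in $\hat J_{\term{x}_I}(\term{g}_r)$ is divisible by $p^{v_p(\alpha)}$. Indeed, $\hat J_{\term{x}_I}(\term{g}_r)=\det\bigl(x_j\partial_{x_j}g_i\bigr)_{i\in[r],\,j\in I}$; expanding multilinearly in the coefficients of the $g_i$ writes the coefficient of $\term{x}^\alpha$ as a $\Z$-combination, over $\beta^{(1)}+\dotsb+\beta^{(r)}=\alpha$, of determinants $\det(\beta^{(i)}_j)_{i,j}$, and adding all rows of such a determinant to its first row gives the row $(\alpha_j)_{j\in I}$, all of whose entries are divisible by $p^{v_p(\alpha)}$. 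As a consequence, if some $f_i=0$ then $g_i=p h_i$ with $h_i\in B$, hence $\hat J_{\term{x}_I}(\term{g}_r)=p\cdot\hat J_{\term{x}_I}(g_1,\dotsc,h_i,\dotsc,g_r)$ has the coefficient of $\term{x}^\alpha$ divisible by $p^{v_p(\alpha)+1}$, i.e.\ is degenerate; so we may assume all $f_i\neq0$.

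Now suppose $\term{f}_r$ are algebraically dependent. Lemma~\ref{lem:AbstractWJCriterion1} gives $\WJ_{\ell+1,A}(\term{f}_r)=0$ in $\W_{\ell+1}\Om^r_A$ for every $\ell\ge0$; transporting this to the explicit model via Theorem~\ref{thm:DeRhamWittOfPolynomialRing} and Lemmas~\ref{lem:TeichmuellerLiftAndFiltration} and~\ref{lem:WJDegeneracyCharacterization}, it says that $\WJP_{\ell+1,I}(\term{g}_r)=(g_1\dotsb g_r)^{p^\ell-1}\,\hat J_{\term{x}_I}(\term{g}_r)$ is $(\ell+1)$-degenerate for all $\ell\ge0$ and all $I$. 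Since ``degenerate'' means ``$(\ell+1)$-degenerate for all $\ell$'', the remaining task is to strip off the factor $(g_1\dotsb g_r)^{p^\ell-1}$. I would phrase this inside $\E^r$: set $\omega_0:=dg_1\wedge\dotsb\wedge dg_r\in\E^r$; then $d\omega_0=0$ forces $\Frob^{-\ell}\omega_0\in\E^r$, and Lemma~\ref{lem:GradedComponentsFiltration} together with $\nu(\ell+1,\alpha/p^\ell)=\min\{v_p(\alpha)+1,\ell+1\}$ shows that $\hat J_{\term{x}_I}(\term{g}_r)$ is degenerate for all $I$ iff $\Frob^{-\ell}\omega_0\in\Fil^{\ell+1}\E^r$ for all $\ell$, while the form realizing $\WJ_{\ell+1,A}(\term{f}_r)=0$ (by Lemma~\ref{lem:TeichmuellerLiftAndFiltration}, $d(\Frob^{-\ell}g_1)^{p^\ell}\wedge\dotsb\wedge d(\Frob^{-\ell}g_r)^{p^\ell}$) equals $(\Frob^{-\ell}(g_1\dotsb g_r))^{p^\ell-1}\cdot\Frob^{-\ell}\omega_0$ and lies in $\Fil^{\ell+1}\E^r$. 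So everything reduces to: multiplication by the integral element $(\Frob^{-\ell}(g_1\dotsb g_r))^{p^\ell-1}$, which is nonzero modulo $p$ since $g_1\dotsb g_r\not\equiv0\pmod{pB}$, reflects membership in $\Fil^{\ell+1}\E^r$. This is the heart of the matter and the main obstacle: the corresponding cancellation is \emph{false} for arbitrary polynomials in place of $\hat J_{\term{x}_I}(\term{g}_r)$, so the argument must use both the graded description of $\Fil^{\ell+1}\E^r$ and the baseline divisibility above; concretely I would run a leading-term argument over $G$ ordered by total degree and then lexicographically, exploiting that the lowest homogeneous component of $(\Frob^{-\ell}(g_1\dotsb g_r))^{p^\ell-1}$ has a unit coefficient in $\W(k)$, the delicate point being the comparison of $\nu(\ell+1,\cdot)$ across the gradings that occur (it is here that the $p^{v_p(\alpha)}$ bound is needed to close the estimate). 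An alternative route is a Laplace expansion of $\det\cJ_{\term{x}_I}$ along one row, reducing $r$, with base case the ``irreducible'' situation $\trdeg_k(\term{f}_r)=r-1$; this meets the same cancellation difficulty.

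\emph{Failure of the converse.} I would exhibit a counterexample: take $n=r=2$ and $\term{f}_2=\{x_1,x_2^p\}$, which are algebraically independent over $k$ since $y_1\mapsto x_1,\ y_2\mapsto x_2^p$ is injective on monomials, hence on $k[y_1,y_2]$. For any lift write $g_1=x_1+p\phi$, $g_2=x_2^p+p\psi$; the only minor is
\[
	\hat J_{\term{x}_{\{1,2\}}}(\term{g}_2)=p\,x_1x_2^p+p\,x_1x_2\,\partial_{x_2}\psi+p^2\,x_1x_2\bigl(x_2^{p-1}\partial_{x_1}\phi+\partial_{x_1}\phi\,\partial_{x_2}\psi-\partial_{x_2}\phi\,\partial_{x_1}\psi\bigr).
\]
The first summand only affects $\term{x}^{(1,p)}$ (with $v_p=0$); in the second, the coefficient of $\term{x}^\alpha$ is $p\,\alpha_2\,\psi_{(\alpha_1-1,\alpha_2)}$, hence divisible by $p^{v_p(\alpha)+1}$; in the third, the coefficient of $\term{x}^\alpha$ in $x_1x_2(\partial_{x_1}\phi\,\partial_{x_2}\psi-\partial_{x_2}\phi\,\partial_{x_1}\psi)$ equals $\sum_{\beta+\gamma=\alpha}\phi_\beta\psi_\gamma(\beta_1\alpha_2-\beta_2\alpha_1)$ and that in $x_1x_2^p\,\partial_{x_1}\phi$ is $\alpha_1\phi_{(\alpha_1,\alpha_2-p)}$, so this summand contributes divisibility by $p^{v_p(\alpha)+2}$. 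Hence $\hat J_{\term{x}_{\{1,2\}}}(\term{g}_2)$ is degenerate for every lift, although $\term{f}_2$ is algebraically independent.
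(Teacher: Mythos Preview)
Your reduction is on target: what must be shown is exactly that
\[
\Frob^{-\ell}\omega_0\;=\;\Frob^{-\ell}\bigl(dg_1\wedge\dotsb\wedge dg_r\bigr)\;\in\;\Fil^{\ell+1}\E^r
\quad\text{for all }\ell\ge0,
\]
since then Lemma~\ref{lem:GradedComponentsFiltration} and your baseline divisibility give degeneracy of every $\hat J_{\term x_I}(\term g_r)$. The gap is the step you yourself flag as ``the heart of the matter'': you only know that the \emph{Witt--Jacobian} form
$\bigl(\Frob^{-\ell}(g_1\dotsb g_r)\bigr)^{p^\ell-1}\cdot\Frob^{-\ell}\omega_0$
lies in $\Fil^{\ell+1}\E^r$, and you propose to cancel the prefactor by a leading--term argument. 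This cancellation is not established. Concretely, if the lowest monomial $\term x^{\mu}$ of $g_1\dotsb g_r$ has $v_p(\mu)=0$, then comparing the lowest graded piece of the product only yields divisibility of the $\term x^\alpha$--coefficient of $\hat J_I$ by $p^{\,\nu(\ell+1,\,p^{-\ell}((p^\ell-1)\mu+\alpha))}=p^{\min\{v_p(\mu)+1,\ell+1\}}=p$, whereas you need $p^{\min\{v_p(\alpha)+1,\ell+1\}}$; your baseline $p^{v_p(\alpha)}$ does not close this gap (take e.g.\ $v_p(\alpha)=2$). Higher graded pieces mix many terms of the prefactor and of $\omega_0$, and I do not see a way to push the induction through; at any rate the argument as written is incomplete.

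The paper avoids this cancellation entirely by choosing a \emph{different} element of $\W_{\ell+1}\Om_A^r$ whose image under $\tau$ is $\Frob^{-\ell}\omega_0$ \emph{on the nose}. Namely, set
\[
\gamma\;:=\;d\V^{\ell}[f_1]_{\le\ell+1}\wedge\dotsb\wedge d\V^{\ell}[f_r]_{\le\ell+1}\;\in\;\W_{\ell+1}\Om_A^r.
\]
Since each $\V^\ell[f_i]$ lies in $\W_{\ell+1}(R)$ for $R=k[\term f_r]$, the same vanishing you invoke (Corollary~\ref{cor:DiffModuleZero}) kills $\gamma$. One checks, via the identity $d=\Frob^\ell d\V^\ell$ and the computation $\tau(\V^\ell[f_i])=\V^\ell g_i$, that $\tau(\gamma)=d\V^\ell g_1\wedge\dotsb\wedge d\V^\ell g_r=\Frob^{-\ell}\omega_0$. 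Thus $\gamma=0$ gives $\Frob^{-\ell}\omega_0\in\Fil^{\ell+1}\E^r$ immediately, with no prefactor to remove. In short: replace the Teichm\"uller lifts $[f_i]$ by $\V^\ell[f_i]$ before taking $d$, and the problematic factor $(g_1\dotsb g_r)^{p^\ell-1}$ never appears.

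Your counterexample for the failure of the converse is correct, and in fact stronger than needed: you show $\hat J$ is degenerate for \emph{every} lift of $(x_1,x_2^p)$. The paper simply takes $f_1=x_1^p$, $f_2=x_2^p$ with the obvious lifts, for which $\hat J_{\{1,2\}}=p^2x_1^px_2^p$ is visibly degenerate while $\term f_2$ is independent.
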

\begin{proof}
Fix $\ell\in\N$ such that $p^\ell$ is at least the degree of $\hat{J}_{\term{x}_I}(\term{g}_r)$.
Consider the differential form $\gamma:= 
d\V^{\ell}[f_1]_{\le\ell+1} \wedge \dotsb \wedge d\V^{\ell}[f_r]_{\le\ell+1} \in \W_{\ell+1}\Om^r_A$.

Assume that $f_1, \dotsc, f_r$ are algebraically dependent and set $R := k[f_1, \dotsc, f_r]$.
Corollary~\ref{cor:DiffModuleZero} implies $\W_{\ell+1}\Om^r_R = 0$, thus $\gamma$ vanishes in 
$\W_{\ell+1}\Om^r_R$. The inclusion $R \subseteq A$ induces a homomorphism $\W_{\ell+1}\Om_R^r \to \W_{\ell+1}\Om_A^r$, hence $\gamma$ vanishes in $\W_{\ell+1}\Om^r_A$ itself.

As in the proof of Lemma \ref{lem:TeichmuellerLiftAndFiltration}, we first make $\V^{\ell}[f]_{\le\ell+1}$
explicit. Let $g\in B$ such that $f\equiv g \pmod{pB}$, and write $g = \sum_{i=1}^s c_i \term{x}^{\alpha_i}$, 
where $c_i \in \W(k)$ and $\alpha_i \in \N^n$ for $i \in [s]$.
Note that $\Frob^\ell(\V^{\ell}[f]_{\le\ell+1})=p^\ell [f]_{\le\ell+1}$. Also, for 
$w:=\V^{\ell}(\sum_{i=1}^s c_i [\term{x}^{\alpha_i}] )\in \W_{\ell+1}(A)$ we have 
$\Frob^\ell(w)=p^\ell\sum_{i=1}^s c_i [\term{x}^{\alpha_i}]$. Since by assumption 
$([f]-\sum_{i=1}^s c_i [\term{x}^{\alpha_i}])\in \V\W(A)$, we get 
$p^{\ell}([f]-\sum_{i=1}^s c_i [\term{x}^{\alpha_i}])\in \V^{\ell+1}\W(A)$. This proves 
$\Frob^\ell(\V^{\ell}[f]_{\le\ell+1})=\Frob^\ell(w)$. The injectivity of $\Frob^\ell$ implies 
$\V^{\ell}[f]_{\le\ell+1}=w$. Finally, we can apply $\tau\colon \W_{\ell+1}(A) \rightarrow \E_{\ell+1}^0$
to get: $\tau(\V^{\ell}[f]_{\le\ell+1})=\tau(w)=\V^\ell(g)$. 

Thus, we have the explicit condition $\gamma':=\tau(\gamma) = d\V^\ell(g_1)\wedge\dotsb\wedge d\V^\ell(g_r) 
\in \Fil^{\ell+1}\E^r$. Now we continue to calculate $\gamma'$ much like in Lemma \ref{lem:WJDegeneracyCharacterization}.
The formula $d\Frob=p\Frob d$ (see~\cite[(I.2.2.1)]{bib:Il79}) implies $d=\Frob dp\Frob^{-1}=\Frob d\V$, hence
$d=\Frob^\ell d\V^\ell$. We infer $\Frob^\ell d (V^{\ell}g_i)=dg_i$, hence 
$\Frob^\ell \gamma' = d g_1 \wedge \dotsb \wedge d g_r$. Furthermore,
	\[
		d g_1 \wedge \dotsb \wedge d g_r = \sideset{}{_I}\sum {\textstyle \bigl( \prod_{j \in I} x_j \bigr)} 
		\cdot \det \cJ_{\term{x}_I}(\term{g}_r)\cdot {\textstyle\bigwedge_{j \in I} d \log x_j},
	\]
	where the sum runs over all $I \in \tbinom{[n]}{r}$. This yields
	\[
		\gamma' = \sideset{}{_I}\sum \Frob^{-\ell}\hat{J}_{\term{x}_I}(\term{g}_r) 
			\cdot {\textstyle\bigwedge_{j \in I} d \log x_j},
	\]
  and this representation is unique.
	
As in the proof of Lemma  \ref{lem:WJDegeneracyCharacterization} we conclude 
        \begin{align*}
        \gamma'\in\Fil^{\ell+1}\E^r & \iff \forall\beta\in G\colon\ (\gamma')_\beta\in p^{\nu(\ell+1, \beta)} \bigl(\E^r\bigr)_\beta \\
        & \iff \forall\beta\in G, I\in\tbinom{[n]}{r}\colon\ (\Frob^{-\ell}\hat{J}_{\term{x}_I}(\term{g}_r))_\beta\in p^{\nu(\ell+1, \beta)} \Frob^{-\ell} B \\
        & \iff \forall I\in\tbinom{[n]}{r}\colon\ \hat{J}_{\term{x}_I}(\term{g}_r)\text{ is }(\ell+1)\text{-degenerate},
        \end{align*}
where we used Lemma \ref{lem:DegeneracyCharacterization}. Since our $\ell$ is large enough, this
is finally equivalent to the degeneracy of $\hat{J}_{\term{x}_I}(\term{g}_r)$. This finishes the proof
of one direction.

The converse is false, because if we fix $f_1:=x_1^p$ and $f_2:=x_2^p$, then 
$\hat{J}_{\term{x}_2}(x_1^p,x_2^p)=p^2 x_1^p x_2^p$. This is clearly degenerate, but $f_1,f_2$ are
algebraically independent.
\end{proof}

\subsection{Proofs for Section \ref{sec:Preliminaries}}
\label{app:Preliminaries}

For a polynomial $f$ in some polynomial ring $k[\term{x}_n]$ and a vector $\term{w}\in\N^n$, the 
{\em weighted-degree} is defined as 
$$\max\bigl\{\sum_{i=1}^n w_ie_i\ |\ \term{e}\in\N^n, \coeff(\term{x}^{\term{e}},f)\ne0\bigr\}.$$
For the following proof we need to define a map $\mu_{\term{w}}:k[\term{x}]\rightarrow k[\term{x}]$
that extracts the {\em highest weighted-degree part}. 
I.e.~for $f\in k[\term{x}]$ of weighted-degree $\delta$, $\mu_{\term{w}}(f)$ is the sum of the weighted-degree-$\delta$ terms in $f$. E.g.~$\mu_{(1,3)}(2x_1^2+3x_2)=3x_2$.  
Note that $\mu_{\term{w}}(f)=0$ iff $f=0$.
\begin{repthm}{thm:PerronsTheorem}[{\textbf{restated}}]
Let $k$ be a field, $\term{f}_{n} \in k[\term{x}]$ be algebraically independent, and set 
$\delta_i := \deg(f_i)$ for $i \in [n]$. 
Then $[k(\term{x}_n):k(\term{f}_n)] \le \delta_1 \dotsb \delta_{n}$.
\end{repthm}
\begin{proof}
Define for each $i\in[n]$ the {\em homogenization} $g_i:=z^{\delta_i}\cdot f_i(\term{x}/z) \in k[z,\term{x}]$
of $f_i$ with respect to degree $\delta_i$.

Firstly, $z,\term{g}_n$ are algebraically independent over $k$. Otherwise,
there is an irreducible polynomial $H\in k[\term{y}_{[0,n]}]$ such
that $H(z,\term{g}_n)=0$. Evaluation at $z=1$ yields $H(1,\term{f}_n)=0$.
The algebraic independence of $\term{f}_n$ implies $H(1,\term{y}_n)=0$, hence
$(y_0-1)| H(\term{y}_{[0,n]})$ by the Gauss Lemma.
This contradicts the irreducibility of $H$.

Thus, $d':=[k(z,\term{x}_n):k(z,\term{g}_n)]$ is finite. We will now compare it
with $[k(\term{x}_n):k(\term{f}_n)]=:d$. Denote the vector spaces $k(z,\term{x}_n)$
over $k(z,\term{g}_n)$ by $\VV'$, and 
$k(\term{x}_n)$ over $k(\term{f}_n)$ by $\VV$. Each of these vector spaces admits a finite basis consisting of 
monomials in $\term{x}_n$ only.

Suppose $S=\{\term{x}^\alpha|\alpha\in I\}$, for some $I\subset\N^n$, is a basis
of $\VV'$. Assume 
that
$$\sum_{\alpha\in I} h_\alpha(\term{f}_n)\cdot \term{x}^\alpha = 0$$
with some $h_{\alpha}\in k[\term{y}_n]$.
By {\em homogenizing} each term in this equation
with respect to the same sufficiently large degree, 
we obtain $h'_{\alpha}\in k[\term{y}_{[0,n]}]$ such that 
$$\sum_{\alpha\in I} h'_\alpha(z,\term{g}_n)\cdot \term{x}^\alpha = 0.$$
Since the $\term{x}^\alpha$ are linearly independent over $k(z,\term{g}_n)$, we
conclude $h'_\alpha(z,\term{g}_n)=0$, hence $h_\alpha(\term{f}_n)=0$ for all $\alpha$.
Thus, $d'\le d$.

Suppose $S=\{\term{x}^\alpha|\alpha\in I\}$, for $I\subset\N^n$, is a basis of $\VV$. If they are linearly dependent in 
$\VV'$, then there exist $h_{\alpha}\in k[\term{y}_{[0,n]}]$ such that 
\begin{equation}\label{eq:PerronExt}
\sum_{\alpha\in I} h_\alpha(z,\term{g}_n)\cdot \term{x}^\alpha = 0
\end{equation}
is a nontrivial equation. Let $\term{1}:=(1,\ldots,1)\in\N^{n+1}$, $\term{w}:=(1,\term{\delta}_n)$ and $h'_{\alpha}:=\mu_{\term{w}}(h_{\alpha})\in k[\term{y}_{[0,n]}]$.
Applying $\mu_{\term{1}}$ on (\ref{eq:PerronExt}) we get for some nonempty $J\subseteq I$ a nontrivial equation:
$$\sum_{\alpha\in J} h'_\alpha(z,\term{g}_n)\cdot \term{x}^\alpha = 0.$$
Since $h'_\alpha(z,\term{g}_n)$ is homogeneous and nonzero, it cannot be divisible by $(z-1)$. Thus, 
$h'_\alpha(1,\term{f}_n)\ne0$ and we get a nontrivial equation in $\VV$:
$$\sum_{\alpha\in J} h'_\alpha(1,\term{f}_n)\cdot \term{x}^\alpha = 0.$$
This contradicts the choice of $I$. Hence, $d\le d'$.

Finally, $d=d'$ and from \cite[Corollary 1.8]{bib:K96} we know $d'\le \delta_1 \dotsb \delta_{n}$.
\end{proof}

Now we use the notation of \S \ref{sec:WittVectors}.

\begin{lem}[$p$-th powering] \label{lem:PthPowerAndFiltration}
	Let $A$ be an $\F_p$-algebra and let $a, b \in \W(A)$ such that $a-b \in \V\W(A)$. Then
	$a^{p^\ell} - b^{p^\ell} \in \V^{\ell+1}\W(A)$ for all $\ell \ge 0$.
\end{lem}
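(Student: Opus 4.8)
The plan is to isolate a single sharpening step and then iterate it. Concretely, I will first establish the auxiliary claim: \emph{for every $m \ge 1$, if $c, d \in \W(A)$ satisfy $c - d \in \V^m\W(A)$, then $c^p - d^p \in \V^{m+1}\W(A)$.} Granting this, the lemma follows by induction on $\ell$. For $\ell = 0$ the assertion is just the hypothesis $a - b \in \V\W(A) = \V^{1}\W(A)$. For the inductive step, assuming $a^{p^\ell} - b^{p^\ell} \in \V^{\ell+1}\W(A)$, I apply the claim with $c := a^{p^\ell}$, $d := b^{p^\ell}$ and $m := \ell+1$ to get $a^{p^{\ell+1}} - b^{p^{\ell+1}} = c^p - d^p \in \V^{\ell+2}\W(A)$.

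To prove the auxiliary claim I write $c = d + \V^m w$ with $w \in \W(A)$ and expand, in the commutative ring $\W(A)$,
\[
	c^p - d^p = \sum_{i=1}^{p} \binom{p}{i}\, d^{p-i}\, (\V^m w)^i .
\]
I will use only the identities from \S\ref{sec:WittVectors}: $\V$ (hence $\V^m$) is additive; $\V\Frob = \Frob\V = p$, so $\Frob$ and $\V$ commute and $\Frob^m\V^m = p^m$; and $a\,\V b = \V(\Frob a\cdot b)$, which iterates to $a\,\V^m b = \V^m(\Frob^m a\cdot b)$ and shows that each $\V^m\W(A)$ is an ideal. These also give $p\,\W(A) \subseteq \V\W(A)$ (since $px = \V\Frob x$), hence $p\,\V^m\W(A) \subseteq \V^{m+1}\W(A)$. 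The key sub-step is that $(\V^m w)^i \in \V^{m+1}\W(A)$ for all $i \ge 2$: indeed
\[
	(\V^m w)^2 = \V^m\bigl(\Frob^m(\V^m w)\cdot w\bigr) = \V^m(p^m w^2),
\]
and since $m \ge 1$ we have $p^m w^2 \in p\,\W(A) \subseteq \V\W(A)$, so $(\V^m w)^2 \in \V^m(\V\W(A)) = \V^{m+1}\W(A)$; for $i \ge 3$ this persists because $\V^{m+1}\W(A)$ is an ideal. Now I check the summands termwise: the $i = 1$ term equals $p\,d^{p-1}\,\V^m w \in p\,\V^m\W(A) \subseteq \V^{m+1}\W(A)$; each term with $2 \le i \le p-1$ lies in $\V^{m+1}\W(A)$ because $(\V^m w)^i$ does and $\V^{m+1}\W(A)$ is an ideal (so divisibility of $\binom{p}{i}$ by $p$ is not even needed here); and the $i = p$ term $(\V^m w)^p$ lies in $\V^{m+1}\W(A)$ by the sub-step, since $p \ge 2$. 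Summing, $c^p - d^p \in \V^{m+1}\W(A)$.

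The one point that needs care is the hypothesis $m \ge 1$: it is precisely what turns $p^m w^2 \in p^m\W(A)$ into an element of $\V\W(A)$, supplying the extra factor of $\V$ in the sub-step. This is why the auxiliary claim cannot begin at $m = 0$ and why the outer induction must be seeded by $a - b \in \V\W(A)$. Beyond this, the argument is routine manipulation of the Verschiebung and Frobenius identities, so I do not anticipate a genuine obstacle.
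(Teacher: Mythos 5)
Your proof is correct and follows essentially the same route as the paper's: induction on $\ell$, binomial expansion of $(d+\V^m w)^p$, and the identities $\V\Frob=\Frob\V=p$ and $a\,\V^m b=\V^m(\Frob^m a\cdot b)$ to push each term into $\V^{m+1}\W(A)$. The only cosmetic difference is bookkeeping: the paper extracts the factor $p=\V\Frob$ from $\binom{p}{i}$ for all $1\le i\le p-1$, while you observe that for $i\ge 2$ the term $(\V^m w)^i$ already lands in $\V^{m+1}\W(A)$ without using divisibility of the binomial coefficient.
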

\begin{proof}
	We use induction on $\ell$, where the base case $\ell = 0$ holds by assumption.
	Now let $\ell \ge 1$. By induction hypothesis, there is $c \in \V^\ell\W(A)$ such that 
	$a^{p^{\ell-1}} = b^{p^{\ell-1}} + c$. Using $\V\Frob = p$ and $p^{-1}\tbinom{p}{i} \in \N$ 
	for $i \in [p-1]$, we conclude
	$a^{p^\ell} -  b^{p^\ell} = \bigl( b^{p^{\ell-1}} + c \bigr)^p -  b^{p^\ell}
		= c^p + \sum_{i=1}^{p-1} p^{-1}\tbinom{p}{i} \V\Frob\bigl( b^{p^{\ell-1}(p-i)}c^i \bigr) 
		\in \V^{\ell+1}\W(A)$.
\end{proof}

\begin{lem}[Multinomials {\cite[Theorem 32]{bib:Si80}}] \label{lem:DivisibilityMultinomialCoefficient}
	Let $\ell, s \ge 1$ and let $\alpha \in \N^s$
	such that $\abs{\alpha} = p^\ell$. Then $p^{\ell-v_p(\alpha)}$ divides the multinomial
	coefficient $\tbinom{p^\ell}{\alpha} := \tbinom{p^\ell}{\alpha_1, \dotsc, \alpha_s}$.
\end{lem}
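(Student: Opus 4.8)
The plan is to collapse the multinomial coefficient onto a single binomial coefficient whose $p$-adic valuation can be read off by an elementary computation. Since $\binom{p^\ell}{\alpha_1, \dotsc, \alpha_s}$ is symmetric in its lower arguments, I first reorder the entries so that $\alpha_1 \ge 1$ and $v_p(\alpha_1) = v_p(\alpha) =: v$; such an index exists because $\abs{\alpha} = p^\ell \ge 1$, so not all $\alpha_i$ vanish. I record that $v \le \ell$: from $p^v \mid \alpha_i$ for every $i$ we get $p^v \mid \sum_i \alpha_i = p^\ell$. If $\alpha_1 = p^\ell$, then all other $\alpha_i$ are zero, $v = \ell$, the coefficient equals $1$, and the claim $p^{\ell-v} = 1 \mid 1$ is trivial; so I may assume $1 \le \alpha_1 < p^\ell$.

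Next I use the factorization $\binom{p^\ell}{\alpha_1, \dotsc, \alpha_s} = \binom{p^\ell}{\alpha_1}\binom{p^\ell - \alpha_1}{\alpha_2, \dotsc, \alpha_s}$, whose second factor is a positive integer (its lower arguments sum to $p^\ell - \alpha_1$), hence $v_p\binom{p^\ell}{\alpha} \ge v_p\binom{p^\ell}{\alpha_1}$. It then remains to show $v_p\binom{p^\ell}{\alpha_1} = \ell - v$. For this I write $\binom{p^\ell}{\alpha_1} = \frac{p^\ell}{\alpha_1}\binom{p^\ell-1}{\alpha_1-1}$ and $\binom{p^\ell-1}{\alpha_1-1} = \prod_{i=1}^{\alpha_1-1}\frac{p^\ell-i}{i}$; for each $i$ with $1 \le i \le p^\ell - 1$ one has $v_p(i) \le \ell - 1 < \ell$, so $v_p(p^\ell - i) = v_p(i)$, and summing over $i$ gives $v_p\binom{p^\ell-1}{\alpha_1-1} = 0$. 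Therefore $v_p\binom{p^\ell}{\alpha_1} = v_p(p^\ell) - v_p(\alpha_1) = \ell - v$, and combining with the previous inequality yields $v_p\binom{p^\ell}{\alpha} \ge \ell - v_p(\alpha)$, i.e.\ $p^{\ell - v_p(\alpha)} \mid \binom{p^\ell}{\alpha}$.

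There is no serious obstacle: the argument is elementary, and the only points needing a moment of care are the symmetry reduction to an index realizing $v_p(\alpha)$, the degenerate case $\alpha_1 = p^\ell$, and the identity $v_p(p^\ell - i) = v_p(i)$ for $1 \le i < p^\ell$. One could alternatively invoke the multinomial form of Legendre's/Kummer's formula $v_p\binom{n}{\alpha} = \tfrac{1}{p-1}\bigl(\sum_i s_p(\alpha_i) - s_p(n)\bigr)$, with $s_p$ the base-$p$ digit sum, and count carries in the sum $\alpha_1 + \dotsb + \alpha_s = p^\ell$ (each of positions $0,\dotsc,\ell-1$ forces a carry), but the direct computation above sidesteps that machinery and the citation of \cite{bib:Si80}.
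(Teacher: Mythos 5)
Your proof is correct, and it is self-contained where the paper itself offers none: the paper disposes of this lemma by citing Singmaster's Theorem 32, which is the multinomial form of the Legendre/Kummer carry-counting formula you mention as an alternative. Your route is more direct. The key reduction — reorder so $v_p(\alpha_1)=v_p(\alpha)$ with $\alpha_1\ge 1$, peel off $\binom{p^\ell}{\alpha_1}$ as a factor whose cofactor is an integer, then compute $v_p\binom{p^\ell}{\alpha_1}$ exactly — is clean, and the computation $v_p\binom{p^\ell}{\alpha_1}=\ell-v_p(\alpha_1)$ via $\binom{p^\ell}{\alpha_1}=\frac{p^\ell}{\alpha_1}\binom{p^\ell-1}{\alpha_1-1}$ together with the observation $v_p(p^\ell-i)=v_p(i)$ for $1\le i<p^\ell$ (the $p$-adic triangle being isosceles) is exactly right; you also correctly handle the degenerate case $\alpha_1=p^\ell$ and justify $v_p(\alpha)\le\ell$. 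What your argument buys over the citation is that it needs no digit-sum or carry machinery and makes the appendix self-contained; what the Kummer-style count buys is the sharper statement that $v_p\binom{p^\ell}{\alpha}$ equals the number of carries, which is at least $\ell-v_p(\alpha)$ but can of course be larger — your proof already exhibits the slack as $v_p\binom{p^\ell-\alpha_1}{\alpha_2,\dotsc,\alpha_s}\ge 0$.
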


Now we use the notation of \S \ref{sec:DeRhamWittComplex} and consider a function field 
$L:=k(\term{x}_n)$ over a perfect field $k$.

\begin{replem}{lem:kerFrob}[{\textbf{restated}}]
We have $\ker\big(\W_{\ell+i}\Om_L^r\stackrel{\Frob^i}{\longrightarrow}\W_{\ell}\Om_L^r\big)\subseteq\Fil^\ell\W_{\ell+i}\Om_L^r$.
\end{replem}
\begin{proof}
Let $\omega\in\W_{\ell+i}\Om_L^r$ with $\Frob^i \omega=0$. Applying $\V^i\colon\W_{\ell}\Om_L^r\to \W_{\ell+i}\Om_L^r$
and noting that $\V^i\Frob^i=p^i$, we conclude that $p^i\omega=0$. Proposition I.3.4
of~\cite{bib:Il79} implies $\omega\in \Fil^{\ell}\W_{\ell+i}\Om_L^r$.
\end{proof}

\subsection{Proofs for Section \ref{sec:SeparabilityClassicalJacobianCriterion}} 
\label{app:ProofsClassicalJacobianCriterion}

\begin{repthm}{thm:JacobianCriterion}[{\textbf{restated}}]
	Let $\term{f}_r \in k[\term{x}]$ be polynomials. Assume that $k(\term{x})$ is 
	a separable extension of $k(\term{f}_r)$.
	Then, $\term{f}_r$ are algebraically independent over $k$ if and only if
	$\J_{k[\term{x}]/k}(\term{f}_r) \neq 0$.
\end{repthm}
\begin{proof}
	Let $\term{f}_r$ be algebraically independent over $k$. Since $k(\term{x})$ is separable
	over $k(\term{f}_r)$, 
	we can extend our system to a separating transcendence basis $\term{f}_n$
	of $k(\term{x})$ over $k$. Since $k[\term{f}_n]$ is isomorphic to
	a polynomial ring, we have $\J_{k[\term{f}_n]/k}(\term{f}_n) \neq 0$.
        Lemmas \ref{lem:DeRhamLocalization} and \ref{lem:DeRhamFiniteSeparableExtension} imply
	$\J_{k[\term{x}]/k}(\term{f}_n) \neq 0$, thus
	$\J_{k[\term{x}]/k}(\term{f}_r) \neq 0$.
	
	Now let $\term{f}_r$ be algebraically dependent over $k$. The polynomials
	remain dependent over the algebraic closure $L := \ol{k}$, which is perfect.
Hence, $L(\term{f}_r)$ is separable over $L$, and
	\cite[Corollary 16.17 a]{bib:Eis95} implies
	$r > \trdeg_L(L(\term{f}_r)) 
		= \dim_{L(\term{f}_r)} \Om_{L(\term{f}_r)/L}^1$.
	Thus $df_1,\dotsc,df_r$ are linearly dependent, so
	$\J_{L(\term{f}_r)/L}(\term{f}_r) = 0$, implying
	$\J_{L[\term{f}_r]/L}(\term{f}_r)$ $= 0$ by Lemma~\ref{lem:DeRhamLocalization}.
	The inclusion $L[\term{f}_r] \subseteq L[\term{x}]$ induces an
	$L[\term{f}_r]$-module homomorphism 
	$\Om_{L[\term{f}_r]/L}^r \rightarrow \Om_{L[\term{x}]/L}^r$,
	hence $\J_{L[\term{x}]/L}(\term{f}_r) = 0$.
        Lemma~\ref{lem:DeRhamBaseChange} implies $\J_{k[\term{x}]/k}(\term{f}_r) = 0$.
\end{proof}

\begin{rem}
Note that without the separability hypothesis algebraic dependence of the
$\term{f}_r$ still implies $\J_{k[\term{x}]/k}(\term{f}_r)=0$.
\end{rem}

\begin{replem}{lem:SeparabilityInLargeChar}[{\textbf{restated}}]
	Let $\term{f}_m \in k[\term{x}]$ have transcendence degree $r$ and
	maximal degree $\delta$, and assume that $\ch(k) = 0$ or $\ch(k) > \delta^r$. Then the extension
	$k(\term{x})/k(\term{f}_m)$ is separable.
\end{replem}
\begin{proof}
	In the case $\ch(k) = 0$ there is nothing to prove, so let $\ch(k) = p > \delta^r$.
After renaming polynomials
	and variables, we may assume that $\term{f}_r$, $\term{x}_{[r+1,n]}$ are algebraically
	independent over $k$. We claim that $\term{x}_{[r+1,n]}$ is a separating transcendence basis
	of $k(\term{x})/k(\term{f}_m)$.
A transcendence degree argument shows that they form a transcendence basis.
Hence it suffices to show that~$x_i$ is separable over
$K := k(\term{f}_m, \term{x}_{[r+1,n]})$ for all $i \in [r]$. 
By Theorem \ref{thm:PerronsTheorem}, we have $[k(\term{x}):K]
\le [k(\term{x}):k(\term{f}_r, \term{x}_{[r+1,n]})] \le \delta^r < p$. Therefore, the degree of
the minimal polynomial of $x_i$ over $K$ is $< p$, thus $x_i$ is indeed separable for all $i \in [r]$.
\end{proof}

\subsection{Proofs for Section \ref{sec:WJCriterion}} \label{app:ProofsWJCriterion}

We use the notation of \S \ref{sec:WJPolynomial}.

\begin{replem}{lem:DegeneracyCharacterization}[{\textbf{restated}}]
	Let $\ell \ge 0$ and let $f \in B\subset \E^0$. Then $f$ is $(\ell+1)$-degenerate if and only if
	the coefficient of $\term{x}^\beta$ in $\Frob^{-\ell} f$ is divisible by $p^{\nu(\ell+1, \beta)}$
	for all $\beta\in G$.
\end{replem}
\begin{proof}
The map $\Frob^{-\ell}$ defines a bijection between the terms of $f$ and the
terms of $\Frob^{-\ell}f$ mapping $c\term{x}^\alpha\mapsto u\term{x}^\beta$ with
$u=\Frob^{-\ell}(c)$ and $\beta=p^{-\ell}\alpha$. Since $\alpha\in\N^n$, we have
$v_p(\beta)=v_p(p^{-\ell}\alpha)=v_p(\alpha)-\ell\ge-\ell$, thus
$\nu(\ell+1, \beta)=\min\{\ell+v_p(\beta),\ell\}+1=\min\{v_p(\alpha),\ell\}+1$,
which implies the claim.
\end{proof}

\subsection{Proofs for Section \ref{sec:UpperBound}} \label{app:ProofsUpperBound}

We use the notation of \S \ref{sec:UpperBound}.

\begin{replem}{lem:Interpolation}[{\textbf{restated}}]
	Let $f \in G_{\ell+1,t}[z]$ be a polynomial of degree $D < p^t-1$ and let
	$\xi \in G_{\ell+1,t}$ be a primitive $(p^t-1)$-th root of unity. Then
	\[
		\coeff(z^d,f) = (p^t-1)^{-1} \cdot {\textstyle\sum_{j=0}^{p^t-2} \xi^{-jd} f(\xi^j)} 
		\quad\text{for all $d \in [0,D]$}.
	\]
\end{replem}
\begin{proof}
	Set $m := p^t-1$. Note that $m$ is a unit in $G_{\ell+1,t}$, because $m \notin (p)$.
	It suffices to show that $\sum_{j=0}^{m-1} \xi^{-jd} \xi^{ij} = m \cdot \delta_{di}$
	for all $d,i \in [0,m-1]$. This is clear for $d = i$, so let $d \neq i$. Then
	$\sum_{j=0}^{m-1} \xi^{-jd} \xi^{ij} = \sum_{j=0}^{m-1} \xi^{j(i-d)} = 0$, because
	$\xi^{i-d}$ is an $m$-th root of unity $\neq 1$.
\end{proof}

\begin{replem}{lem:SharpPOracle}[{\textbf{restated}}]
	Given $G_{\ell+1,t}$, a primitive $(p^t-1)$-th root of unity $\xi \in G_{\ell+1,t}$, an arithmetic circuit $C$
	over $G_{\ell+1,t}[z]$ of degree $D < p^t-1$ and $d \in [0,D]$. The $\coeff(z^d,C)$ can be computed in 
	$\FP^{\sP}$ (with a single $\sP$-oracle query).
\end{replem}
\begin{proof}
	Set $m := p^t-1$. As in \S \ref{sec:UpperBound}, we assume that
	$G_{\ell+1,t} = \Z/(p^{\ell+1})[x]/(h)$, where $\deg(h) = t$, and $\xi = x + (h)$. By Lemma \ref{lem:Interpolation}, 
	we have to compute a sum $S := \sum_{i=0}^{m-1} a_i$ with $a_i \in G_{\ell+1,t}$. Each summand $a_i$ can be
	computed in polynomial time, because $C$ can be efficiently evaluated.
        Since the number of summands in $S$ is exponential, we need the help of a $\sP$-oracle to compute it.
	
	Each $a_i$ can be written as $a_i = \sum_{j=0}^{t-1} c_{i,j} \xi^j$ with
	$c_{i,j} \in \Z/(p^{\ell+1})$. Thus, we can represent $a_i$ by a tuple
	$c_i \in [0, p^{\ell+1}-1]^t$ of integers, and a representation of $S$
	can be obtained by computing the componentwise integer sum $\term{s} = \sum_{i=0}^{m-1} c_i$.
	Set $N := m \cdot p^{\ell+1}$. Then $\term{s}, c_i \in [0,N-1]^t$, so we can encode the tuples $\term{s}$ and
	$c_i$ into single integers via the bijection
	\[
		\iota\colon [0,N-1]^t \rightarrow [0,N^t-1], \quad 
		(n_0, \dotsc, n_{t-1}) \mapsto {\textstyle \sum_{j=0}^{t-1} n_j N^j}.
	\]
	This bijection and its inverse are efficiently computable. Moreover, $\iota$ is compatible with the sum under consideration, 
	i.e. $\iota(\term{s}) = \sum_{i=0}^{m-1} \iota(c_i)$, thus we reduced our problem to the summation
	of integers which are easy to compute.
	
	To show that $\iota(\term{s})$ can be computed in $\sP$, we have to design a non-deterministic
	polynomial-time Turing machine that, given input as above, has exactly $\iota(\term{s})$ accepting computation paths.
	This can be done as follows. First we branch over all integers $i \in [0,m-1]$. 
	In each branch $i$, we (deterministically) compute the integer $\iota(c_i)$ and branch again into
	exactly $\iota(c_i)$ computation paths that all accept. This implies that the machine has altogether
	$\sum_{i=0}^{m-1} \iota(c_i) = \iota(\term{s})$ accepting computation paths.
\end{proof}

We now state here the claims proved by Mengel \cite{bib:M12}. Define the problem of {\em $\ell$-Degen} as:
Given a univariate arithmetic circuit computing $C(x)\in\Q_p[x]$, test whether $C(x)$ is $\ell$-degenerate. 
Note that for $\ell=1$ this is the same as the identity test $C(x)\equiv0 \pmod{p}$, which can be done in
randomized polynomial time (or ZPP). The situation drastically changes when $\ell>1$.

\begin{thm}{\cite{bib:M12}}\label{thm:CPHard}
For $\ell>1$, $\ell$-Degen is $\CeqP$-hard under $\ZPP$-reductions.
\end{thm}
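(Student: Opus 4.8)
The plan is to exhibit a Las Vegas (hence $\ZPP$) many-one reduction to $\ell$-Degen from the $\CeqP$-complete problem of deciding, given an integer matrix $M=(M_{ij})_{n\times n}$ and $N\in\N$, whether $\mathrm{per}(M)=N$ (its $\CeqP$-completeness is standard, via Valiant's $\sP$-completeness of the permanent). On input $(M,N)$ the reduction will produce, with probability at least $1/2$, a prime $p$ and a poly-size arithmetic circuit $C'$ over $\Z\subseteq\Q_p$ such that $C'$ is $\ell$-degenerate over $\Q_p$ if and only if $\mathrm{per}(M)=N$ (here $p$, which is part of the input to $\ell$-Degen, is itself produced by the reduction). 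The only use of randomness is the search for $p$.

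First I would encode the permanent as a single designated coefficient via a Kronecker substitution. Set
\[
  P(z):=\prod_{i=1}^{n}\Bigl(\sum_{j=1}^{n}M_{ij}\,z^{(n+1)^{j-1}}\Bigr)\in\Z[z],\qquad D:=\sum_{j=1}^{n}(n+1)^{j-1}.
\]
A short computation (dominance of the leading base-$(n+1)$ digit) shows $\coeff(z^{D},P)=\mathrm{per}(M)$ and that every other exponent $\alpha$ occurring in $P$ satisfies $0<|\alpha-D|<(n+1)^{n}$; moreover $P$ is computed by a poly-size circuit, the powers $z^{(n+1)^{j-1}}$ being built by repeated squaring. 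Next, by sampling random integers of suitable bit-length and testing primality, I would find a prime $p$ exceeding $(n+1)^{n}+n!\,(\max_{ij}|M_{ij}|)^{n}+N$; such a $p$ has polynomially many bits, is found in expected polynomial time, and is large enough that (i) $\mathrm{per}(M)=N\iff\mathrm{per}(M)\equiv N\pmod p$ and (ii) $\alpha\not\equiv D\pmod p$ for every exponent $\alpha\neq D$ occurring in $P$ (since $|\alpha-D|<(n+1)^{n}<p$).

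The second idea is to ``relocate'' $\coeff(z^{D},P)$ to an exponent at which the $\ell$-degeneracy test actually reads it. Choose $e\ge 0$ with $e+D\equiv 0\pmod{p^{\ell-1}}$ but $e+D\not\equiv 0\pmod{p^{\ell}}$ (possible with $e<2p^{\ell-1}$, and note $e\equiv -D\pmod p$ because $\ell\ge 2$), and output
\[
  C'(z):=p^{\ell-1}\bigl(z^{e}P(z)-N\,z^{e+D}\bigr),
\]
a poly-size circuit with integer coefficients. Its coefficient at $z^{e+D}$ is $p^{\ell-1}\bigl(\mathrm{per}(M)-N\bigr)$, sitting at an exponent with $v_p(e+D)=\ell-1$, so the degeneracy requirement there reads $p^{\ell}\mid p^{\ell-1}(\mathrm{per}(M)-N)$, i.e. $p\mid\mathrm{per}(M)-N$, i.e. (by the choice of $p$) $\mathrm{per}(M)=N$. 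Every other nonzero coefficient of $C'$ lies at an exponent $e+\alpha$ with $\alpha\not\equiv D\pmod p$, hence $v_p(e+\alpha)=0$, so its degeneracy requirement is merely $p\mid p^{\ell-1}\cdot\coeff(z^{\alpha},P)$, which holds automatically since $\ell\ge 2$. Thus $C'$ is $\ell$-degenerate iff $\mathrm{per}(M)=N$. (If $N$ exceeds the trivial bound $n!\,(\max_{ij}|M_{ij}|)^{n}$ one short-circuits and outputs a fixed non-$\ell$-degenerate circuit such as $z^{p^{\ell-1}}$.)

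The main obstacle is coefficient isolation: an arithmetic circuit cannot in general extract a single coefficient, which is exactly why degeneracy is hard. The way around it is that we do not need genuine isolation — only that, modulo $p$, the target exponent $D$ is collided with by no other monomial of $P$, and this comes for free from the Kronecker base $n+1$ together with $p>(n+1)^{n}$; the remaining difficulty, turning ``divisible by $p^{\ell}$'' into ``equal to'', forces $p$ to be large and hence produced by a randomized Las Vegas search, which is the sole source of $\ZPP$ in the statement. What then remains is routine: the $\CeqP$-completeness of the permanent problem used, the fact that $z^{(n+1)^{j-1}}$, $z^{e}$ and the scalar $p^{\ell-1}$ admit poly-size subcircuits, the leading-digit estimates quoted above, and that the prime search (with a cap on the number of trials) is a genuine $\ZPP$ procedure.
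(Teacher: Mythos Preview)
Your argument is correct and follows essentially the same route as the paper's sketch: start from a $\CeqP$-hard ``is this coefficient equal to a target?'' problem, choose a sufficiently large random prime $p$ (the sole source of randomness, hence the $\ZPP$), then output a circuit in which the target coefficient is shifted to an exponent of $p$-adic valuation exactly $\ell-1$ and scaled by $p^{\ell-1}$, so that $\ell$-degeneracy at that exponent reads ``$p$ divides the target quantity'' while every other surviving exponent has valuation $0$ and the degeneracy condition there is automatic.

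The only real difference is the starting problem. The paper reduces from ZMC (deciding whether a prescribed coefficient of an arbitrary arithmetic circuit is zero), which is $\CeqP$-hard by~\cite{bib:fmm12}, and writes the one-line construction $C'(x):=p\,x^{p-m}C(x)$ for $\ell=2$. You instead reduce from the permanent-equality problem, building the circuit yourself via a Kronecker product so that $\mathrm{per}(M)$ sits at a known exponent, and you handle general $\ell$ directly. Your version is thus more self-contained (no appeal to~\cite{bib:fmm12}) and spells out the exponent arithmetic in full, at the price of length; the paper's is terser but relies on the cited hardness of ZMC. The two are equivalent in content.
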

\begin{proof}[Proof sketch]
Denote by ZMC the problem: Given $m\in\N$ and a univariate arithmetic circuit computing $C(x)\in\Q[x]$, 
test whether $\coeff(x^m,C(x))=0$. By \cite{bib:fmm12} ZMC is C$_=$P-hard. The idea is to reduce ZMC to $2$-Degen.
Randomly pick a sufficiently large prime $p$. Consider the circuit $C'(x):=px^{p-m}\cdot C(x)$. It can be shown that
$C'(x)$ is $2$-degenerate iff $\coeff(x^m,C(x))=0$.  
\end{proof}


\begin{cor}{\cite{bib:M12}}\label{cor:PHCollapse}
Let $\ell>1$. If $\ell$-Degen is in $\PH$ then $\PH$ collapses. 
\end{cor}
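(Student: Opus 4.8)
This corollary is a formality once Theorem~\ref{thm:CPHard} is in hand, so the plan is simply to combine it with standard structural complexity. By Theorem~\ref{thm:CPHard}, $\ell$-Degen is $\CeqP$-hard under $\ZPP$-reductions for every $\ell>1$, so it suffices to establish the generic implication: if a language that is $\CeqP$-hard under $\ZPP$-reductions belongs to $\PH$, then $\PH$ collapses.

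The argument I would run is the following. Suppose $\ell$-Degen $\in\Sigma_k$ for some level $k$. Since each $L\in\CeqP$ is decided by a zero-error randomized polynomial-time algorithm with an oracle for $\ell$-Degen, we get $\CeqP\subseteq\ZPP^{\Sigma_k}\subseteq\mathrm{BPP}^{\Sigma_k}$. Lautemann's theorem $\mathrm{BPP}\subseteq\Sigma_2\cap\Pi_2$ relativizes, and for an oracle in $\Sigma_k$ it yields $\mathrm{BPP}^{\Sigma_k}\subseteq\Sigma_2^{\Sigma_k}=\Sigma_{k+2}$; hence $\CeqP\subseteq\Sigma_{k+2}\subseteq\PH$. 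Now I would apply Toda's theorem $\PH\subseteq\mathrm{P}^{\sP}$ together with the counting collapse $\mathrm{P}^{\sP}=\mathrm{P}^{\CeqP}$ (a $\CeqP$-oracle recovers the exact number of accepting paths of a nondeterministic machine by binary search, and conversely) to obtain
\[
	\PH\subseteq\mathrm{P}^{\CeqP}\subseteq\mathrm{P}^{\Sigma_{k+2}}\subseteq\Sigma_{k+3}.
\]
Therefore $\PH=\Sigma_{k+3}$, i.e. the polynomial hierarchy collapses to its $(k+3)$-rd level.

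There is no real obstacle here; the only points requiring a little care are that the reduction supplied by Theorem~\ref{thm:CPHard} is randomized (zero-error) rather than deterministic, which is harmless because $\ZPP\subseteq\mathrm{BPP}$ and relativized $\mathrm{BPP}$ still sits two levels up in $\PH$, and that one should quote the relativized forms of Lautemann's and Toda's theorems and the identity $\mathrm{P}^{\CeqP}=\mathrm{P}^{\sP}$ correctly. If one wishes to suppress the $\ZPP$ bookkeeping entirely, one may phrase it as: a $\ZPP$-reduction to a set in $\PH$ keeps the problem inside $\PH$ by the relativized ``$\mathrm{BPP}\subseteq\PH$'' theorem, after which the Toda-based chain above finishes the argument.
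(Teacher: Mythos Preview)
Your argument is correct and follows essentially the same route as the paper's proof sketch: combine Theorem~\ref{thm:CPHard} with Toda's theorem and the equivalence of $\sP$- and $\CeqP$-oracles to trap $\PH$ at a finite level once $\ell$-Degen lies in some $\Sigma_k$. The only cosmetic difference is bookkeeping: the paper uses $\PH\subseteq\NP^{\sP}\subseteq\NP^{\CeqP}\subseteq\NP^{\ZPP^{\ell\text{-Degen}}}\subseteq\NP^{\NP^{\ell\text{-Degen}}}$ (invoking $\ZPP\subseteq\NP$ rather than your $\ZPP\subseteq\mathrm{BPP}\subseteq\Sigma_2$), which yields a collapse to $\Sigma_{k+2}$ instead of your $\Sigma_{k+3}$---an immaterial sharpening for the stated corollary.
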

\begin{proof}[Proof sketch]
Classically, we have $$\text{PH} \subseteq \text{NP}^{\sP} \subseteq \text{NP}^{\text{C}_=\text{P}}.$$
By the theorem it now follows that $$\text{PH} \subseteq \text{NP}^{\text{ZPP}^{\ell\text{-Degen}}}
\subseteq \text{NP}^{\text{NP}^{\ell\text{-Degen}}}.$$
Thus, if $\ell$-Degen $\in\Sigma_i$ then PH $\subseteq\Sigma_{i+2}$.
\end{proof}

\subsection{Proofs for Section \ref{sec:ApplicationPIT}} \label{app:ProofsApplicationPIT}

We use the notation of \S \ref{sec:ApplicationPIT}.

\begin{lem}[Using sparsity] \label{lem:PreservingNonDegeneracy}
	Let $\ell \ge 0$ and let $g \in B$ be an $s$-sparse polynomial
        of degree less than $D \ge 2$ which is not $(\ell+1)$-degenerate. Let
        $S \subset R$ be a subset such that $\abs{S\setminus pR} = (ns \lceil\log_2 D\rceil)^2 D$,
        and let $r\in[n]$.

	Then there exist $c \in S$ and a prime $q \le (ns \lceil\log_2 D\rceil)^2$
        such that $g(\term{x}_{r}, \boldsymbol{c}) \in R[\term{x}_{r}]$ is not
        $(\ell+1)$-degenerate, where
	$\boldsymbol{c} := \bigl( c^{\lfloor D^0 \rfloor_q}, c^{\lfloor D^1 \rfloor_q},
        \dotsc, c^{\lfloor D^{n-r-1} \rfloor_q} \bigr)\in R^{n-r}$.
\end{lem}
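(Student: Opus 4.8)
The plan is to first collapse the variables $x_{r+1},\dots,x_n$ into a single auxiliary variable $z$ by a Kronecker-type substitution $x_{r+i}\mapsto z^{\lfloor D^{i-1}\rfloor_q}$ with exponents reduced modulo a well-chosen prime $q$, and then to specialize $z\mapsto c$ by a Schwartz--Zippel argument over the residue field $k=R/pR$. Writing $\phi_q(\beta):=\sum_{i=1}^{n-r}\beta_{r+i}\lfloor D^{i-1}\rfloor_q$, this substitution sends a monomial $\term{x}^\beta$ of $g$ to $x_1^{\beta_1}\cdots x_r^{\beta_r}\,z^{\phi_q(\beta)}$, and $\phi_q(\beta)\equiv\sum_i\beta_{r+i}D^{i-1}\pmod q$. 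Since $g$ is not $(\ell+1)$-degenerate, fix $\beta^*$ in the (nonempty) support of $g$ with $v_p(\coeff(\term{x}^{\beta^*},g))\le\min\{v_p(\beta^*),\ell\}=:m$, and set $\gamma^*:=(\beta^*_1,\dots,\beta^*_r)$. The heart of the argument is to choose a prime $q\le(ns\lceil\log_2 D\rceil)^2$ that is \emph{good}, meaning $\phi_q(\beta)\neq\phi_q(\beta^*)$ for every $\beta$ in the support of $g$ with $\beta\neq\beta^*$ and $(\beta_1,\dots,\beta_r)=\gamma^*$. For each such $\beta$, the prime $q$ is bad only if it divides the nonzero integer $\sum_i(\beta_{r+i}-\beta^*_{r+i})D^{i-1}$, which has absolute value $<D^n$ (each exponent is $<D$ since $\deg g<D$) and hence at most $n\log_2 D$ prime divisors; as there are at most $s-1$ such $\beta$, at most $(s-1)n\lceil\log_2 D\rceil$ primes are bad, whereas elementary prime counting (e.g.\ Chebyshev's bound) shows that more than that many primes are $\le(ns\lceil\log_2 D\rceil)^2$, so a good $q$ exists.

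Fix such a $q$ and set $G(\term{x}_{r},z):=g(\term{x}_{r},z^{\lfloor D^0\rfloor_q},\dots,z^{\lfloor D^{n-r-1}\rfloor_q})\in R[\term{x}_{r},z]$; since $\deg g<D$ bounds the \emph{total} degree, one checks $\deg_z G<qD$. By goodness of $q$, the coefficient of the monomial $\term{x}_{r}^{\gamma^*}z^{d^*}$ in $G$, with $d^*:=\phi_q(\beta^*)$, is exactly $\coeff(\term{x}^{\beta^*},g)$, so its $p$-adic valuation is $\le m\le\min\{v_p(\gamma^*),\ell\}$ (using $v_p(\gamma^*)\ge v_p(\beta^*)$). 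Let $G_{\gamma^*}(z)\in R[z]$ gather the coefficients of the monomials $\term{x}_{r}^{\gamma^*}z^d$ appearing in $G$; it is nonzero, and with $e$ the $p$-adic valuation of its content we get $e\le m\le\min\{v_p(\gamma^*),\ell\}$ and $G_{\gamma^*}=p^e H$ where $H\bmod pR$ is a nonzero polynomial in $k[z]$ of degree $\le\deg_z G<qD\le(ns\lceil\log_2 D\rceil)^2 D=|S\setminus pR|$. Using that the elements of $S\setminus pR$ have pairwise distinct images in $k$ (as we may assume, and as holds in the intended application, where $S$ is lifted from a subset of $k$), there are more such images than $\deg(H\bmod pR)$, so some $c\in S\setminus pR$ satisfies $(H\bmod pR)(c\bmod pR)\neq 0$, i.e.\ $v_p(H(c))=0$.

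For this $c$ and $\boldsymbol{c}:=(c^{\lfloor D^0\rfloor_q},\dots,c^{\lfloor D^{n-r-1}\rfloor_q})$ we have $g(\term{x}_{r},\boldsymbol{c})=G(\term{x}_{r},c)$, and its coefficient of $\term{x}_{r}^{\gamma^*}$ is $G_{\gamma^*}(c)=p^e H(c)$, of $p$-adic valuation $e\le\min\{v_p(\gamma^*),\ell\}$; hence $g(\term{x}_{r},\boldsymbol{c})$ is not $(\ell+1)$-degenerate, as required. The main obstacle is coordinating two budgets at once: $q$ must avoid all bad primes yet stay small enough that the residual univariate polynomial $H$ has degree below $|S\setminus pR|$ — this forces $\deg_z G<qD$, hence the reliance on the total-degree bound, and it is exactly why the quantities $q\le(ns\lceil\log_2 D\rceil)^2$ and $|S\setminus pR|=(ns\lceil\log_2 D\rceil)^2 D$ are calibrated as they are. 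A secondary point to check throughout is that $(\ell+1)$-degeneracy is governed by $\min\{v_p(\cdot),\ell\}$, so one must verify that the valuations of the new exponent data never drop below those of the old, which holds because $\phi_q$ is a nonnegative integer combination of exponents.
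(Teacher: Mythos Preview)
Your proof is correct and follows essentially the same strategy as the paper: isolate a witness monomial certifying non-$(\ell+1)$-degeneracy, reduce to a PIT instance over $k=R/pR$ for the coefficient polynomial in the last $n-r$ variables, handle it by a Kronecker-type substitution with exponents reduced modulo a small prime $q$, count bad primes, and finish with Schwartz--Zippel.

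The only presentational differences are that the paper first performs the full Kronecker substitution $x_{r+i}\mapsto t^{D^{i-1}}$ and then reduces exponents modulo $q$ (citing \cite{bib:BHLV09} for the bad-prime bound), whereas you substitute $x_{r+i}\mapsto z^{\lfloor D^{i-1}\rfloor_q}$ directly and count bad primes by hand; and the paper merely requires the reduced polynomial to be nonzero mod $p$, while you insist that the specific monomial $\beta^*$ survive uncollided. Both routes yield the same bounds $q\le(ns\lceil\log_2 D\rceil)^2$ and $\deg_z<qD$. Your explicit acknowledgement that one needs the elements of $S\setminus pR$ to have pairwise distinct images in $k$ is a point the paper's proof also uses implicitly; it is indeed satisfied in the application, where $S$ is chosen inside $k$.
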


%

\begin{proof}
	Write $g = \sum_{\beta \in \N^r} g_{\beta} \term{x}_{r}^{\beta}$ with $g_{\beta} \in R[\term{x}_{[r+1,n]}]$. 
	Since $g$ is not $(\ell+1)$-degenerate, there exists $\alpha \in \N^n$
        such that the coefficient $c_\alpha \in R$ of $\term{x}^\alpha$ in $g$ is
        not divisible by $p^{\min\{v_p(\alpha),\ell\}+1}$.
	Write $\alpha=(\alpha',\alpha'') \in \N^r\times\N^{n-r}$. 
	Since $c_\alpha$ is the coefficient of $\term{x}_{[r+1,n]}^{\alpha''}$ in
        $g_{\alpha'}$, 
        this polynomial cannot be divisible by
        $p^{\min\{v_p(\alpha),\ell\}+1}$.
        Our aim is to
        find $\boldsymbol{c} \in R^{n-r}$ such that $g_{\alpha'}(\boldsymbol{c})$
        is not divisible by $p^{\min\{v_p(\alpha),\ell\}+1}$, since then it is
        neither by the possibly higher power $p^{\min\{v_p(\alpha''),\ell\}+1}$.
	In other words, if we write $g_{\alpha'} = p^e g'$, where $g'$ is not
        divisible by $p$, we have an instance of PIT over the field $R/pR\cong k$.
	
	We solve it using a \emph{Kronecker substitution}, so consider the
        univariate polynomial $h' := g'\bigl( t^{D^0}, t^{D^1}, \dotsc, t^{D^{n-r-1}} \bigr) \in R[t]$
        in the new variable $t$. Since  $\deg g'=\deg g_{\alpha'}\le \deg g<D$,
        the substitution preserves terms, so $h'\notin pR[t]$.
        Furthermore, $h'$ is $s$-sparse and of degree $<D^n$.
        For any $q\in\N$, let $h_q$ be the polynomial obtained from $h'$ by
        reducing every exponent modulo $q$.
        By \cite[Lemma 13]{bib:BHLV09}, there are $<ns \log_2 D$ many primes~$q$
        such that $h_q\in pR[t]$.
        Since the interval $[N^2]$ contains at least $N$ primes
	for $N \ge 2$ (this follows e.g.\ from~\cite[Corollary~1]{bib:RS62}),
	there is a prime $q \le (ns \lceil\log_2 D\rceil)^2$ with
	$h_q \notin pR[t]$. Since 
	$\deg(h_q) < qD \le (ns \lceil\log_2 D\rceil)^2 D = \abs{S\setminus pR}$,
	there exists $c \in S$ with $h_q(c) \notin pR$.
\end{proof}

\begin{lem}[$p$-adic triangle is isosceles]\label{lem:pAdicTriangleInequ}
	Let $\alpha, \beta \in \Q^s$. Then $v_p(\alpha+\beta) \ge \min\{v_p(\alpha), v_p(\beta)\}$,
	with equality if $v_p(\alpha) \neq v_p(\beta)$.
\end{lem}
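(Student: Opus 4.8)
The plan is to reduce everything to the familiar ultrametric property of $v_p$ on $\Q$ and then bookkeep the coordinatewise minima.

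First I would record the scalar fact: for $a,b\in\Q$ one has $v_p(a+b)\ge\min\{v_p(a),v_p(b)\}$, with equality whenever $v_p(a)\neq v_p(b)$. This is immediate by writing $a=p^{v_p(a)}a'$ and $b=p^{v_p(b)}b'$ with the numerators and denominators of $a',b'$ prime to $p$: if, say, $v_p(a)<v_p(b)$, then $a+b=p^{v_p(a)}(a'+p^{v_p(b)-v_p(a)}b')$ and the second factor still has numerator prime to $p$, giving equality; the case $v_p(a)=v_p(b)$ only gives the inequality. The convention $v_p(0)=\infty$ keeps this valid when $a$ or $b$ vanishes.

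Next, for the tuple inequality I would compute, using the scalar inequality termwise and then swapping the two minima,
\[
	v_p(\alpha+\beta)=\min_{1\le i\le s}v_p(\alpha_i+\beta_i)\ge\min_{1\le i\le s}\min\{v_p(\alpha_i),v_p(\beta_i)\}=\min\{v_p(\alpha),v_p(\beta)\}.
\]
For the equality claim, assume $v_p(\alpha)\neq v_p(\beta)$, say $v_p(\alpha)<v_p(\beta)$ (the other case is symmetric), and set $m:=v_p(\alpha)$. The displayed inequality gives $v_p(\alpha+\beta)\ge m$, so it remains to exhibit one coordinate witnessing $v_p(\alpha+\beta)\le m$. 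Choose $i$ with $v_p(\alpha_i)=m$. Then $v_p(\beta_i)\ge v_p(\beta)>m=v_p(\alpha_i)$, so the scalar equality case yields $v_p(\alpha_i+\beta_i)=v_p(\alpha_i)=m$, whence $v_p(\alpha+\beta)=\min_j v_p(\alpha_j+\beta_j)\le m$. Combining the two bounds gives $v_p(\alpha+\beta)=m=\min\{v_p(\alpha),v_p(\beta)\}$.

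There is no real obstacle here; the only thing to watch is the $v_p(0)=\infty$ convention so that the statement stays correct (and harmlessly vacuous) when some coordinates, or even one entire tuple, are zero.
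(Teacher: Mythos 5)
Your proof is correct and follows essentially the same coordinatewise reduction to the scalar ultrametric property that the paper uses, including the choice of a coordinate $i$ realizing the minimum $v_p(\alpha)$ to witness the equality case. The only cosmetic difference is that you spell out the scalar fact as a separate preliminary step, whereas the paper invokes it implicitly.
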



\begin{proof}
	Let $i \in [s]$ such that $v_p(\alpha+\beta) = v_p(\alpha_i+\beta_i)$. Then
	$v_p(\alpha+\beta) = v_p(\alpha_i+\beta_i) \ge \min\{v_p(\alpha_i), v_p(\beta_i)\}
	\ge \min\{v_p(\alpha), v_p(\beta)\}$.
	
	Now assume $v_p(\alpha) \neq v_p(\beta)$, say $v_p(\alpha) < v_p(\beta)$.
	Let $i \in [s]$ such that $v_p(\alpha) = v_p(\alpha_i)$. Then
	$v_p(\alpha_i) < v_p(\beta_i)$, therefore we obtain
	$v_p(\alpha+\beta) \le v_p(\alpha_i+\beta_i) = \min\{v_p(\alpha_i), v_p(\beta_i)\}
	= v_p(\alpha_i) = v_p(\alpha) = \min\{v_p(\alpha), v_p(\beta)\} \le v_p(\alpha+\beta)$.
\end{proof}

\begin{lem}\label{lem:cancelDegen}
	Let $\ell \ge 0$, let $g \in B$ and let $\alpha \in \N^n$ with $v_p(\alpha) \ge \ell$.
	Then $g$ is $(\ell+1)$-degenerate if and only if $\term{x}^\alpha \cdot g$ is
	$(\ell+1)$-degenerate.
\end{lem}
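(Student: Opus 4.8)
The plan is to reduce both implications to a single arithmetic identity about $p$-adic valuations and then invoke Lemma~\ref{lem:pAdicTriangleInequ}.

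First I would expand $g = \sum_{\beta \in \N^n} c_\beta \term{x}^\beta$ with $c_\beta \in R = \W(k)$, so that $\term{x}^\alpha \cdot g = \sum_{\beta \in \N^n} c_\beta \term{x}^{\alpha+\beta}$. Hence the coefficient of $\term{x}^\gamma$ in $\term{x}^\alpha g$ equals $c_{\gamma-\alpha}$ if $\gamma \ge \alpha$ (componentwise), and $0$ otherwise. For every exponent $\gamma$ with $\gamma \not\ge \alpha$ the degeneracy condition for $\term{x}^\alpha g$ at $\term{x}^\gamma$ is vacuous, so the entire content of the statement concerns the exponents of the form $\gamma = \alpha+\beta$, $\beta \in \N^n$, where it reads: $c_\beta$ is divisible by $p^{\min\{v_p(\alpha+\beta),\,\ell\}+1}$.

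The key claim I would then isolate is that, under the hypothesis $v_p(\alpha)\ge\ell$,
\[
	\min\{v_p(\alpha+\beta),\ \ell\} = \min\{v_p(\beta),\ \ell\}\qquad\text{for all }\beta\in\N^n.
\]
To prove it, split on the size of $v_p(\beta)$. If $v_p(\beta)\ge\ell$, then $v_p(\alpha+\beta)\ge\min\{v_p(\alpha),v_p(\beta)\}\ge\ell$ by Lemma~\ref{lem:pAdicTriangleInequ}, so both sides equal $\ell$. If $v_p(\beta)<\ell$, then $v_p(\beta)<\ell\le v_p(\alpha)$, in particular $v_p(\alpha)\ne v_p(\beta)$, so Lemma~\ref{lem:pAdicTriangleInequ} gives $v_p(\alpha+\beta)=v_p(\beta)$; hence both sides equal $v_p(\beta)$.

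Given the claim, both directions of the lemma are immediate: for every $\beta\in\N^n$, divisibility of $c_\beta$ by $p^{\min\{v_p(\beta),\ell\}+1}$ is equivalent to divisibility of $c_\beta$ by $p^{\min\{v_p(\alpha+\beta),\ell\}+1}$, where the former is precisely the degeneracy condition for $g$ at $\term{x}^\beta$ and the latter is precisely the degeneracy condition for $\term{x}^\alpha g$ at $\term{x}^{\alpha+\beta}$ (all its remaining conditions being vacuous, as noted above). I do not expect any genuine obstacle; the only points needing a little care are handling the monomials $\term{x}^\gamma$ with $\gamma\not\ge\alpha$ separately and getting the direction of the case analysis in the key claim right, which is exactly where the hypothesis $v_p(\alpha)\ge\ell$ is used.
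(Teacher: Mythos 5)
Your proof is correct and follows essentially the same route as the paper's: both reduce the lemma to the identity $\min\{v_p(\alpha+\beta),\ell\}=\min\{v_p(\beta),\ell\}$ and derive it from Lemma~\ref{lem:pAdicTriangleInequ}, with only a cosmetic difference in how the case split is organized (you split on $v_p(\beta)\ge\ell$ versus $v_p(\beta)<\ell$, the paper on $v_p(\alpha)=v_p(\beta)$ versus not). Your explicit remark that the conditions at exponents $\gamma\not\ge\alpha$ are vacuous is a detail the paper leaves implicit.
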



\begin{proof}
	It suffices to show that $\min\{v_p(\beta), \ell\} = \min\{v_p(\alpha+\beta), \ell\}$
	for all $\beta \in \N^n$. But the assumption implies that
$\min\{v_p(\beta), \ell\} = \min\{v_p(\alpha), v_p(\beta), \ell\}$, which is
$\le \min\{v_p(\alpha+\beta), \ell\}$ by Lemma~\ref{lem:pAdicTriangleInequ}
with equality, if $v_p(\alpha) \neq v_p(\beta)$.
	If $v_p(\alpha) = v_p(\beta)$, then $\min\{v_p(\beta), \ell\} = \min\{v_p(\alpha), \ell\} = \ell \ge \min\{v_p(\alpha+\beta),\ell\}$.
\end{proof}

Let $\term{f}_m \in A$ be polynomials and let $\varphi\colon k[\term{x}] \rightarrow k[\term{x}_{r}]$ 
be a $k$-algebra homomorphism. We say that $\varphi$ is \emph{faithful to $\term{f}_m$} if
$\trdeg_k(\term{f}_m) = \trdeg_k(\varphi(\term{f}_m))$.

\begin{lem}[Faithful is useful {\cite[Theorem 11]{bib:BMS11}}] \label{lem:FaithfulIsInjectiveOnSubalgebra}
	Let $\varphi\colon A \rightarrow k[\term{x}_{r}]$ 
	be a $k$-algebra homomorphism and $\term{f}_m \in A$. 
	Then, $\varphi$ is faithful to $\term{f}_m$ iff $\varphi\vert_{k[\term{f}_m]}$ is injective.
\end{lem}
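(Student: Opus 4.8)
The plan is to reduce both sides of the equivalence to a single condition on the prime ideal $\mathfrak{p} := \ker\bigl(\varphi\vert_{k[\term{f}_m]}\bigr) \subseteq k[\term{f}_m]$, which is prime because $k[\varphi(\term{f}_m)] \subseteq k[\term{x}_{r}]$ is a domain. The restriction $\varphi\vert_{k[\term{f}_m]}$ factors as $k[\term{f}_m] \twoheadrightarrow k[\term{f}_m]/\mathfrak{p} \xrightarrow{\ \sim\ } k[\varphi(\term{f}_m)]$, a surjection followed by an isomorphism, so $\varphi\vert_{k[\term{f}_m]}$ is injective iff $\mathfrak{p} = 0$. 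Moreover, directly from the definitions in \S\ref{sec:AlgebraicIndependenceTrdeg}, $\trdeg_k(\term{f}_m) = \trdeg_k\bigl(k[\term{f}_m]\bigr)$ and $\trdeg_k(\varphi(\term{f}_m)) = \trdeg_k\bigl(k[\term{f}_m]/\mathfrak{p}\bigr)$, so $\varphi$ is faithful to $\term{f}_m$ iff $\trdeg_k\bigl(k[\term{f}_m]/\mathfrak{p}\bigr) = \trdeg_k\bigl(k[\term{f}_m]\bigr)$. Thus the lemma becomes the purely algebraic statement: for a finitely generated $k$-domain $D := k[\term{f}_m]$ and a prime $\mathfrak{p} \subseteq D$, one has $\trdeg_k(D/\mathfrak{p}) = \trdeg_k(D)$ if and only if $\mathfrak{p} = 0$.

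The implication $\mathfrak{p} = 0 \Rightarrow$ equality is trivial. For the converse I would invoke classical dimension theory: a finitely generated domain $D$ over a field satisfies $\dim D = \trdeg_k D$, and a nonzero prime of such a ring has positive height, so $\dim(D/\mathfrak{p}) < \dim D$ and hence $\trdeg_k(D/\mathfrak{p}) < \trdeg_k(D)$. If one prefers to avoid Krull dimension, the same drop follows from Noether normalization: choose a finite injection $k[t_1,\dots,t_d] \hookrightarrow D$ with $d = \trdeg_k D$; a nonzero prime $\mathfrak{p}$ contracts to a nonzero prime $\mathfrak{q}$ of $k[t_1,\dots,t_d]$ by the incomparability theorem (both $0$ and $\mathfrak p$ cannot lie over $0$), and then $\trdeg_k(D/\mathfrak{p}) = \trdeg_k\bigl(k[t_1,\dots,t_d]/\mathfrak{q}\bigr) < d$, the last inequality because $\mathfrak q$ contains a nonzero polynomial which forces an algebraic relation among the $\bar t_i$. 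Separately, the $\Leftarrow$ direction also admits a quick field-theoretic proof: an injection of domains extends to an embedding of fraction fields $k(\term{f}_m) \hookrightarrow k(\varphi(\term{f}_m))$, yielding $\trdeg_k(\term{f}_m) \le \trdeg_k(\varphi(\term{f}_m))$, while the reverse inequality is automatic since $\varphi$ carries every polynomial relation among the $f_i$ to the same relation among the $\varphi(f_i)$.

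The only nontrivial ingredient is the strict drop of transcendence degree (equivalently Krull dimension) upon quotienting a finitely generated $k$-domain by a nonzero prime; all the rest is formal bookkeeping. Since this fact is entirely standard, I would simply cite a commutative algebra reference rather than reprove it, and then assemble the two directions as above.
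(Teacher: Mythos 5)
Your proof is correct. Note, however, that the paper does not prove this lemma at all: it is imported verbatim as Theorem~11 of \cite{bib:BMS11}, so there is no internal argument to compare against. What you have done is supply a self-contained proof of the cited result, and your reduction is the natural one: pass to $\mathfrak{p} := \ker(\varphi\vert_{k[\term{f}_m]})$, identify $k[\varphi(\term{f}_m)]$ with $k[\term{f}_m]/\mathfrak{p}$, and reduce the equivalence to the statement that a finitely generated $k$-domain loses transcendence degree exactly when one quotients by a nonzero prime. Both of your arguments for the nontrivial direction (Krull dimension theory, or Noether normalization plus incomparability) are standard and valid, and the field-theoretic argument for the easy direction is also fine. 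The one phrase I would tighten is ``directly from the definitions'': the identities $\trdeg_k(\term{f}_m) = \trdeg_k(k[\term{f}_m])$ and $\trdeg_k(\varphi(\term{f}_m)) = \trdeg_k(k[\term{f}_m]/\mathfrak{p})$ are not literal consequences of the definition of $\trdeg_k(S)$ given in \S\ref{sec:AlgebraicIndependenceTrdeg}; they require the exchange property of algebraic independence (a transcendence basis of a finitely generated algebra can be extracted from any generating set). Since the paper itself freely uses facts of this calibre (e.g.\ $\trdeg_k(A)=\trdeg_k(Q(A))$), this is a matter of attribution rather than a gap, but you should cite it rather than claim it is definitional.
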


\begin{lem}{\cite[Corollary 1]{bib:Sch80}} \label{lem:CombinatorialNullstellensatz}
	Let nonzero $f \in k[\term{x}_{r}]$,
	and $S \subseteq k$ with $|S|>\deg f$. Then there exists $\boldsymbol{b} \in S^r$ 
	such that $f(\boldsymbol{b}) \neq 0$.
\end{lem}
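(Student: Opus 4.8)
The plan is to prove this by induction on the number of variables $r$ — this is the classical Schwartz--Zippel argument, and in fact the statement is exactly \cite[Corollary~1]{bib:Sch80}, so one could also just quote it. I will instead give the short self-contained induction. For the base case $r=1$, a nonzero $f \in k[x_1]$ has at most $\deg f$ roots in $k$; since $|S| > \deg f$, the set $S$ cannot consist entirely of roots of $f$, so some $b \in S$ has $f(b) \neq 0$.

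For the inductive step, set $d := \deg f$ and expand $f$ as a polynomial in the last variable: $f = \sum_{i=0}^{e} g_i(\term{x}_{r-1})\, x_r^i$ with $g_i \in k[\term{x}_{r-1}]$ and $e := \deg_{x_r} f \le d$. Since $f \neq 0$, pick the largest index $i_0$ with $g_{i_0} \neq 0$; then $\deg g_{i_0} \le d - i_0 \le d < |S|$. By the induction hypothesis applied to $g_{i_0}$ over the $r-1$ variables $\term{x}_{r-1}$, there is $\boldsymbol{b}' \in S^{r-1}$ with $g_{i_0}(\boldsymbol{b}') \neq 0$. Substituting, $h(x_r) := f(\boldsymbol{b}', x_r) \in k[x_r]$ is a nonzero univariate polynomial, because the coefficient of $x_r^{i_0}$ in $h$ is $g_{i_0}(\boldsymbol{b}') \neq 0$, and $\deg h \le d < |S|$. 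By the base case there is $b_r \in S$ with $h(b_r) \neq 0$, so $\boldsymbol{b} := (\boldsymbol{b}', b_r) \in S^r$ satisfies $f(\boldsymbol{b}) \neq 0$, completing the induction.

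I do not expect any real obstacle here; the only point needing care is the degree bookkeeping, namely checking that the coefficient polynomial $g_{i_0}$ on which we invoke the induction hypothesis still has degree strictly below $|S|$ (which holds since $\deg g_{i_0} \le d - i_0 \le d$) and that the univariate restriction $h$ likewise has degree $< |S|$. Both are immediate from $\deg f = d < |S|$.
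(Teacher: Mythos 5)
Your induction is correct and is precisely the classical argument behind the cited result: the paper itself offers no proof, merely the reference to \cite[Corollary 1]{bib:Sch80}, and your degree bookkeeping (choosing the top nonzero coefficient $g_{i_0}$ so that $\deg g_{i_0} \le \deg f < |S|$ and the restricted univariate polynomial stays nonzero) is exactly right. Nothing further is needed.
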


%
%
%

\end{document}